\providecommand{\tabularnewline}{\\}
\providecommand{\algorithmname}{Algorithm}
\theoremstyle{plain}
\newtheorem{thm}{\protect\theoremname}[section]
  \theoremstyle{definition}
  \newtheorem{defn}[thm]{\protect\definitionname}
  \theoremstyle{definition}
  \newtheorem{example}[thm]{\protect\examplename}
  \theoremstyle{remark}
  \newtheorem{rem}[thm]{\protect\remarkname}
  \theoremstyle{plain}
  \newtheorem{prop}[thm]{\protect\propositionname}
  \theoremstyle{plain}
  \newtheorem{assumption}[thm]{\protect\assumptionname}
  \theoremstyle{remark}
  \newtheorem*{acknowledgement*}{\protect\acknowledgementname}
\pgfplotsset{compat=1.6}
\pgfplotsset{soldot/.style={only marks,mark=*}}
\pgfplotsset{holdot/.style={fill=white,only marks,mark=*}}
\pgfplotsset{tick label style={color=white},
label style={font=\small},
legend style={font=\small}}
  \providecommand{\acknowledgementname}{Acknowledgement}
  \providecommand{\assumptionname}{Assumption}
  \providecommand{\definitionname}{Definition}
  \providecommand{\examplename}{Example}
  \providecommand{\propositionname}{Proposition}
  \providecommand{\remarkname}{Remark}
\providecommand{\theoremname}{Theorem}
  \providecommand{\acknowledgementname}{Acknowledgement}
  \providecommand{\assumptionname}{Assumption}
  \providecommand{\definitionname}{Definition}
  \providecommand{\examplename}{Example}
  \providecommand{\propositionname}{Proposition}
  \providecommand{\remarkname}{Remark}
\providecommand{\theoremname}{Theorem}
\begin{document}
\title{Preference elicitation and robust optimization with multi-attribute quasi-concave
choice functions}

\author{William B. Haskell \footnote{William B. Haskell (isehwb@nus.edu.sg) is an assistant professor in Department of Industrial Systems Engineering and Management at National University of Sinagpore},\quad Wenjie Huang \footnote{Wenjie Huang (wenjie\_huang@u.nus.edu) is a Ph.D. Candidate in Department of Industrial Systems Engineering and Management at National University of Sinagpore.} \quad and Huifu Xu \footnote{Huifu Xu (h.xu@soton.ac.uk) is a professor in School of Mathematical Sciences at University of Southampton.}}

\maketitle

\begin{abstract}
Decision maker's preferences are often captured by some choice functions
which are used to rank prospects. In this paper, we consider ambiguity
in choice functions over a multi-attribute prospect space. Our main
result is a robust preference model where the optimal decision is
based on the worst-case choice function from an ambiguity set constructed
through preference elicitation with pairwise comparisons of prospects.
Differing from existing works in the area, our focus is on quasi-concave
choice functions rather than concave functions and this enables us
to cover a wide range of utility/risk preference problems including
multi-attribute expected utility and $S$-shaped aspirational risk
preferences. The robust choice function is increasing and quasi-concave
but not necessarily translation invariant, a key property of monetary
risk measures. We propose two approaches based respectively on the
support functions and level functions of quasi-concave functions to
develop tractable formulations of the maximin preference robust optimization
model. The former gives rise to a mixed integer linear programming
problem whereas the latter is equivalent to solving a sequence of
convex risk minimization problems. To assess the effectiveness of
the proposed robust preference optimization model and numerical schemes,
we apply them to a security budget allocation problem and report some
preliminary results from experiments.\\
 \\
 \emph{Keywords:} preference elicitation; quasi-concave choice functions;
multi-attribute decision-making; preference robust optimization 
\end{abstract}

\section{Introduction }

Decision making under uncertainty is a universal theme in the stochastic
and robust optimization communities. Much of the focus has been on
exogenous uncertainties which go beyond the control of the decision
maker such as market demand and climate change. In practice, however,
there is often significant endogenous uncertainty which arises from
ambiguity about the decision maker's preferences; either because the
decision making involves several stakeholders who are unable to reach
a consensus, or there is inadequate information for the decision maker
to identify a unique utility/risk function which precisely characterizes
his preferences. Preference robust optimization (PRO) models are subsequently
proposed where the optimal decision is based on the worst preference
from an ambiguity set of utility/risk preference functions constructed
through available partial information. The robustness is aimed to
mitigate the risk arising from ambiguity about the decision maker's
preferences. 

Armbruster and Delage \cite{armbruster2015decision} study a PRO model
for utility maximization problems. Specifically, they model decision
maker's ambiguity about utility preferences by incorporating various
possible properties of the utility function such as monotonicity,
concavity, and S-shapedness, along with preference elicitation information
from pairwise comparisons. An important component their research is
to derive tractable formulations of the resulting maximin problem
and they manage to do so by exploiting linear envelopes of convex/concave
functions. \textcolor{black}{Delage and Li} \cite{delage2017minimizing}
extend this research to risk management problem in finance where the
investor's choice of a risk measure is ambiguous. As in \cite{armbruster2015decision},
they consider the ambiguity set of risk measures primarily via pairwise
elicitation but also featured with important properties such as convexity,
coherence, and law invariance, and they develop tractable formulations
accordingly. 

Hu and Mehrotra \cite{hu2015robust} approach the PRO model in a different
manner. First, they propose a moment-type approach which allows one
to define the ambiguity set for a decision maker's utility preferences
via the certainty equivalent method, pairwise comparisons, upper and
lower bounds of the trajectories of the utility functions, and bounds
on their derivatives at specified grid points; second, they consider
a probabilistic representation of the class of increasing convex utility
functions by confining them to a compact interval and scaling them
to being bounded by one; third, by constructing a piecewise linear
approximation of the trajectories of the utility bounds, they derive
a tractable reformulation of the resulting PRO as a linear programming
problem. Qualitative convergence analysis is presented to justify
the piecewise linear approximation.

Hu and Mehrotra's approach is closely related to stochastic dominance,
a subject which has been intensely studied over the past few decades,
see the monographs \cite{Muller2002,Shaked2007} for a comprehensive
treatment of the topic and \cite{Dentcheva2003,Dentcheva2004} for
the optimization models with stochastic dominance constraints. Indeed,
when the preference of a decision maker satisfies certain axioms including
completeness, transitivity, continuity and independence, Von Neumann
and Morgenstern's expected utility theory\textcolor{black}{{} (\cite{von1945theory})}
guarantees that any set of preferences that the decision maker may
have among uncertain/risky prospects can be characterized by an expected
utility measure. The issue that we are looking at here is that the
decision maker does not necessarily have complete information about
his preferences, and prospects associated with decisions do not necessarily
have definitive stochastic dominance relationships - which means that
the existing models based on stochastic dominance are not applicable.

In a more recent development, Haskell et al. \cite{Haskell_Robust}
consider a stochastic optimization problem where the decision maker
faces both exogenous uncertainty and endogenous uncertainty associated
with decision maker's risk attitude. They propose a PRO model where
the ambiguity is constructed in the product space of exogenous uncertainty
and utility functions. By using Lagrangian duality, they derive an
exact tractable reformulation for some special cases and a tractable
relaxation in the general setting. Delage et al. \cite{delage2017shortfall}
propose a robust shortfall risk measure model to tackle the case where
investors are ambiguous about their utility loss functions. They study
viable ways to identify the ambiguity set of loss functions via pairwise
comparison for utility risk measures with respective features such
as coherence, convexity, and boundedness, and they derive tractable
linear program reformulation for the resulting optimization problems
with robust shortfall risk constraints.

In this paper, we take on this stream of research but with a different
focus in terms of both modeling and tractable reformulations. Specifically,
we consider a class of so-called choice functions applicable to multi-attribute
decision making which are monotonically increasing along some specified
direction, quasi-concave, but not necessarily convex/concave or translation
invariant. We tackle the subsequent PRO model via ``hockey-stick''
type support functions and level functions. Monotonicity ensures that
the decision maker universally prefers more to less. Quasi-concavity
is a general form of risk aversion, the axiom of quasi-concavity is
further supported by the study in \cite{brown2012aspirational} on
aspirational preferences. Moreover, by dropping translation invariance)
which is an key axiomatic property of convex risk measures), we allow
our PRO model to cover a wider range of problems where translation
invariance may not hold \cite{brown2012aspirational}.

Another important departure from existing research is that our PRO
model is applicable to multi-attribute decision making problems. These
problems are ubiquitous in practical applications but the existing
PRO models mainly emphasize the single attribute setting. For instance,
in management research of healthcare, it is typical to use several
metrics rather than just one to measure the quality of life \cite{torrance1982application,feeny2002multiattribute,smith2005your}.
Similar problems can be found in network management problems \cite{azaron2008multi,chen2010stochastic},
scheduling, \cite{liefooghe2007combinatorial,Zakariazadeh2014}, design
\textcolor{black}{\cite{tseng1990minimax,Dino2017}} and portfolio
optimization \cite{fliege2014robust}. Indeed, over the past few decade,
there have been significant research on multi-attribute expected utility
\cite{von1988decomposition,fishburn1992multiattribute,miyamoto1996multiattribute,tsetlin2006equivalent,tsetlin2007decision,Tsetlin_Multiattribute_2009}
and multi-attribute risk management \cite{jouini2004vector,burgert2006consistent,hamel2010duality,Galichon2012}.
In a more recent development, research on robust multi-attribute choice
models has also emerged, see \cite{lam2013multiple,ehrgott2014minmax,noyan2016optimization}
for instance. In particular, \cite{noyan2016optimization} considers
optimization with a general class of scalarization functions (in particular,
the class of min-biaffine functions), where the weights of the scalarization
lie in a convex ambiguity set.

We summarize the main contributions of our present paper as follows.
\begin{itemize}
\item \textbf{Multi-attribute quasi-concave PRO model.} We propose a robust
choice model for preference ambiguity where the underlying choice
function is monotonic and quasi-concave. By replacing concavity with
quasi-concavity and dropping translation invariance, we extend the
existing PRO model so that it is easier to incorporate preference
elicitation information. Moreover, the new model framework covers
a number of well-known preference models (such as expected utility
and aspirational preferences) and can be applied to a wider range
of PRO problems where the decision maker's choice function is merely
increasing and quasi-concave. Of course, it also poses new challenges
to tractable reformulation which so far have depended on convexity
\cite{armbruster2015decision,Haskell_Aspects_2015,delage2017minimizing}.
Our model's support for multi-objective stochastic decision making
problems makes the model applicable to an even broader class of problems.
\item \textbf{New forms of tractable formulations.} We propose two schemes
for tractable reformulation of the proposed new PRO model. One is
based on the support functions of quasi-concave functions and the
other exploits approximation of quasi-concave function by a sequence
of convex level functions. While both approaches are well known in
the literature of generalized convex optimization, they are applied
to PRO here for the first time. The support function approach takes
the PRO model to a mixed-integer linear program as opposed to a linear
program as in earlier work. The level set representation is a generalization
of the representation results in \cite{brown2009satisficing,brown2012aspirational}
to the case of preference ambiguity. We are able to explicitly derive
the connection between these two approaches by using the special form
of piecewise linear support functions for quasi-concave functions.
This framework, based on representing any monotonic diversification
favoring choice function in terms of a family of risk functions, leads
to a unifying framework for representing multi-attribute choice functions.
Moreover, this framework naturally converts a decision-making problem
with a quasi-concave choice function into a sequence of convex optimization
problems, yielding a viable computational recipe. This development
is related to the methods in \cite{brown2009satisficing,brown2012aspirational}
and extends these methods to the multi-attribute setting. The level
set representation further builds on the managerial insights from
\cite{brown2009satisficing,brown2012aspirational}. It reveals that
multi-attribute choice functions can in general be understood in terms
of a family of multi-attribute risk functions and the decision maker's
desired satiation levels.
\item \textbf{Level function method.} In the case where the attributes depend
nonlinearly on the decision variables, we propose an algorithm for
solving the PRO by using the level function method from \cite{Xu2001}.
The algorithm is an iterative regime where at each iterate we solve
a mixed-integer linear program based on the support function representation
and identification of a level function, which is closely related to
the level set representation. To examine the performance of the model
and numerical schemes, we apply them to a homeland security problem
considered by Hu and Mehrotra \cite{Hu2011}. Our numerical experiments
for this problem illustrate how our PRO model captures diversification
favoring behavior, and also how it depends on the elicited comparison
data set.
\end{itemize}
The rest of the paper is organized as follows. In Section 2, we review
some preliminary materials related to choice functions. In Section
3, we formally describe the robust choice function model including
the definition of the class of choice functions to be considered,
specification of the ambiguity set, and characterization of the robust
choice function and the corresponding maximin optimization problem.
Section 4 details the support function approach for tractable reformulation
of the robust choice model and its underlying theory. Next in Section
5, we discuss an alternative level set representation for quasi-concave
choice functions. This development leads to a general representation
formula for quasi-concave choice functions. Here, we also discuss
the role of ``targets'' which feature prominently in the decision
analysis literature. Section 6 builds on our PRO model and explains
how to solve optimization problems in the presence of preference ambiguity,
and Section 7 applies this methodology to a budget allocation problem
for homeland security. The paper concludes in Section 8 with a discussion
of potential impacts and future research directions.

\section{Preliminaries }

This section presents some preliminary materials on choice functions.
We begin with a set of states of nature $\Omega$ endowed with a $\sigma-$algebra
$\mathcal{B}$. Let $\mathcal{L}$ be an admissible space of measurable
mappings $X\mbox{ : }\Omega\rightarrow\mathbb{R}^{n}$, equipped with
the supremum norm topology. We generally treat $\mathcal{L}$ as a
space of multi-attribute prospects with $n\geq2$ characteristics,
although the case $n=1$ is also covered. The inequality $X\leq Y$
for $X,\,Y\in\mathcal{L}$ is understood to mean $X\left(\omega\right)\leq Y\left(\omega\right)$
component-wise for all $\omega\in\Omega$. We adopt the convention
that prospects in $\mathcal{L}$ represent rewards/gains so that larger
values of all attributes are preferred to smaller values. 

Let $\bar{\mathbb{R}}\triangleq\mathbb{R}\cup\left\{ -\infty,\,\infty\right\} $
be the extended-valued real line. A choice function is a mapping $\rho\mbox{ : }\mathcal{L}\rightarrow\bar{\mathbb{R}}$
that gives numerical values to prospects in $\mathcal{L}$ to evaluate
their fitness (e.g. see \cite{brown2009satisficing}). When $\rho\left(X\right)\geq\rho\left(Y\right)$,
$X$ is said to be weakly preferred to $Y$ and when $\rho\left(X\right)>\rho\left(Y\right)$,
$X$ is said to be strongly preferred to $Y$. The following definition
specifies the key properties of choice functions that appear frequently
in the literature (see \cite{Ruszczynski2006a,brown2009satisficing}
for example).
\begin{defn}[Properties of the choice function]
\label{def:choice} 
(i) (Upper semi-continuity) For all $X\in\mathcal{L}$,
$\limsup_{Y\rightarrow X}\rho\left(Y\right)=\rho\left(X\right)$.

(ii) (Monotonicity) For all $X,\,Y\in\mathcal{L}$, $X\leq Y$ implies
$\rho\left(X\right)\leq\rho\left(Y\right)$.

(iii) (Quasi-concavity) For all $X,\,Y\in\mathcal{L}$, $\rho\left(\lambda\,X+\left(1-\lambda\right)Y\right)\geq\min\left\{ \rho\left(X\right),\,\rho\left(Y\right)\right\} $
for all $\lambda\in\left[0,\,1\right]$.

(iv) (Completeness) For all $X,\,Y\in\mathcal{L}$, either $\rho(X)\geq\rho(Y)$
or $\rho(X)\leq\rho(Y)$.
\end{defn}

Upper semi-continuity is a common technical condition (see \cite{Balder1993,brown2012aspirational,Voorneveld2016}).
Monotonicity means that the decision maker always prefers more reward
to less - this property is universally accepted. Quasi-concavity means
that diversification does not decrease reward, where the convex combination
$\lambda\,X+\left(1-\lambda\right)Y$ is understood as a mixture of
the prospects $X$ and $Y$, see \cite{brown2012aspirational,brown2009satisficing}.
Properties (i) - (iii) appear in much of the decision theory literature
(e.g. \cite{brown2012aspirational}). Since the choice function is
real-valued, it is automatically \textit{transitive}: for all $X,\,Y,\,Z\in\mathcal{L},$
if $\rho(X)\geq\rho(Y)$ and $\rho(Y)\geq\rho(Z)$, then $\rho(X)\geq\rho(Z)$.
Property (iv) and transitivity ensure that the choice functions we
consider are ``rational''.

We now give some examples of multi-attribute choice functions to motivate
our discussion and to point out related work on multi-attribute prospects.
Here and throughout, the Euclidean inner product is denoted by $\langle x,\,y\rangle=\sum_{i=1}^{n}x_{i}y_{i}$. 
\begin{example}
\label{exa:Preliminaries} Let $u_{i}$ for $i=1,\ldots,\,n$ be univariate
utility functions.

(i) The assumption of mutual utility independence (see \cite{abbas2015multiattribute}
for example) gives rise to additive utility functions $u\left(x_{1},\ldots,\,x_{n}\right)=\sum_{i=1}^{n}\kappa_{i}u_{i}\left(x_{i}\right)$
for $\kappa_{1},\ldots,\,\kappa_{n}\geq0$. The expected utility of
a random vector $X=\left(X_{1},\ldots,\,X_{n}\right)$ is then $\rho\left(X\right)=\sum_{i=1}^{n}\kappa_{i}\mathbb{E}\left[u_{i}\left(X_{i}\right)\right]$.

(ii) In \cite{keeney1974multiplicative}, an alternative independent
utility aggregation model is proposed with
\[
u\left(x_{1},\ldots,\,x_{n}\right)=\frac{1}{K^{\prime}}\left\{ \left[\prod_{i=1}^{n}\left(K^{\prime}\kappa_{i}u_{i}\left(x_{i}\right)+1\right)\right]-1\right\} ,
\]
where $\sum_{i=1}^{n}\kappa_{i}\neq1,\,K^{\prime}>-1$ and $K^{\prime}+1=\prod_{i=1}^{n}\left(\kappa_{i}K^{\prime}+1\right).$
The expected utility is then 
\[
\rho\left(X\right)=\frac{1}{K^{\prime}}\left\{ \left[\prod_{i=1}^{n}\left(K^{\prime}\kappa_{i}\mathbb{E}\left[u_{i}\left(X_{i}\right)\right]+1\right)\right]-1\right\} .
\]

(iii) For a general utility function $u\mbox{ : }\mathbb{R}^{n}\rightarrow\mathbb{R}$,
the expected utility $\rho\left(X\right)=\mathbb{E}\left[u\left(X\right)\right]$
is a choice function.

(iv) Let $C\mbox{ : }\left[0,\,1\right]^{n}\rightarrow\left[0,\,1\right]$
be a multivariate copula (i.e. $C$ is the joint cumulative distribution
function of an $n-$dimensional random vector on the unit cube $\left[0,\,1\right]^{n}$
with uniformly distributed marginals), then $\rho\left(X\right)=C\left(u_{1}\left(X_{1}\right),\ldots,\,u_{n}\left(X_{n}\right)\right)$
is a choice function (see \cite{abbas2009multiattribute}).

(v) The conditional value-at-risk (CVaR) of a univariate random variable
$X$ at level $\alpha\in\left(0,\,1\right)$ is 
\[
\mbox{CVaR}_{\alpha}\left(X\right)\triangleq\inf_{\eta\in\mathbb{R}}\left\{ \eta+\left(1-\alpha\right)^{-1}\mathbb{E}\left[\left(X-\eta\right)_{+}\right]\right\} .
\]
A multivariate version of the conditional value-at-risk (CVaR) is
developed in \cite{noyan2013optimization} based on linear scalarization.
Given a vector of weights $w\in\mathbb{R}_{+}^{n}$, we may consider
the choice function $\rho\left(X\right)=\mbox{CVaR}_{\alpha}\left(\langle w,\,X\rangle\right)$
on $\mathcal{L}$.

(vi) More generally, as in \cite{noyan2016optimization}, we may take
any univariate risk measure $\vartheta$ (such as a mean-deviation
risk measure, see \cite{Ruszczynski2006a}) and then consider the
choice function $\rho\left(X\right)=\vartheta\left(\varphi\left(w,\,X\right)\right)$
where $\varphi\text{ : }\mathbb{R}^{n}\times\mathbb{R}^{n}\rightarrow\mathbb{R}$
is a scalarization function. In \cite{noyan2016optimization}, the
authors focus on the computationally tractable class of ``min-biaffine''
scalarization functions $\varphi$. 
\end{example}

\section{Robust preferences model}

Now we come to the main problem under consideration in this paper.
To begin, we introduce the set of all upper semi-continuous, monotonic,
and quasi-concave choice functions 
\[
\mathcal{R}_{iqv}\triangleq\left\{ \text{upper semi-continuous, increasing, and quasi-concave }\rho\text{ : }\mathcal{L}\rightarrow\mathbb{R}\right\} .
\]
This set characterizes our decision makers of interest. Since concave
functions are quasi-concave, this class of functions consists of all
continuous increasing concave utility functions in the literature.

In practice it is difficult to elicit a precise functional form for
$\rho$. This difficulty is exacerbated in the multi-attribute setting.
First, when multi-attribute prospects are in play, it is not obvious
how to characterize the marginal dependencies of the variety of attributes,
i.e., it is not always clear how much an increase in the value of
one asset should depend on the levels of the other assets. Second,
it is hard to specify a choice function in group decision making where
the group must come to a consensus. Third, we may have only a few
observations of the decision maker's behavior which makes it impossible
to precisely specify preferences.

To circumvent these difficulties, we design a robust choice function.
To this end, we will first need to construct a preference ambiguity
set $\mathcal{R}\subset\mathcal{R}_{iqv}$ which contains a range
of possible choice functions. Then, given this preference ambiguity
set $\mathcal{R}$, we will set up a framework which chooses a ``robust
choice function'' as:
\begin{equation}
\psi\left(X;\,\mathcal{R}\right)\triangleq\inf_{\rho\in\mathcal{R}}\rho\left(X\right),\,\forall X\in\mathcal{L}.\label{eq:Robust}
\end{equation}
In the upcoming definition, we generalize formulation (\ref{eq:Robust})
by taking a ``benchmark'' prospect $Y\in\mathcal{L}$. Benchmark
prospects have a long history in the field of optimization with stochastic
constraints, see for instance \cite{Dentcheva2004,noyan2013optimization,armbruster2015decision,noyan2016optimization}.
\begin{defn}[Robust choice function]
\label{def:Elicitation} Let $\mathcal{R}\subset\mathcal{R}_{iqv}$
and $Y\in\mathcal{L}$ be given, then $\psi\left(\cdot;\,\mathcal{R},\,Y\right)\text{ : }\mathcal{L}\rightarrow\mathbb{R}$
defined via 
\begin{equation}
\psi\left(X;\,\mathcal{R},\,Y\right)\triangleq\inf_{\rho\in\mathcal{R}}\left\{ \rho\left(X\right)-\rho\left(Y\right)\right\} ,\,\forall X\in\mathcal{L},\label{eq:Robust-1}
\end{equation}
is the robust choice function corresponding to $\mathcal{R}$ and
$Y$. 
\end{defn}

When the benchmark $Y$ is a constant act and all $\rho\in\mathcal{R}$
are normalized to have the same value at $Y$, we recover formulation
(\ref{eq:Robust}) from (\ref{eq:Robust-1}). The robust formulation
is based on the minimal excess value of $X$ over the value of the
benchmark prospect $Y$ for the whole set of choice functions specified
in $\mathcal{R}$. This kind of conservatism is to be used in decision
making for countering risks arising from ambiguity about the true
preferences. It is also very much in line with the philosophy of robust
optimization. In both \cite{armbruster2015decision} and \cite{delage2017minimizing},
the ``worst-case utility function'' and ``worst-case risk measure''
are considered in the same manner. This kind of framework is particularly
relevant in the context of group decision making whereby the least
favorable utility function from a member of the group is to be used
for the holistic decision making process.
\begin{rem}
In our upcoming development, one may omit the benchmark $Y$ and just
consider the function $\inf_{\rho\in\mathcal{R}}\rho\left(X\right)$
with only minor modification. We include the benchmark to stay consistent
with \cite{armbruster2015decision} and the wider literature on stochastic
dominance constrained optimization. The presence or absence of the
benchmark does not materially affect our main development. 
\end{rem}

The following proposition shows that the robust choice function $\psi\left(\cdot;\,\mathcal{R},\,Y\right)$
itself belongs to $\mathcal{R}_{iqv}$ whenever $\mathcal{R}\subset\mathcal{R}_{iqv}$.
\begin{prop}
\label{prop:robust} For any $\mathcal{R}\subset\mathcal{R}_{iqv}$
and $Y\in\mathcal{L}$, $\psi\left(\cdot;\,\mathcal{R},\,Y\right)$
is upper semi-continuous, increasing, and quasi-concave. 
\end{prop}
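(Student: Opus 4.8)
The plan is to prove Proposition~\ref{prop:robust} via the standard fact that upper semi-continuity, monotonicity, and quasi-concavity are each preserved under taking the pointwise infimum of an arbitrary family of functions having that property. Since replacing $\rho$ by $\rho-\rho\left(Y\right)$ alters none of the three properties, I would regard $\psi\left(\cdot;\,\mathcal{R},\,Y\right)=\inf_{\rho\in\mathcal{R}}\left\{\rho\left(\cdot\right)-\rho\left(Y\right)\right\}$ as exactly such an infimum and dispatch the three properties one at a time, using repeatedly that $\psi\left(Z;\,\mathcal{R},\,Y\right)\le\rho\left(Z\right)-\rho\left(Y\right)$ for every $\rho\in\mathcal{R}$ and $Z\in\mathcal{L}$, and that $\inf_{\rho}f_{\rho}\le\inf_{\rho}g_{\rho}$ whenever $f_{\rho}\le g_{\rho}$ for all $\rho$.

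For monotonicity, fix $X_{1}\le X_{2}$ in $\mathcal{L}$: monotonicity of each $\rho$ gives $\rho\left(X_{1}\right)-\rho\left(Y\right)\le\rho\left(X_{2}\right)-\rho\left(Y\right)$, and passing to the infimum over $\rho\in\mathcal{R}$ yields $\psi\left(X_{1};\,\mathcal{R},\,Y\right)\le\psi\left(X_{2};\,\mathcal{R},\,Y\right)$. For quasi-concavity, fix $X_{1},\,X_{2}\in\mathcal{L}$ and $\lambda\in\left[0,\,1\right]$ and, for each $\rho\in\mathcal{R}$, chain
\[
\rho\left(\lambda X_{1}+\left(1-\lambda\right)X_{2}\right)-\rho\left(Y\right)\ge\min\left\{\rho\left(X_{1}\right),\,\rho\left(X_{2}\right)\right\}-\rho\left(Y\right)\ge\min\left\{\psi\left(X_{1};\,\mathcal{R},\,Y\right),\,\psi\left(X_{2};\,\mathcal{R},\,Y\right)\right\},
\]
invoking quasi-concavity of $\rho$ for the first inequality and $\rho\left(X_{i}\right)-\rho\left(Y\right)\ge\psi\left(X_{i};\,\mathcal{R},\,Y\right)$ for the second; taking the infimum over $\rho$ on the left gives quasi-concavity of $\psi\left(\cdot;\,\mathcal{R},\,Y\right)$. (Equivalently, one notes that $\left\{X:\psi\left(X;\,\mathcal{R},\,Y\right)\ge c\right\}=\bigcap_{\rho\in\mathcal{R}}\left\{X:\rho\left(X\right)-\rho\left(Y\right)\ge c\right\}$ is an intersection of convex superlevel sets.)

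For upper semi-continuity, fix $X\in\mathcal{L}$: for every $\rho\in\mathcal{R}$ one has $\limsup_{Z\to X}\psi\left(Z;\,\mathcal{R},\,Y\right)\le\limsup_{Z\to X}\left\{\rho\left(Z\right)-\rho\left(Y\right)\right\}=\rho\left(X\right)-\rho\left(Y\right)$ by upper semi-continuity of $\rho$, and taking the infimum over $\rho$ gives $\limsup_{Z\to X}\psi\left(Z;\,\mathcal{R},\,Y\right)\le\psi\left(X;\,\mathcal{R},\,Y\right)$; the reverse inequality is immediate, so equality holds as in Definition~\ref{def:choice}(i). (Alternatively, the superlevel-set identity above exhibits each set $\left\{X:\psi\left(X;\,\mathcal{R},\,Y\right)\ge c\right\}$ as an intersection of closed sets.)

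I do not anticipate a genuine obstacle here: the proposition is really a consistency check that robust aggregation keeps us inside $\mathcal{R}_{iqv}$. The only points needing mild care are keeping the $\limsup$ step for upper semi-continuity consistent with the convention of Definition~\ref{def:choice}(i), and observing that every step above remains valid if $\psi\left(\cdot;\,\mathcal{R},\,Y\right)$ equals $-\infty$ somewhere on $\mathcal{L}$ (under the usual extended-real conventions), with the statement being vacuous when $\mathcal{R}=\emptyset$.
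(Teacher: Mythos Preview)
Your proposal is correct and follows essentially the same route as the paper's own proof: each of the three properties is checked by using the corresponding property of every $\rho\in\mathcal{R}$ and then passing to the infimum. The only cosmetic differences are that the paper handles upper semi-continuity with a one-line appeal to preservation under pointwise infima (whereas you spell out the $\limsup$ argument), and in the quasi-concavity step the paper explicitly interchanges $\inf_{\rho}$ and $\min$ while you instead bound each $\rho\left(X_{i}\right)-\rho\left(Y\right)$ below by $\psi\left(X_{i};\,\mathcal{R},\,Y\right)$ before taking the infimum; both manipulations are equivalent.
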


\begin{proof}
\textit{Upper semi-continuity:} Upper semi-continuity of a set of
functions is preserved by taking the point-wise infimum of the collection.

\textit{Monotonicity:} Monotonicity follows by 
\[
\psi\left(X;\,\mathcal{R},\,Y\right)=\inf_{\rho\in\mathcal{R}}\left\{ \rho\left(X\right)-\rho\left(Y\right)\right\} \leq\inf_{\rho\in\mathcal{R}}\left\{ \rho\left(Z\right)-\rho\left(Y\right)\right\} =\psi\left(Z;\,\mathcal{R},\,Y\right),
\]
since $\rho\left(X\right)\leq\rho\left(Z\right)$ for all $\rho\in\mathcal{R}$
whenever $X\leq Z$.

\textit{Quasi-concavity:} Quasi-concavity follows since 
\begin{align*}
\psi\left(\lambda\,X+\left(1-\lambda\right)Z;\,\mathcal{R},\,Y\right)=\, & \inf_{\rho\in\mathcal{R}}\left\{ \rho\left(\lambda\,X+\left(1-\lambda\right)Z\right)-\rho\left(Y\right)\right\} \\
\geq\, & \inf_{\rho\in\mathcal{R}}\left\{ \min\left\{ \rho\left(X\right),\,\rho\left(Z\right)\right\} -\rho\left(Y\right)\right\} \\
=\, & \inf_{\rho\in\mathcal{R}}\min\left\{ \rho\left(X\right)-\rho\left(Y\right),\,\rho\left(Z\right)-\rho\left(Y\right)\right\} \\
=\, & \min\left\{ \inf_{\rho\in\mathcal{R}}\left\{ \rho\left(X\right)-\rho\left(Y\right)\right\} ,\,\inf_{\rho\in\mathcal{R}}\left\{ \rho\left(Z\right)-\rho\left(Y\right)\right\} \right\} \\
=\, & \min\left\{ \psi\left(X;\,\mathcal{R},\,Y\right),\,\psi\left(Z;\,\mathcal{R},\,Y\right)\right\} ,
\end{align*}
for any $X,\,Z\in\mathcal{L}$ with $\lambda\in\left[0,\,1\right]$,
where the inequality follows by quasi-concavity of all $\rho\in\mathcal{R}$,
and the third equality follows by interchanging the order of minimization. 
\end{proof}
We now turn to discuss specification of the ambiguity set $\mathcal{R}$.
This set will have the following characteristics: 
\begin{itemize}
\item \textit{Preference elicitation:} For a sequence of pairs of prospects
$\{(W_{i},\,Y_{i})\}_{i\in\mathcal{I}}$, where $\mathcal{I}$ is
a finite index set, the decision maker prefers $W_{i}$ to $Y_{i}$
for all $i\in\mathcal{I}$. In this case, all admissible choice functions
in $\mathcal{R}_{iqv}$ that are consistent with the decision maker's
observed behaviors must satisfy $\rho\left(W_{i}\right)\geq\rho\left(Y_{i}\right)$
for all $i\in\mathcal{I}$. This form of preference elicitation also
appears in \cite{armbruster2015decision,delage2017minimizing}.
\item \textit{Normalization:} the decision maker's choice function satisfies
$\rho\left(0\right)=0$.
\item \textit{Lipschitz continuity:} the decision maker's choice function
$\rho$ is Lipschitz continuous. \textcolor{black}{Lipschitz continuity
ensures that the choice function does not vary too rapidly. Additionally,
this technical condition is necessary to apply a key representation
result for quasi-concave functions that appears in the next section.
Since $\psi\left(X;\,\alpha\,\mathcal{R},\,Y\right)=\alpha\,\psi\left(X;\,\mathcal{R},\,Y\right)$
for all $\alpha\geq0$ and any $\mathcal{R}\subset\mathcal{R}_{iqv}$,
we may specify the Lipschitz constant $L$ of $\rho$ arbitrarily.}
\end{itemize}
Our resulting specific ambiguity set is then 
\[
\mathcal{S}\triangleq\left\{ \rho\in\mathcal{R}_{iqv}\mbox{ : }\rho\left(W_{i}\right)\geq\rho\left(Y_{i}\right),\,\forall i\in\mathcal{I};\,\rho\left(0\right)=0;\,\rho\text{ is \ensuremath{L-}Lipschitz continuous}\right\} .
\]
We impose the $L-$Lipschitz condition since otherwise the set $\mathcal{S}$
is a cone, in which case $\psi\left(X;\,\mathcal{S},\,Y\right)$ may
not be finite-valued. In the next section we will develop a computational
recipe for evaluating $\psi\left(X;\,\mathcal{S},\,Y\right)$.

We conclude this section by introducing our robust choice model. Let
$\mathcal{Z}\subset\mathbb{R}^{m}$ be a set of available decisions
and let $G\mbox{ : }\mathcal{Z}\rightarrow\mathcal{L}$ be a random-variable-valued
mapping with realizations denoted $\left[G\left(z\right)\right]\left(\omega\right)$
for all $\omega\in\Omega$. The mapping $G$ captures the randomness
inherent in the underlying decision-making problem. In general, we
are interested in solving 
\begin{equation}
\max_{z\in\mathcal{Z}}\psi\left(G\left(z\right);\,\mathcal{S},\,Y\right)\triangleq\max_{z\in\mathcal{Z}}\inf_{\rho\in\mathcal{R}}\left\{ \rho\left(g(Z)\right)-\rho\left(Y\right)\right\} .\label{OPTIMIZATION}
\end{equation}
We make the following two key convexity assumptions on the problem
data $\mathcal{Z}$ and $G$. 
\begin{assumption}
\label{assu:convexity} (i) $\mathcal{Z}$ is closed and convex.

(ii) $G\mbox{ : }\mathcal{Z}\rightarrow\mathcal{L}$ is concave in
the sense that $\left[G\left(z\right)\right]\left(\omega\right)\mbox{ : }\mathcal{Z}\rightarrow\mathbb{R}^{n}$
is concave in $z\in\mathcal{Z}$ for $P-$almost all $\omega\in\Omega$. 
\end{assumption}

Assumption \ref{assu:convexity} ensures that our upcoming optimization
problems are convex, and the next proposition reveals the subsequent
key structural results of Problem (\ref{OPTIMIZATION}).
\begin{prop}
\label{prop:optimization} Problem (\ref{OPTIMIZATION}) is a quasi-concave
maximization problem. 
\end{prop}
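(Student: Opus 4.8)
The plan is to reduce the claim to the two facts already established: that $\mathcal{Z}$ is closed and convex (Assumption \ref{assu:convexity}(i)), and that the robust choice function $\psi\left(\cdot;\,\mathcal{S},\,Y\right)$ is increasing and quasi-concave (Proposition \ref{prop:robust}). Since the feasible region $\mathcal{Z}$ is already convex, by the definition of a quasi-concave maximization problem it suffices to show that the objective $h\left(z\right)\triangleq\psi\left(G\left(z\right);\,\mathcal{S},\,Y\right)$ is a quasi-concave function of $z$ on $\mathcal{Z}$.

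To prove quasi-concavity of $h$, I would fix $z_{1},\,z_{2}\in\mathcal{Z}$ and $\lambda\in\left[0,\,1\right]$ and set $z_{\lambda}\triangleq\lambda\,z_{1}+\left(1-\lambda\right)z_{2}$, which lies in $\mathcal{Z}$ by convexity. By Assumption \ref{assu:convexity}(ii), the map $z\mapsto\left[G\left(z\right)\right]\left(\omega\right)$ is concave for $P$-almost all $\omega$, so $\left[G\left(z_{\lambda}\right)\right]\left(\omega\right)\geq\lambda\left[G\left(z_{1}\right)\right]\left(\omega\right)+\left(1-\lambda\right)\left[G\left(z_{2}\right)\right]\left(\omega\right)$ for $P$-almost all $\omega$; reading the order on $\mathcal{L}$ in the almost-sure sense, this says $G\left(z_{\lambda}\right)\geq\lambda\,G\left(z_{1}\right)+\left(1-\lambda\right)G\left(z_{2}\right)$ in $\mathcal{L}$. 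Applying monotonicity of $\psi\left(\cdot;\,\mathcal{S},\,Y\right)$ from Proposition \ref{prop:robust} gives $h\left(z_{\lambda}\right)\geq\psi\left(\lambda\,G\left(z_{1}\right)+\left(1-\lambda\right)G\left(z_{2}\right);\,\mathcal{S},\,Y\right)$, and then quasi-concavity of $\psi\left(\cdot;\,\mathcal{S},\,Y\right)$ (again Proposition \ref{prop:robust}) yields
\[
\psi\bigl(\lambda\,G\left(z_{1}\right)+\left(1-\lambda\right)G\left(z_{2}\right);\,\mathcal{S},\,Y\bigr)\geq\min\bigl\{\psi\left(G\left(z_{1}\right);\,\mathcal{S},\,Y\right),\,\psi\left(G\left(z_{2}\right);\,\mathcal{S},\,Y\right)\bigr\}=\min\left\{ h\left(z_{1}\right),\,h\left(z_{2}\right)\right\} .
\]
Chaining the two inequalities gives $h\left(z_{\lambda}\right)\geq\min\left\{ h\left(z_{1}\right),\,h\left(z_{2}\right)\right\} $, which is exactly quasi-concavity of $h$, and hence (\ref{OPTIMIZATION}) is a quasi-concave maximization problem.

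There is essentially no hard step here: the result is the standard observation that an increasing quasi-concave function composed with a concave map is quasi-concave, and all of the analytic content was already absorbed into Proposition \ref{prop:robust}. The only point that needs a word of care is the compatibility of the pointwise-in-$\omega$ partial order on $\mathcal{L}$ used in Section 2 with the ``for $P$-almost all $\omega$'' qualifier in Assumption \ref{assu:convexity}(ii); this is resolved by identifying prospects that coincide $P$-almost surely, or equivalently by noting that monotonicity of the choice functions in $\mathcal{R}_{iqv}$ passes to almost-sure inequalities, so that the monotonicity step in the chain above is legitimate.
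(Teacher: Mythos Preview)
Your proof is correct and follows essentially the same route as the paper: use concavity of $G$ to get $G(z_\lambda)\ge\lambda G(z_1)+(1-\lambda)G(z_2)$, then apply monotonicity and quasi-concavity. The only cosmetic difference is that the paper carries out the monotonicity/quasi-concavity step for each $\rho\in\mathcal{R}_{iqv}$ and then invokes ``infimum of quasi-concave functions is quasi-concave,'' whereas you apply these properties directly to $\psi(\cdot;\mathcal{S},Y)$ via Proposition~\ref{prop:robust}; your packaging is slightly cleaner but the argument is the same.
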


\begin{proof}
Let $z_{1},\,z_{2}\in\mathcal{Z}$ and $\lambda\in\left[0,\,1\right]$.
For any $\rho\in\mathcal{R}_{iqv}$ we have 
\begin{align*}
\rho\left(G\left(\lambda\,z_{1}+\left(1-\lambda\right)z_{2}\right)\right)\geq\, & \rho\left(\lambda\,G\left(z_{1}\right)+\left(1-\lambda\right)G\left(z_{2}\right)\right)\\
\geq\, & \min\left\{ \rho\left(G\left(z_{1}\right)\right),\,\rho\left(G\left(z_{2}\right)\right)\right\} ,
\end{align*}
where the first inequality follows by monotonicity of $\rho$ and
concavity of $G$, and the second inequality follows by quasi-concavity
of $\rho$. The conclusion then follows by the previous part, the
fact that the infimum of quasi-concave functions is quasi-concave,
and the fact that the feasible region of Problem (\ref{OPTIMIZATION})
is convex. 
\end{proof}
We will return to Problem (\ref{OPTIMIZATION}) later in Section 6
after we carefully examine the function $\psi\left(X;\,\mathcal{S},\,Y\right)$.

\section{Support function representation}

In this section we turn to the primary issue of numerical \textit{evaluation}
of the robust choice function $\psi\left(X;\,\mathcal{S},\,Y\right)$.
At first glance, it is evident that $\psi\left(X;\,\mathcal{S},\,Y\right)$
cannot be evaluated with convex optimization because quasi-concavity
is not preserved under convex combination (and $\psi\left(X;\,\mathcal{S},\,Y\right)$
calls a minimization problem over a subset of quasi-concave functions).
In contrast, the robust choice functions in \cite{armbruster2015decision}
and \cite{delage2017minimizing} are amenable to convex optimization
because concave utility functions and convex risk measures are preserved
under convex combination. Despite this new difficulty, we can build
on the support function technique used in \cite{armbruster2015decision,delage2017minimizing}
and augment it for our present setting in $\mathcal{R}_{iqv}$.

We begin with some preliminary definitions and facts associated with
support functions. For emphasis, in the following definition and throughout
we consider \textit{upper} support functions which dominate a target
function from above rather than below (as in the convex and quasi-convex
cases). We remind the reader that $\langle a,\,b\rangle$ denotes
the Euclidean inner product.
\begin{defn}
Let $f\text{ : }\mathbb{R}^{d}\rightarrow\mathbb{R}$. Recall that
$f$ is said to be \textit{majorized} by a function $g$ if 
\[
f\left(x\right)\leq g\left(x\right),\,\forall x\in\text{dom}\,f,
\]
and $g$ is an upper \textit{support function} of $f$ at $x\in\mathbb{R}^{d}$
if $f\left(x\right)=g\left(x\right)$. Here and later on, $\text{dom}\,f$
denotes the domain of $f$. A vector $s\in\mathbb{R}^{d}$ is called
a \textit{subgradient} of $f$ at $x\in\mathbb{R}^{d}$ if 
\[
f\left(y\right)\leq f\left(x\right)+\langle s,\,y-x\rangle,\,\forall y\in\text{dom}\,f.
\]
We denote the set of subgradients of $f$ at $x$ by $\partial f\left(x\right)$
and call the latter {\em subdifferential}. A vector $s\in\mathbb{R}^{d}$
is an \textit{upper subgradient} of $f$ at $x\in\mathbb{R}^{d}$
if 
\[
f\left(y\right)\leq f\left(x\right)+\langle s,\,y-x\rangle,\,\forall y\in\left\{ y\in\text{dom}\,f\text{ : }f\left(y\right)\geq f\left(x\right)\right\} .
\]
We denote the subdifferential of $f$ at $x$ by $\partial^{+}f\left(x\right)$. 
\end{defn}

When $f$ is concave and subdifferentiable at $x$, i.e., $\partial f(x)\neq\emptyset$,
the linear function 
\[
l\left(y\right)=f\left(x\right)+\langle a,\,y-x\rangle
\]
is a support function of $f$ at $x$ for any $a\in\partial f\left(x\right)$.
The following theorem gives rise to a characterization of concave
functions by their support functions.
\begin{thm}
\label{thm:Support-concave} Let $f\text{ : }\mathbb{R}^{d}\to\mathbb{R}$.
The following assertions hold. 
\begin{itemize}
\item[(i)] $f$ is a concave function if and only if there exists an index set
$\mathcal{J}$ such that 
\begin{equation}
f\left(x\right)=\inf_{j\in\mathcal{J}}l_{j}\left(x\right),\,\forall x\in\text{dom}\,f,\label{eq:Support-concave}
\end{equation}
where $\mathcal{J}$ is possibly infinite and $l_{j}\left(x\right)=\langle a_{j},\,x\rangle+b_{j}$
for all $j\in\mathcal{J}$.
\item[(ii)] For any finite set $\Theta\subset\mathbb{R}^{d}$ and values $\left\{ v_{\theta}\right\} _{\theta\in\Theta}\subset\mathbb{R}$,
$\hat{f}\text{ : }\mathbb{R}^{d}\rightarrow\mathbb{R}$ defined by
\[
\hat{f}\left(x\right)=\min_{a,\,b}\left\{ \langle a,\,x\rangle+b\text{ : }\langle a,\,\theta\rangle+b\geq v_{\theta},\,\forall\theta\in\Theta\right\} 
\]
is concave. Moreover, for all concave functions $\tilde{f}$ with
$\tilde{f}\left(\theta\right)\geq v_{\theta}$, $\hat{f}\leq\tilde{f}$. 
\end{itemize}
\end{thm}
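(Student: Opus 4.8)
The plan is to recognize (i) as the classical support-function (dual) representation of a finite concave function and then obtain (ii) as a short corollary. The ``if'' direction of (i) is immediate: each $l_j$ is affine, hence concave, and the pointwise infimum of any family of concave functions is concave, so $\inf_{j\in\mathcal{J}}l_j$ is concave on $\mathrm{dom}\,f$. For the ``only if'' direction, I would use that a concave $f\colon\mathbb{R}^d\to\mathbb{R}$ is finite-valued on all of $\mathbb{R}^d$, hence continuous, so its hypograph $\mathrm{hypo}\,f=\{(x,t):t\le f(x)\}$ is a closed convex set with nonempty interior. Let $\mathcal{J}$ index the collection of all affine majorants of $f$, i.e.\ pairs $(a,b)$ with $\langle a,x\rangle+b\ge f(x)$ for every $x$. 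Then $f(x)\le\inf_{j\in\mathcal{J}}l_j(x)$ trivially. For the reverse inequality, fix $x_0$; the point $(x_0,f(x_0))$ lies on the boundary of $\mathrm{hypo}\,f$, so the supporting hyperplane theorem furnishes a supporting hyperplane there, which is non-vertical because $f$ is finite everywhere, and hence is the graph of an affine function $l$ with $l(x_0)=f(x_0)$ and $l\ge f$ on $\mathbb{R}^d$. Thus $l$ is one of the $l_j$ and $\inf_{j\in\mathcal{J}}l_j(x_0)\le l(x_0)=f(x_0)$, giving the desired equality. I would cite a standard convex analysis reference for this step rather than reproduce the argument in full.

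For (ii), set $P=\{(a,b)\in\mathbb{R}^d\times\mathbb{R}:\langle a,\theta\rangle+b\ge v_\theta,\ \forall\theta\in\Theta\}$, a polyhedron, so that $\hat{f}(x)=\inf_{(a,b)\in P}\bigl(\langle a,x\rangle+b\bigr)$. For each fixed $(a,b)\in P$ the map $x\mapsto\langle a,x\rangle+b$ is affine, so once more $\hat{f}$, being a pointwise infimum of affine functions, is concave; if one additionally wants $\hat{f}$ to be proper it suffices to note that for $x=\sum_{\theta}\lambda_\theta\theta\in\mathrm{conv}\,\Theta$ one has $\hat{f}(x)\ge\sum_{\theta}\lambda_\theta v_\theta>-\infty$ whenever $P\ne\emptyset$. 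For the minimality claim, let $\tilde{f}$ be concave with $\tilde{f}(\theta)\ge v_\theta$ for all $\theta\in\Theta$, and apply part (i): $\tilde{f}=\inf_{j\in\mathcal{J}}l_j$ with each $l_j(x)=\langle a_j,x\rangle+b_j\ge\tilde{f}(x)$ for all $x$, so in particular $\langle a_j,\theta\rangle+b_j\ge\tilde{f}(\theta)\ge v_\theta$ for every $\theta\in\Theta$, i.e.\ $(a_j,b_j)\in P$. Hence $\hat{f}(x)\le\langle a_j,x\rangle+b_j=l_j(x)$ for each $j$, and taking the infimum over $j$ gives $\hat{f}(x)\le\tilde{f}(x)$ for all $x$.

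The only genuinely delicate point is securing a \emph{non-vertical} supporting hyperplane in the ``only if'' half of (i): this is exactly where the hypothesis that $f$ is real-valued (equivalently, continuous) on all of $\mathbb{R}^d$ is used, and it is what keeps the support representation from degenerating. Everything else --- preservation of concavity under pointwise infima of affine functions, and the membership argument $(a_j,b_j)\in P$ in (ii) --- is routine.
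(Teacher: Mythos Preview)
Your proposal is correct. For part (i) you and the paper do the same thing: recognize the representation as standard and defer to convex-analysis references (Rockafellar, Boyd--Vandenberghe), though you helpfully sketch the supporting-hyperplane argument and flag the non-verticality issue.

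For part (ii) you take a genuinely different route from the paper. The paper argues geometrically: it asserts that the hypograph of $\hat{f}$ is (essentially) the convex hull of the points $\{(\theta,v_\theta)\}_{\theta\in\Theta}$, and since the convex hull is the smallest convex set containing those points, any concave $\tilde{f}$ whose hypograph contains them must have $\mathrm{hypo}\,\hat{f}\subseteq\mathrm{hypo}\,\tilde{f}$, i.e.\ $\hat{f}\le\tilde{f}$. You instead run a clean duality argument that feeds part (i) back into part (ii): write $\tilde{f}=\inf_j l_j$, observe each $(a_j,b_j)$ is feasible for the LP defining $\hat{f}$, and conclude $\hat{f}\le l_j$ for every $j$, hence $\hat{f}\le\tilde{f}$. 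Your approach is more self-contained---it makes the dependence on (i) explicit and avoids having to justify the hypograph/convex-hull identification (which itself hides an LP-duality step). The paper's approach, on the other hand, gives a more direct geometric picture of what $\hat{f}$ actually is. Both are short and standard; yours is arguably the tighter write-up.
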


\begin{proof}
Part (i). These results are well known, see for instance \cite[Section 3.2]{Boyd2004}
and \cite{Rockafellar1970}.

Part (ii). It is immediate that $\hat{f}$ as defined is concave,
as it is of the form (\ref{eq:Support-concave}). Note that the hypograph
of $\hat{f}$ is the convex hull of $\left\{ \theta,\,v_{\theta}\right\} _{\theta\in\Theta}$.
The convex hull of a set of points is by definition the smallest convex
set containing these points. See also \cite[Section 6.5.5]{Boyd2004}. 
\end{proof}
Theorem \ref{thm:Support-concave} says that one can recover concave
functions by taking the infimum of their support functions. Moreover,
it shows that support functions can be used for the construction of
the ``lowest'' concave function that dominates a fixed set of values.
The results provide the basis for tractable formulations of PRO models
in \cite{armbruster2015decision,Haskell_Aspects_2015,delage2017minimizing}.
For further details on other applications of this result, we refer
the reader to \cite[Section 6.5.5]{Boyd2004} for a discussion of
interpolation with convex functions.

When $f$ is quasi-concave and upper subdifferentiable at $x$, the
piecewise linear function 
\[
h\left(y\right)=\max\left\{ f\left(x\right)+\langle a,\,y-x\rangle,\,f\left(x\right)\right\} 
\]
is a support function of $f$ at $x$ for any $a\in\partial^{+}f\left(x\right)$.
Note that these functions $h$ are the maximum of two linear functions
and so open upwards. We informally refer to such $h$ as ``hockey
stick'' functions in recognition of this shape. Note that these functions
are quasi-concave themselves.

For any $a\in\mathbb{R}^{d}$ and $b,\,c\in\mathbb{R}$, the function
$h\text{ : }\mathbb{R}^{d}\rightarrow\mathbb{R}$ defined by 
\[
h\left(x\right)=\max\left\{ \langle a,\,x\rangle+b,\,c\right\} 
\]
is quasi-concave since its upper level sets are convex. In particular,
it is easy to verify that for $t\leq c$,
\[
\left\{ x\in\mathbb{R}^{d}\text{ : }h\left(x\right)\geq t\right\} =\mathbb{R}^{d}
\]
and for $t>c$,
\[
\left\{ x\in\mathbb{R}^{d}\text{ : }h\left(x\right)\geq t\right\} =\left\{ x\in\mathbb{R}^{d}\text{ : }\langle a,\,x\rangle+b\geq t\right\} 
\]
which is convex (it is a half-space) since it is the upper level set
of an affine function.

The following result is the analog of Theorem \ref{thm:Support-concave}
for quasi-concave functions, it characterizes quasi-concave functions
via this class of ``hockey stick'' support functions. The result
will provide a theoretical foundation for tractable reformulation
of our quasi-concave robust choice function. We note that the first
part of the following theorem requires the stronger assumption that
our function of interest $f$ is $L-$Lipschitz continuous.
\begin{thm}
\label{thm:Support-quasiconcave} Let $f\text{ : }\mathbb{R}^{d}\to\mathbb{R}$.
The following assertions hold. 
\begin{itemize}
\item[(i)] Suppose that $f$ is quasi-concave and $L-$Lipschitz continuous.
Then 
\begin{equation}
f\left(x\right)=\inf_{j\in\mathcal{J}}h_{j}\left(x\right),\,\forall x\in\text{dom}\,f,\label{eq:Support-quasiconcave}
\end{equation}
where $\mathcal{J}$ is possibly infinite and $h_{j}\left(x\right)=\max\left\{ \langle a_{j},\,x\rangle+b_{j},\,c_{j}\right\} $
for $\|a_{j}\|_{2}\leq L$, $j\in\mathcal{J}$.
\item[(ii)] If $f$ has a representation (\ref{eq:Support-quasiconcave}), then
it is quasi-concave.
\item[(iii)] For any finite set $\Theta\subset\mathbb{R}^{d}$ and values $\left\{ v_{\theta}\right\} _{\theta\in\Theta}\subset\mathbb{R}$,
$\hat{f}\text{ : }\mathbb{R}^{d}\rightarrow\mathbb{R}$ defined by
\begin{align*}
\hat{f}\left(x\right)\triangleq\inf_{a,\,b,\,c}\, & \max\left\{ \langle a,\,x\rangle+b,\,c\right\} \\
\text{s.t.}\, & \max\left\{ \langle a,\,\theta\rangle+b,\,c\right\} \geq v_{\theta},\,\forall\theta\in\Theta,
\end{align*}
is quasi-concave. Furthermore, the graph of $\hat{f}$ is the quasi-concave
envelope for the set of points $\left\{ \left(\theta,\,v_{\theta}\right)\text{ : }\theta\in\Theta\right\} $. 
\end{itemize}
\end{thm}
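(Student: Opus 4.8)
The plan is to treat the three parts in order of increasing difficulty, with (i) carrying the real content and (iii) following from the single-point construction in (i) plus a separation argument. \emph{Part (ii)} is immediate: each ``hockey stick'' $h_j(x)=\max\{\langle a_j,x\rangle+b_j,c_j\}$ is quasi-concave because its upper level sets were just shown to be either $\mathbb{R}^d$ or a half-space, hence $\{x:f(x)\geq t\}=\bigcap_{j\in\mathcal{J}}\{x:h_j(x)\geq t\}$ is an intersection of convex sets for every $t$, and therefore $f=\inf_{j}h_j$ is quasi-concave.

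For \emph{part (i)}, I would prove that for every $x_0\in\mathbb{R}^d$ there is a single hockey-stick function $h_{x_0}$, with $\|a_{x_0}\|_2\leq L$, that majorizes $f$ on all of $\mathbb{R}^d$ and satisfies $h_{x_0}(x_0)=f(x_0)$; representation \eqref{eq:Support-quasiconcave} then follows with $\mathcal{J}=\mathbb{R}^d$, since $\inf_{x_0}h_{x_0}(x)\leq h_x(x)=f(x)$ while each $h_{x_0}\geq f$. Fix $x_0$ and put $t_0=f(x_0)$. If $t_0\geq f(y)$ for all $y$, take $h_{x_0}\equiv t_0$ (that is, $a=0$, $b=c=t_0$). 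Otherwise the strict upper level set $V:=\{y:f(y)>t_0\}$ is nonempty; it is convex by quasi-concavity, and open because $f$ is $L$-Lipschitz hence continuous, and $x_0\notin V$. A supporting/separating hyperplane gives $a_0\neq0$ with $\langle a_0,y-x_0\rangle\geq0$ for all $y\in\overline{V}$, which openness of $V$ upgrades to $\langle a_0,y-x_0\rangle>0$ on $V$; consequently $f\leq t_0$ on the hyperplane $H_0:=\{y:\langle a_0,y-x_0\rangle=0\}$. For $x\in\overline{V}$, projecting $x$ orthogonally onto $H_0$ and combining $f\leq t_0$ on $H_0$ with Lipschitz continuity yields
\[
f(x)\;\leq\;t_0+\frac{L}{\|a_0\|_2}\,\langle a_0,x-x_0\rangle.
\]
Then $a=(L/\|a_0\|_2)\,a_0$, $c=t_0$, $b=t_0-\langle a,x_0\rangle$, and $h_{x_0}(x)=\max\{\langle a,x\rangle+b,c\}$ do the job: $\|a\|_2=L$, $h_{x_0}(x_0)=t_0$, $h_{x_0}\geq f$ on $V$ by the display, and $h_{x_0}\geq t_0\geq f$ off $V$. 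I expect this to be the main obstacle: when $x_0\in\partial V$ the separation margin $\inf_{y\in V}\langle a_0,y-x_0\rangle$ may be $0$, so the linear majorant over $\overline{V}$ cannot be read off from a crude separation estimate and must instead be extracted purely from the Lipschitz bound via the projection, after first establishing $f\leq t_0$ on $H_0$.

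For \emph{part (iii)}, the feasible set is nonempty (take $a=0$, $c=\max_{\theta}v_\theta$, any $b\leq c$), so $\hat f$ is a pointwise infimum of hockey sticks and hence quasi-concave, exactly as in (ii); the constraint forces $\hat f(\theta)\geq v_\theta$ for each $\theta\in\Theta$, so $\hat f$ is a quasi-concave function lying above the data and therefore dominates the quasi-concave envelope (the pointwise smallest such function). For the reverse inequality I would show $\{x:\hat f(x)>t\}\subseteq\mathrm{conv}\{\theta\in\Theta:v_\theta>t\}$ for every $t$: if $x$ lies outside that polytope, separate it by a vector $w$ with $\langle w,x\rangle<\langle w,\theta\rangle$ for all $\theta$ with $v_\theta>t$, and verify that for $s>0$ large the triple $a=sw$, $c=t$, $b=\max_{\theta:v_\theta>t}v_\theta-s\min_{\theta:v_\theta>t}\langle w,\theta\rangle$ is feasible with $\max\{\langle a,x\rangle+b,c\}=t$, whence $\hat f(x)\leq t$. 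Since every quasi-concave $g$ with $g(\theta)\geq v_\theta$ satisfies $\{g>t\}\supseteq\mathrm{conv}\{\theta:v_\theta>t\}$, this yields $\hat f\leq g$, identifying the graph of $\hat f$ with the quasi-concave envelope of $\{(\theta,v_\theta):\theta\in\Theta\}$.
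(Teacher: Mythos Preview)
Your argument is correct. Parts (ii) and (iii) follow essentially the same route as the paper: level sets of hockey sticks are half-spaces or all of $\mathbb{R}^d$, so the infimum is quasi-concave; and for (iii) both you and the paper produce, for each $x$ outside the relevant convex hull, a feasible hockey stick that is small at $x$ via a separation argument. The only cosmetic differences are that the paper works with the closed level sets $\{\theta:v_\theta\geq t\}$ and takes $\tilde c=\max\{v_\theta:v_\theta<t\}$, whereas you use the strict sets $\{\theta:v_\theta>t\}$ and $c=t$; both constructions verify feasibility and yield $\hat f(x)\leq t$. (One tiny edge case you should state explicitly: when $\{\theta:v_\theta>t\}=\emptyset$ your separation step is vacuous, but then the constant hockey stick with $c=\max_\theta v_\theta$ already gives $\hat f\leq\max_\theta v_\theta\leq t$.)

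Part (i) is where you genuinely diverge from the paper. The paper invokes an external result of Plastria to the effect that a Lipschitz quasi-concave function is upper subdifferentiable everywhere, and then simply sets $h_x(y)=\max\{f(x)+\langle s_x,y-x\rangle,\,f(x)\}$ for some $s_x\in\partial^+f(x)$ with $\|s_x\|_2\leq L$. You instead \emph{construct} such an upper subgradient directly: separate $x_0$ from the open convex set $V=\{f>f(x_0)\}$, observe that $f\leq f(x_0)$ on the separating hyperplane $H_0$, and then use the Lipschitz bound along the orthogonal projection onto $H_0$ to obtain the linear majorant with slope exactly $L$. This is more elementary and self-contained than the paper's citation, and as a bonus it makes transparent \emph{why} the Lipschitz constant controls $\|a_j\|_2$: the bound comes from the distance to the supporting hyperplane, not from any abstract property of $\partial^+f$. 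Your worry about the case $x_0\in\partial V$ is handled correctly, since the projection-plus-Lipschitz estimate does not require a positive separation margin.
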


\begin{proof}
Part (i). By \cite[Theorem 2.3]{plastria1985lower}, any Lipschitz
continuous quasi-concave function is upper subdifferentiable on its
domain. Thus $\partial^{+}f\left(x\right)\neq\emptyset$ for any $x\in\text{dom}\,f$
under the assumption that $f$ is L-Lipschitz, and for any $s_{x}\in\partial^{+}f\left(x\right)$,
$\|s_{x}\|_{2}\leq L$. Moreover, for any $x\in\text{dom}\,f$ and
$s_{x}\in\partial^{+}f\left(x\right)$, $f$ is supported by $h_{x}\left(y\right)=\max\left\{ f\left(x\right)+\langle s_{x},\,y-x\rangle,\,f\left(x\right)\right\} $
at $x$. By taking the infimum of all support functions defined as
such, we have
\[
f\left(y\right)\leq\inf_{x\in\text{dom}\,f,\,s_{x}\in\partial f\left(x\right)}\max\left\{ f\left(x\right)+\langle s_{x},\,y-x\rangle,\,f\left(x\right)\right\} ,\,\forall y\in\text{dom}\,f.
\]
Furthermore, for any $y\in\text{dom}\,f$,
\[
f\left(y\right)=h_{y}\left(y\right)\geq\inf_{x\in\text{dom}\,f}h_{x}\left(y\right)
\]
since $f\left(y\right)=\max\left\{ f\left(y\right)+\langle s_{y},\,y-y\rangle,\,f\left(y\right)\right\} =h_{y}\left(y\right)$.

Part (ii). For any $t\in\mathbb{R}$ we have 
\[
\left\{ x\in\mathbb{R}^{d}\text{ : }f\left(x\right)\geq t\right\} =\left\{ x\in\mathbb{R}^{d}\text{ : }\inf_{j\in\mathcal{J}}h_{j}\left(x\right)\geq t\right\} =\bigcap_{j\in\mathcal{J}}\left\{ x\in\mathbb{R}^{d}\text{ : }h_{j}\left(x\right)\geq t\right\} ,
\]
each $\left\{ x\in\mathbb{R}^{d}\text{ : }h_{j}\left(x\right)\geq t\right\} $
is convex by quasi-concavity of $h_{j}$ for all $j\in\mathcal{J}$,
and the intersection of convex sets is convex.

Part (iii). The quasi-concavity of $\hat{f}$ follows from the fact
that each function $\max\left\{ \langle a,\,x\rangle+b,\,c\right\} $
is quasi-concave and the infimum of such functions is also quasi-concave.
In what follows, we prove the second part of the statement. For any
$t\in\mathbb{R}$, each $\theta\in\Theta$ with $v_{\theta}\geq t$
belongs to $\left\{ x\in\mathbb{R}^{d}\text{ : }\hat{f}\left(x\right)\geq t\right\} $
because by definition $\hat{f}\left(\theta\right)\geq v_{\theta}\geq t$.
Since $\hat{f}$ is quasi-concave and has convex upper level sets,
we must then have 
\begin{equation}
\text{conv}\left\{ \theta\in\Theta\text{ : }v_{\theta}\geq t\right\} \subseteq\left\{ x\in\mathbb{R}^{d}\text{ : }\hat{f}\left(x\right)\geq t\right\} ,\label{eq:Thm4.3-proof-1}
\end{equation}
where ``conv'' denotes the convex hull of a set.

For $x\notin\text{conv}\left\{ \theta\in\Theta\text{ : }v_{\theta}\geq t\right\} $,
we may choose appropriate parameters $a\in\mathbb{R}^{d}$, $b\in\mathbb{R}$
and $c\in\mathbb{R}$ such that $\max\left\{ \langle a,\,\theta\rangle+b,\,c\right\} \geq v_{\theta}$
for all $\theta\in\Theta$ and $\max\left\{ \langle a,\,x\rangle+b,\,c\right\} <t$.
To construct such a hockey-stick function, first let $y\in\mathbb{R}^{d}$
be the projection of $x$ onto the convex set $\text{conv}\left\{ \theta\in\Theta\text{ : }v_{\theta}\geq t\right\} $.
Next, by virtue of the separation theorem in convex analysis, there
exists $\tilde{a}\in\mathbb{R}^{d}$ such that $\langle\tilde{a},\,y-x\rangle\geq\max_{\theta\in\Theta}v_{\theta}$
and $\langle\tilde{a},\,\theta-y\rangle\geq0$ for all $\theta\in\left\{ \theta\in\Theta\text{ : }v_{\theta}\geq t\right\} $.
Let $\tilde{c}=\max_{\theta\in\Theta}\left\{ v_{\theta}\text{ : }v_{\theta}<t\right\} <t$
and $\tilde{b}=-\langle\tilde{a},\,x\rangle$. Then
\begin{eqnarray*}
\max\left\{ \langle\tilde{a},\,\theta\rangle+\tilde{b},\,\tilde{c}\right\}  & = & \max\left\{ \langle\tilde{a},\,\theta-x\rangle,\,\tilde{c}\right\} \\
 & = & \max\left\{ \langle\tilde{a},\,\theta-y\rangle+\langle\tilde{a},\,y-x\rangle,\,\tilde{c}\right\} \\
 & \geq & \left\{ \begin{array}{ll}
\max\left\{ \langle\tilde{a},\,y-x\rangle,\,\tilde{c}\right\} , & \text{for}\;\theta\in\left\{ \theta\in\Theta\text{ : }v_{\theta}\geq t\right\} \\
\tilde{c}, & \text{in any case}
\end{array}\right.\\
 & \geq & v_{\theta},\forall\theta\in\Theta.
\end{eqnarray*}
By the definition of $\hat{f}(x)$, the inequality above implies that
$\hat{f}\left(x\right)\leq\max\left\{ \langle\tilde{a},\,x\rangle+\tilde{b},\,\tilde{c}\right\} $.
On the other hand, it is easy to verify that $\max\left\{ \langle\tilde{a},\,x\rangle+\tilde{b},\,\tilde{c}\right\} =\tilde{c}<t$.
Thus, we arrive at $\hat{f}\left(x\right)<t$, which enables us to
deduce that $x\notin\left\{ x\in\mathbb{R}^{d}\text{ : }\hat{f}\left(x\right)\geq t\right\} $
and subsequently
\begin{equation}
\left\{ x\in\mathbb{R}^{d}\text{ : }\hat{f}\left(x\right)\geq t\right\} \subseteq\text{conv}\left\{ \theta\in\Theta\text{ : }v_{\theta}\geq t\right\} .\label{eq:Thm4.3-proof-2}
\end{equation}
A combination of (\ref{eq:Thm4.3-proof-1}) and (\ref{eq:Thm4.3-proof-2})
yields
\[
\text{conv}\left\{ \theta\in\Theta\text{ : }v_{\theta}\geq t\right\} =\left\{ x\in\mathbb{R}^{d}\text{ : }\hat{f}\left(x\right)\geq t\right\} .
\]
Since the convex hull is the smallest convex set containing a set
of points, it must be that 
\[
\left\{ x\in\mathbb{R}^{d}\text{ : }\hat{f}\left(x\right)\geq t\right\} =\text{conv}\left\{ \theta\in\Theta\text{ : }v_{\theta}\geq t\right\} \subseteq\left\{ x\in\mathbb{R}^{d}\text{ : }\tilde{f}\left(x\right)\geq t\right\} 
\]
for any other quasi-concave $\tilde{f}$ with $\tilde{f}\left(\theta\right)\geq v_{\theta}$
for all $\theta\in\Theta$. This same reasoning holds for all $t\in\mathbb{R}$,
so $\left\{ x\in\mathbb{R}^{d}\text{ : }\hat{f}\left(x\right)\geq t\right\} \subseteq\left\{ x\in\mathbb{R}^{d}\text{ : }\tilde{f}\left(x\right)\geq t\right\} $
for all $t\in\mathbb{R}$ and $\hat{f}\leq\tilde{f}$. 
\end{proof}
We note that we may use any norm (not necessarily the Euclidean norm)
to enforce Lipschitz continuity in part (i) of Theorem \ref{thm:Support-quasiconcave}
since all norms on $\mathbb{R}^{d}$ are equivalent. In parallel to
Theorem \ref{thm:Support-concave}, Theorem \ref{thm:Support-quasiconcave}
gives conditions where a quasi-concave function can be recovered by
taking the infimum of its support functions (which are hockey stick
functions in this case). Moreover, Theorems \ref{thm:Support-concave}
and \ref{thm:Support-quasiconcave} give conditions for constructing
the ``lowest'' quasi-concave function that contains a fixed set
of values.

\subsection{Reformulation as a mixed-integer linear program}

Theorem \ref{thm:Support-quasiconcave} gives an explicit form for
the ``worst-case'' quasi-concave function that dominates a set of
values $\left\{ v_{\theta}\right\} _{\theta\in\Theta}$ over a finite
set $\Theta$. In fact, this is exactly what we need to derive a tractable
reformulation of $\psi\left(X;\,\mathcal{S},\,Y\right)$. The remaining
challenge is to put the correct conditions on the values $\left\{ v_{\theta}\right\} _{\theta\in\Theta}$,
which will take the form of an optimization problem. To this effect,
for the remainder of this section we introduce the major assumption
that the underlying sample space is finite. 
\begin{assumption}
\label{assu:finite} The sample space $\Omega$ is finite. 
\end{assumption}

Assumption \ref{assu:finite} also appears in \cite{armbruster2015decision,Haskell_Aspects_2015,delage2017minimizing}
where it is used for obtaining tractable optimization formulations.
However, in the case when $\Omega$ is continuous, it is possible
to develop a discrete approximation (see \cite[Section 5]{delage2017shortfall}
and \cite{Haskell_Aspects_2015}).

Under Assumption \ref{assu:finite}, we adopt the convention that
a prospect $X\in\mathcal{L}$ may be identified with the vector of
its realizations 
\[
\vec{X}=\left(X\left(\omega\right)\right)_{\omega\in\Omega}\in\mathbb{R}^{n\,|\Omega|}.
\]
This convention first appeared in \cite{delage2017minimizing} and
depends on a finite sample space $\Omega$. In this way, there is
a one-to-one correspondence between elements of $\mathcal{L}$ and
$\mathbb{R}^{n\,|\Omega|}$. We now define 
\[
\Theta\triangleq\left\{ \vec{0}\right\} \cup\left\{ \vec{W}_{i}\right\} _{i\in\mathcal{I}}\cup\left\{ \vec{Y}_{i}\right\} _{i\in\mathcal{I}}\cup\left\{ \vec{Y}\right\} 
\]
to be the union of all the prospects used in the definition of $\mathcal{S}$
(including the constant prospect $\vec{0}$ which is used in the normalization
condition) along with the benchmark $Y$.

To proceed on, we let 
\[
S\left(v\right)\triangleq\left\{ \rho\text{ : }\mathcal{L}\rightarrow\mathbb{R}\text{ s.t. }\rho\left(\theta\right)=v_{\theta},\,\forall\theta\in\Theta,\,\rho\text{ is law invariant}\right\} 
\]
denote the set of choice functions that take the values $\left\{ v_{\theta}\right\} _{\theta\in\Theta}$
on the set $\Theta$. We now evaluate the worst-case choice function
$\psi\left(X;\,\mathcal{S},\,Y\right)$ with the following procedure: 
\begin{enumerate}
\item Set the values $\left\{ v_{\theta}\right\} _{\theta\in\Theta}$ of
$\rho$ on $\Theta$. We only consider these values at first because
they give sufficient information to construct $\rho$ on the rest
of $\mathcal{L}$, as we will show. 
\begin{enumerate}
\item The values $\left\{ v_{\theta}\right\} _{\theta\in\Theta}$ must satisfy
the majorization characterization for quasi-concave functions (a quasi-concave
function is majorized by its hockey stick support function at every
point on its graph). This is equivalent to determining if $S\left(v\right)\cap\mathcal{R}_{iqv}\ne\emptyset$. 
\item The values $\left\{ v_{\theta}\right\} _{\theta\in\Theta}$ must satisfy
the preference elicitation condition given in the definition of $\mathcal{S}$. 
\item The values $\left\{ v_{\theta}\right\} _{\theta\in\Theta}$ must satisfy
the $L-$Lipschitz continuity condition given in the definition of
$\mathcal{S}$. 
\end{enumerate}
\item Once the values of $\rho$ are fixed on $\Theta$ satisfying the above
conditions, interpolate using hockey stick support functions to determine
the value at any $X\in\mathcal{L}$. 
\end{enumerate}
This procedure results in the following optimization problem: 
\begin{align}
\min_{a,\,s,\,b,\,c,\,v}\, & \max\left\{ \langle a,\,\vec{X}\rangle+b,\,c\right\} -v_{\vec{Y}}\label{ROBUST}\\
\text{s.t.}\, & \max\left\{ v_{\theta}+\langle s_{\theta},\,\theta'-\theta\rangle,\,v_{\theta}\right\} \geq v_{\theta'}, & \forall\theta\ne\theta';\,\theta,\,\theta'\in\Theta,\label{ROBUST-1}\\
 & v_{\vec{0}}=0,\label{ROBUST-2}\\
 & v_{\vec{W}_{i}}\geq v_{\vec{Y}_{i}}, & \forall i\in\mathcal{I},\label{ROBUST-3}\\
 & s_{\theta}\geq0,\,\|s_{\theta}\|_{\infty}\leq L, & \forall\theta\in\Theta,\label{ROBUST-4}\\
 & \max\left(\langle a,\,\theta\rangle+b,\,c\right)\geq v_{\theta}, & \forall\theta\in\Theta,\label{ROBUST-5}\\
 & a\geq0,\,\|a\|_{\infty}\leq L.\label{ROBUST-6}
\end{align}
Intuitively, constraint (\ref{ROBUST-1}) is the majorization characterization
for the values $v_{\theta}=\rho\left(\theta\right)$ for all $\theta\in\Theta$;
constraint (\ref{ROBUST-2}) is the normalization constraint; constraint
(\ref{ROBUST-3}) corresponds to the preference elicitation requirement
in the definition of $\mathcal{S}$; constraint (\ref{ROBUST-4})
requires the support functions used to characterize $\left\{ v_{\theta}\right\} _{\theta\in\Theta}$
to be increasing and Lipschitz continuous; constraint (\ref{ROBUST-5})
requires the support function used to determine the value of $\rho$
at $X$ to majorize $\rho$; and finally constraint (\ref{ROBUST-6})
requires the support function used to determine the value of $\rho$
at $X$ to be increasing and Lipschitz continuous.

Problem (\ref{ROBUST}) - (\ref{ROBUST-6}) has several features in
common with \cite{armbruster2015decision,delage2017minimizing}. In
particular, all of these formulations have a support function characterization
that ensures convexity/concavity/quasi-concavity, and all of these
formulations have constraints corresponding to preference elicitation.
The main difference is that the formulations in \cite{armbruster2015decision,delage2017minimizing}
are based on linear functions while our formulation is built on hockey
stick functions.
\begin{rem}
The convention $\vec{X}=\left(X\left(\omega\right)\right)_{\omega\in\Omega}$
avoids the requirement that the convex hull of $\Theta$ contain the
support of $X$ in Problem (\ref{ROBUST}) - (\ref{ROBUST-6}) as
in \cite{armbruster2015decision,delage2017shortfall}. 
\end{rem}

The next theorem formally verifies the correctness of this formulation.
\begin{thm}
\label{thm:ROBUST} Suppose Assumption \ref{assu:finite} holds. Given
$X\in\mathcal{L}$, the optimal value of Problem (\ref{ROBUST}) -
(\ref{ROBUST-6}) is equal to $\psi\left(X;\,\mathcal{S},\,Y\right)$. 
\end{thm}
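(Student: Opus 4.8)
The plan is to establish the equality of the two optimal values by proving two inequalities, exploiting Theorem \ref{thm:Support-quasiconcave} to pass back and forth between feasible points of Problem (\ref{ROBUST})–(\ref{ROBUST-6}) and choice functions in $\mathcal{S}$. First I would show that the optimal value of the optimization problem is a lower bound for $\psi\left(X;\,\mathcal{S},\,Y\right)$. Take any $\rho\in\mathcal{S}$ and set $v_{\theta}=\rho\left(\theta\right)$ for all $\theta\in\Theta$. Because $\rho$ is $L$-Lipschitz and quasi-concave, Theorem \ref{thm:Support-quasiconcave}(i) and \cite[Theorem 2.3]{plastria1985lower} supply, at each $\theta\in\Theta$, an upper subgradient $s_{\theta}\in\partial^{+}\rho\left(\theta\right)$ with $\|s_{\theta}\|_{\infty}\le\|s_{\theta}\|_{2}\le L$; monotonicity of $\rho$ forces $s_{\theta}\ge 0$ (an upper subgradient of an increasing function must be nonnegative, since decreasing any coordinate of $y$ from $\theta$ within the upper level set keeps $\rho(y)\ge\rho(\theta)$). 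The hockey-stick majorization $\rho\left(\theta'\right)\le\max\{\rho(\theta)+\langle s_{\theta},\theta'-\theta\rangle,\rho(\theta)\}$ is exactly constraint (\ref{ROBUST-1}); normalization $\rho(0)=0$ gives (\ref{ROBUST-2}); preference elicitation $\rho(W_i)\ge\rho(Y_i)$ gives (\ref{ROBUST-3}); and (\ref{ROBUST-4}) holds by the above. Likewise, applying the support result at the point $X$ itself, pick $a\in\partial^{+}\rho(\vec X)$ (again $a\ge 0$, $\|a\|_\infty\le L$), set $b=\rho(\vec X)-\langle a,\vec X\rangle$ and $c=\rho(\vec X)$; then $\max\{\langle a,\theta\rangle+b,c\}\ge\rho(\theta)=v_\theta$ for all $\theta\in\Theta$ (this is the majorization of $\rho$ by its hockey-stick support at $\vec X$), which is (\ref{ROBUST-5}), and (\ref{ROBUST-6}) holds. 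Finally $\max\{\langle a,\vec X\rangle+b,c\}=\rho(\vec X)$, so the objective value at this feasible point equals $\rho(\vec X)-\rho(Y)$. Taking the infimum over $\rho\in\mathcal{S}$ shows the optimal value of (\ref{ROBUST})–(\ref{ROBUST-6}) is $\le\psi\left(X;\,\mathcal{S},\,Y\right)$.

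For the reverse inequality I would start from an arbitrary feasible $(a,s,b,c,v)$ and build a choice function $\rho\in\mathcal{S}$ with $\rho(\vec X)\le\max\{\langle a,\vec X\rangle+b,c\}$ and $\rho(\theta)=v_\theta$ on $\Theta$. The natural construction is the pointwise infimum over hockey-stick functions:
\begin{equation*}
\rho(\vec Z)\;\triangleq\;\min\Bigl\{\ \inf_{\theta\in\Theta}\max\{v_\theta+\langle s_\theta,\vec Z-\theta\rangle,\,v_\theta\}\ ,\ \max\{\langle a,\vec Z\rangle+b,\,c\}\ \Bigr\},\quad \vec Z\in\mathbb{R}^{n|\Omega|}.
\end{equation*}
By Theorem \ref{thm:Support-quasiconcave}(ii) this $\rho$ is quasi-concave; each constituent hockey-stick function is increasing (slopes $s_\theta,a\ge 0$) and $L$-Lipschitz (slopes bounded by $L$ in $\ell_\infty$, noting a max of $L$-Lipschitz functions is $L$-Lipschitz and an infimum of a uniformly $L$-Lipschitz family is $L$-Lipschitz), so $\rho\in\mathcal{R}_{iqv}$ and is $L$-Lipschitz. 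One then checks $\rho(\theta)=v_\theta$ for each $\theta\in\Theta$: the ``$\le$'' direction is clear by taking that same $\theta$ in the infimum (the corresponding term is $\max\{v_\theta,v_\theta\}=v_\theta$) and by (\ref{ROBUST-5}); the ``$\ge$'' direction is precisely what constraints (\ref{ROBUST-1}) and (\ref{ROBUST-5}) guarantee — every hockey-stick term, and the $(a,b,c)$ term, is $\ge v_\theta$ at the point $\theta$. In particular $\rho(0)=v_{\vec 0}=0$ and $\rho(W_i)=v_{\vec W_i}\ge v_{\vec Y_i}=\rho(Y_i)$, so $\rho\in\mathcal{S}$. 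Moreover $\rho(\vec X)\le\max\{\langle a,\vec X\rangle+b,c\}$ directly from the definition, hence
\begin{equation*}
\psi\left(X;\,\mathcal{S},\,Y\right)\;\le\;\rho(\vec X)-\rho(Y)\;\le\;\max\{\langle a,\vec X\rangle+b,c\}-v_{\vec Y},
\end{equation*}
and minimizing the right-hand side over feasible $(a,s,b,c,v)$ gives $\psi\left(X;\,\mathcal{S},\,Y\right)\le$ the optimal value of (\ref{ROBUST})–(\ref{ROBUST-6}). Combining the two inequalities finishes the proof.

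The step I expect to be the main obstacle is verifying, in the reverse direction, that the constructed $\rho$ genuinely interpolates the prescribed values, i.e. $\rho(\theta)=v_\theta$ for all $\theta\in\Theta$ simultaneously, rather than merely majorizing them. The ``$\le$'' half is automatic, but the ``$\ge$'' half must use constraint (\ref{ROBUST-1}) in full strength: it says that for every ordered pair $\theta\ne\theta'$ the hockey-stick function anchored at $\theta$ evaluates to at least $v_{\theta'}$ at $\theta'$, and one has to be careful that this, together with the self-evaluation $\max\{v_\theta,v_\theta\}=v_\theta$ and the $(a,b,c)$-constraint (\ref{ROBUST-5}), pins the infimum down to exactly $v_{\theta'}$. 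A secondary point requiring care is the claim that upper subgradients of an increasing function are nonnegative, which justifies $s_\theta\ge 0$ and $a\ge 0$ in the forward direction and uses the specific definition of $\partial^{+}$ together with monotonicity; and one should double-check the norm bookkeeping, since the theorem's Lipschitz hypothesis is stated with $\|\cdot\|_2$ while the program uses $\|\cdot\|_\infty$ — but as remarked after Theorem \ref{thm:Support-quasiconcave}, equivalence of norms on $\mathbb{R}^{d}$ (and the freedom to rescale $L$) makes this harmless.
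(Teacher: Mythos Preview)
Your proof is correct and complete; in particular the step you flag as the main obstacle---that the constructed $\rho$ satisfies $\rho(\theta)=v_\theta$ exactly---goes through cleanly: the self-term gives $\le v_\theta$, while constraints (\ref{ROBUST-1}) (for the $s_{\theta'}$-terms) and (\ref{ROBUST-5}) (for the $(a,b,c)$-term) pin every term of the minimum at $\ge v_\theta$.

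Your route differs from the paper's. The paper argues by first decomposing $\mathcal{S}=\bigcup_v\bigl(S(v)\cap\mathcal{S}\bigr)$ according to the vector of values on $\Theta$, writing $\psi(X;\mathcal{S},Y)=\min_v\psi(X;S(v)\cap\mathcal{S},Y)$, then identifying the feasibility condition $S(v)\cap\mathcal{S}\neq\emptyset$ with constraints (\ref{ROBUST-1})--(\ref{ROBUST-4}) and invoking the quasi-concave envelope characterization of Theorem~\ref{thm:Support-quasiconcave}(iii) to evaluate the inner infimum as $\min\{\max\{\langle a,\vec X\rangle+b,c\}:\text{(\ref{ROBUST-5})--(\ref{ROBUST-6})}\}-v_{\vec Y}$. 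You instead run a direct two-sided sandwich: from any $\rho\in\mathcal{S}$ you manufacture a feasible $(a,s,b,c,v)$ with matching objective (forward), and from any feasible point you build an explicit $\rho\in\mathcal{S}$ as a finite minimum of the given hockey-sticks (reverse). The paper's decomposition is conceptually cleaner and makes the role of Theorem~\ref{thm:Support-quasiconcave}(iii) transparent, whereas your argument is more self-contained and constructive---it never needs the envelope result, only parts (i)--(ii), and it exhibits the worst-case $\rho$ concretely in terms of the optimization variables. Both share the same norm sloppiness you already noted.
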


\begin{proof}
To begin, we may partition the set of choice functions by their values
on the finite set $\Theta$. We have the equivalence 
\begin{align*}
\psi\left(X;\,\mathcal{S},\,Y\right)=\, & \inf_{\rho\in\mathcal{S}}\left\{ \rho\left(X\right)-\rho\left(Y\right)\right\} \\
=\, & \min_{v\in\mathbb{R}^{|\Theta|}}\inf_{\rho\in S\left(v\right)\cap\mathcal{S}}\left\{ \rho\left(X\right)-\rho\left(Y\right)\right\} \\
=\, & \min_{v\in\mathbb{R}^{|\Theta|}}\left\{ \psi\left(X;\,S\left(v\right)\cap\mathcal{S},\,Y\right)\text{ : }S\left(v\right)\cap\mathcal{S}\ne\emptyset\right\} ,
\end{align*}
where we use $\mathcal{S}=\cup_{v\in\mathbb{R}^{|\Theta|}}\left\{ S\left(v\right)\cap\mathcal{S}\right\} $,
and $|\Theta|$ denotes the cardinality of $\Theta$. This condition
is simply saying that $\psi\left(X;\,\mathcal{S},\,Y\right)$ can
be understood as either minimizing over $\mathcal{S}$ directly, or
first fixing the relevant values $\left\{ v_{\theta}\right\} _{\theta\in\Theta}$
on $\Theta$ and then minimizing over functions in $\mathcal{S}$
that coincide with those values.

Next define the sets
\[
S_{N}\triangleq\left\{ \rho\text{ : }\mathcal{L}\rightarrow\mathbb{R}\text{ : }\rho\left(0\right)=0\right\} ,
\]

\[
S_{E}\triangleq\left\{ \rho\text{ : }\mathcal{L}\rightarrow\mathbb{R}\text{ s.t. }\rho\left(W_{i}\right)\geq\rho\left(Y_{i}\right),\,\forall i\in\mathcal{I}\right\} ,
\]
and 
\[
S_{L}\triangleq\left\{ \rho\text{ : }\mathcal{L}\rightarrow\mathbb{R}\text{ s.t. }\rho\text{ is \ensuremath{L-}Lipschitz continuous}\right\} .
\]
Then, using the fact that $\mathcal{S}=S_{N}\cap S_{E}\cap S_{L}\cap\mathcal{R}_{iqv}$,
we have 
\begin{align*}
\psi\left(X;\,\mathcal{S},\,Y\right)=\min_{v\in\mathbb{R}^{|\Theta|}}\, & \psi\left(X;\,S\left(v\right)\cap\mathcal{S},\,Y\right)\\
\text{s.t.}\, & S\left(v\right)\subseteq S_{N},\,S\left(v\right)\subseteq S_{E},\,S\left(v\right)\subseteq S_{L},\,S\left(v\right)\cap\mathcal{R}_{iqv}\ne\emptyset.
\end{align*}
Constraint $S\left(v\right)\subseteq S_{E}$ is just the preference
elicitation condition (\ref{ROBUST-3}) and constraint $S\left(v\right)\subseteq S_{L}$
is just the Lipschitz continuity requirement (\ref{ROBUST-4}). Both
of these conditions only constrain the values of $\rho$ on $\Theta$
via $v$. Constraint $S\left(v\right)\cap\mathcal{R}_{iqv}\ne\emptyset$
states that there must exist a function in $\mathcal{R}_{iqv}$ that
takes the values $\left\{ v_{\theta}\right\} _{\theta\in\Theta}$
on $\Theta$. This requirement is enforced by: (i) constraint (\ref{ROBUST-1}),
the majorization characterization of the quasi-concavity of $\left\{ v_{\theta}\right\} _{\theta\in\Theta}$;
and (ii) constraint (\ref{ROBUST-4}), the requirement that the support
functions that majorize $\rho$ on $\Theta$ be increasing and Lipschitz
continuous.

So, it remains to evaluate $\psi\left(X;\,S\left(v\right)\cap\mathcal{S},\,Y\right)$
for fixed $v\in\mathbb{R}^{|\Theta|}$, which is explicitly 
\[
\inf_{\rho\in S\left(v\right)\cap\mathcal{S}}\left\{ \rho\left(X\right)-\rho\left(Y\right)\right\} =\inf_{\rho\in S\left(v\right)\cap\mathcal{S}}\rho\left(X\right)-\rho\left(Y\right)
\]
since the value $\rho\left(Y\right)$ is fixed for $\rho\in S\left(v\right)\cap\mathcal{S}$
by construction of $S\left(v\right)$. Over all increasing quasi-concave
functions in $S\left(v\right)$, the one minimizing $\inf_{\rho\in S\left(v\right)\cap\mathcal{S}}\rho\left(X\right)$
attains the value 
\[
\min\left\{ \max\left\{ \langle a,\,\vec{X}\rangle+b,\,c\right\} \text{ : }\max\left\{ \langle a,\,\theta\rangle+b,\,c\right\} \geq v_{\theta},\,\forall\theta\in\Theta\right\} ,
\]
which is captured by constraint (\ref{ROBUST-5}) that requires the
support function $\max\left(\langle a,\,\vec{X}\rangle+b,\,c\right)$
to majorize $\left\{ v_{\theta}\right\} _{\theta\in\Theta}$. 
\end{proof}
Problem (\ref{ROBUST}) - (\ref{ROBUST-6}) is finite-dimensional,
but it is not a linear programming problem (or even a convex optimization
problem) due to constraints (\ref{ROBUST-1}) and (\ref{ROBUST-5})
which require a convex function to be \textit{greater} than a linear
term. However, it is possible to transform Problem (\ref{ROBUST})
- (\ref{ROBUST-6}) into a mixed-integer linear programming problem
(MILP) using standard techniques. We obtain the following MILP where
we use a new constant $M\gg0$:

\begin{align}
\min_{a,\,s,\,b,\,c,\,v,\,x,\,y}\, & t-v_{\vec{Y}}\label{MILP}\\
\text{s.t.}\, & t\geq\langle a,\,\vec{X}\rangle+b,\label{MILP-1}\\
 & t\geq c,\label{MILP-2}\\
 & \langle s_{\theta},\,\theta'-\theta\rangle+v_{\theta}+M\,x_{\theta,\,\theta'}\geq v_{\theta'}, & \forall\theta\ne\theta';\,\theta,\,\theta'\in\Theta,\label{MILP-3}\\
 & v_{\theta}+M\left(1-x_{\theta,\,\theta'}\right)\geq v_{\theta'}, & \forall\theta\ne\theta';\,\theta,\,\theta'\in\Theta,\label{MILP-4}\\
 & v_{\vec{0}}=0,\label{MILP-5}\\
 & v_{\vec{W}_{i}}\geq v_{\vec{Y}_{i}}, & \forall i\in\mathcal{I},\label{MILP-6}\\
 & s_{\theta}\geq0,\,\|s_{\theta}\|_{\infty}\leq L, & \forall\theta\in\Theta,\label{MILP-7}\\
 & \langle a,\,\theta\rangle+b+M\,y_{\theta}\geq v_{\theta}, & \forall\theta\in\Theta,\label{MILP-8}\\
 & c+M\left(1-y_{\theta}\right)\geq v_{\theta}, & \forall\theta\in\Theta,\label{MILP-9}\\
 & a\geq0,\,\|a\|_{\infty}\leq L,\label{MILP-10}\\
 & x_{\theta,\,\theta'}\in\left\{ 0,\,1\right\} , & \forall\theta\ne\theta';\,\theta,\,\theta'\in\Theta,\nonumber \\
 & y_{\theta}\in\left\{ 0,\,1\right\} , & \forall\theta\in\Theta.\nonumber 
\end{align}
In explanation, constraints (\ref{MILP-1}) and (\ref{MILP-2}) replace
the term $\max\left(\langle a,\,\vec{X}\rangle+b,\,c\right)$ that
appears in the objective of Problem (\ref{ROBUST}) - (\ref{ROBUST-6})
with linear terms via the epigraphical transformation. Constraints
(\ref{MILP-3}) and (\ref{MILP-4}) along with the binary constraints
on $x$ replace constraints (\ref{ROBUST-1}) with disjunctive constraints;
likewise, constraints (\ref{MILP-8}) and (\ref{MILP-9}) along with
the binary constraints on $y$ replace constraints (\ref{ROBUST-5})
with disjunctive constraints.
\begin{rem}
In terms of computation, Problem (\ref{MILP}) - (\ref{MILP-10})
can be effectively solved by Bender's decomposition as proposed in
\cite{Christensen2008,Rahmaniani2017,Taskin2010}. In \cite{Codato2006},
the authors explain that the linear programming relaxation of the
MILP is typically a poor approximation due to the big-$M$ coefficients.
In fact, the binary solutions of LP relaxations are only marginally
affected by the addition of continuous variables and the associated
constraints. Different choices of $M$ will affect the branching process
in the algorithm, as well as the constraints in the LP relaxation.
In recognition of this difficulty, these authors introduce ``Combinatorial
Bender's cuts'' that can generate more effective cuts for the master
problem and that also avoid the difficulty of choosing the constant
$M$. Alternatively, based on \cite{Khurana2005}, we could use a
convex hull reformulation for each disjunction in Problem (\ref{MILP})
- (\ref{MILP-10}). This reformulation will always give a tighter
bound than the big-$M$ formulation, at the expense of a large number
of variables and constraints. Both of these approaches converge to
the global optimum after finitely many iterations, and can improve
upon basic methods for solving Problem (\ref{MILP}) - (\ref{MILP-10}). 
\end{rem}

\subsection{The worst-case choice function}

We conclude this section by giving further details on the explicit
form of the worst-case choice function. In \cite{armbruster2015decision},
the worst-case utility function is shown to be piecewise linear concave
by using a support function argument. This is possible because the
worst-case utility function in question is a mapping from $\mathbb{R}$
to $\mathbb{R}$. Our present setting is more complicated because:
(i) we deal with multi-attribute prospects and (ii) we are concerned
with quasi-concave functions. Yet, as we will see shortly, our worst-case
choice function can also be constructed explicitly. The Delaunay triangulation,
defined next, is the key to this approach. 
\begin{defn}
\label{def:Delaunay} (i) A simplex $\sigma$ is a polytope in $\mathbb{R}^{d}$
such that $\sigma$ is the convex hull of $d+1$ affinely independent
points.

(ii) A simplicial complex $\mathcal{C}$ is a finite collection of
simplices such that: $\forall\sigma\in\mathcal{C}$, $\sigma$ is
a simplex; $\sigma_{1}\in\mathcal{C}$ and $\sigma_{1}\subset\sigma_{2}$
imply $\sigma_{2}\in\mathcal{C}$; and for any $\sigma_{1},\,\sigma_{2}\in\mathcal{C}$,
either $\sigma_{1}\cap\sigma_{2}=\emptyset$ or $\sigma_{1}\cap\sigma_{2}\in\mathcal{C}$.

(iii) A triangulation $\text{T}\left(\mathcal{X}\right)$ of a finite
set $\mathcal{X}\subset\mathbb{R}^{d}$ is a simplicial complex whose
vertices belong to $\mathcal{X}$ and whose union is the convex hull
of $\mathcal{X}$.

(iv) A circumsphere (circumscribed sphere) of a simplex $\sigma$
is a sphere that contains all of the vertices of $\sigma$.

(v) A Delaunay triangulation of a finite set $\mathcal{X}\subset\mathbb{R}^{d}$
is a triangulation $\text{DT}\left(\mathcal{X}\right)$ such that
no point in $\mathcal{X}$ is inside the circumsphere of any simplex
in $\text{DT}\left(\mathcal{X}\right)$. 
\end{defn}

We use the Delaunay triangulation $\text{DT}\left(\Theta\right)$
of $\Theta$ in our construction of the worst-case choice function.
The next theorem shows that the worst-case choice function (without
Lipschitz continuity) is piecewise constant, and follows directly
from Theorem \ref{thm:Support-quasiconcave}(iii). 
\begin{thm}
Choose $v\in\mathbb{R}^{|\Theta|}$ such that $S\left(v\right)\cap\mathcal{R}_{iqv}\ne\emptyset$,
then $\hat{\rho}\left(\cdot\right)\triangleq\inf_{\rho\in S\left(v\right)\cap\mathcal{R}_{iqv}}\rho\left(\cdot\right)$
is given by 
\[
\hat{\rho}\left(X\right)=\max\left\{ \min_{\theta\in\sigma\cap\Theta}v_{\theta}\text{ : }\vec{X}\in\sigma,\,\sigma\in\text{DT}\left(\Theta\right)\right\} .
\]
\end{thm}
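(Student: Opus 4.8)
The plan is to apply Theorem~\ref{thm:Support-quasiconcave}(iii) with the specific finite set $\Theta$ and the given values $\{v_\theta\}_{\theta\in\Theta}$, and then to show that the quasi-concave envelope $\hat f$ produced by that theorem coincides with $\hat\rho$, i.e. with the pointwise infimum over $S(v)\cap\mathcal{R}_{iqv}$; finally I would give an explicit description of $\hat f$ in terms of the Delaunay triangulation $\text{DT}(\Theta)$. Since $S(v)\cap\mathcal{R}_{iqv}\ne\emptyset$, every $\rho$ in this set is quasi-concave with $\rho(\theta)=v_\theta$ for all $\theta\in\Theta$, hence by the ``lowest quasi-concave function'' part of Theorem~\ref{thm:Support-quasiconcave}(iii) we have $\hat f\le\rho$ on all of $\mathbb{R}^{n|\Omega|}$, so $\hat f\le\hat\rho$. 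For the reverse inequality I would argue that $\hat f$ (restricted to $\mathcal{L}$ via the identification $X\leftrightarrow\vec X$) is itself a member of $S(v)\cap\mathcal{R}_{iqv}$: it is quasi-concave by Theorem~\ref{thm:Support-quasiconcave}(iii), it takes the values $v_\theta$ on $\Theta$ (the theorem gives $\hat f(\theta)\ge v_\theta$, and the envelope property together with the existence of \emph{some} $\rho$ with $\rho(\theta)=v_\theta$ forces equality), and it is increasing because the defining hockey-stick functions can be taken with $a\ge 0$. Once $\hat f\in S(v)\cap\mathcal{R}_{iqv}$, it follows that $\hat\rho\le\hat f$, giving $\hat\rho=\hat f$.

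The remaining work is to translate the envelope $\hat f$ into the Delaunay-triangulation formula. The key fact from Theorem~\ref{thm:Support-quasiconcave}(iii) is that the upper level sets of $\hat f$ are exactly $\text{conv}\{\theta\in\Theta:v_\theta\ge t\}$. I would use this to compute $\hat\rho(X)$ for a fixed $X$ with $\vec X$ in the convex hull of $\Theta$: we have $\hat\rho(X)=\sup\{t:\vec X\in\text{conv}\{\theta:v_\theta\ge t\}\}$. Next I would invoke Carathéodory-type reasoning inside a triangulation: since $\text{DT}(\Theta)$ is a triangulation of $\Theta$, $\vec X$ lies in some simplex $\sigma\in\text{DT}(\Theta)$ whose vertices are points of $\Theta$, and for any simplex $\sigma$ containing $\vec X$, membership $\vec X\in\text{conv}(\sigma\cap\Theta)$ holds iff $t\le\min_{\theta\in\sigma\cap\Theta}v_\theta$. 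Taking the best simplex gives $\hat\rho(X)=\max\{\min_{\theta\in\sigma\cap\Theta}v_\theta:\vec X\in\sigma,\ \sigma\in\text{DT}(\Theta)\}$. The ``$\le$'' direction here is the subtle one: I must show that if $\vec X\in\text{conv}\{\theta:v_\theta\ge t\}$ then there is a \emph{single} Delaunay simplex whose vertices all have $v_\theta\ge t$ and which contains $\vec X$; this is where the combinatorial structure of the triangulation (simplices with vertices in $\Theta$, faces glued along faces) does the real work, essentially a Carathéodory argument refined to respect the fixed triangulation.

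I expect the main obstacle to be exactly that last point — verifying that the supremum over level-set values is attained by a single simplex of $\text{DT}(\Theta)$ rather than only by a convex combination of several simplices' vertices. One has to be careful that $\vec X$ may lie on a shared face of several Delaunay simplices; on such a face the relevant vertices are a common subset, and the $\min$ over that face's vertices is what matters, so the formula still holds, but this requires checking that the triangulation is a genuine simplicial complex (faces meet in faces), which is part of Definition~\ref{def:Delaunay}. A secondary, more routine point is the degenerate case where $\vec X$ is not in $\text{conv}(\Theta)$ at all; there $\hat\rho(X)$ takes the value $\max_\theta\{v_\theta:v_\theta<t\text{ for all relevant }t\}$, matching the construction in the proof of Theorem~\ref{thm:Support-quasiconcave}(iii), and one should note whether the statement implicitly restricts to $\vec X\in\text{conv}(\Theta)$ or whether the $\max$ over an empty collection of simplices is handled separately.
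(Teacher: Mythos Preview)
Your approach is exactly the paper's: the paper's entire proof is the sentence ``follows directly from Theorem~\ref{thm:Support-quasiconcave}(iii),'' so your first step---identifying $\hat\rho$ with the quasi-concave envelope $\hat f$ of that theorem via the two-sided inequality---is already more than the paper supplies, and is correct.

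Where you go beyond the paper is in trying to justify the Delaunay formula itself, and the obstacle you flag is genuine and is \emph{not} addressed in the paper. From Theorem~\ref{thm:Support-quasiconcave}(iii) one gets $\{\hat f\ge t\}=\operatorname{conv}\{\theta:v_\theta\ge t\}$, hence $\hat\rho(X)=\sup\{t:\vec X\in\operatorname{conv}\{\theta:v_\theta\ge t\}\}$. The passage from this to the stated $\max/\min$ over Delaunay simplices requires that whenever $\vec X\in\operatorname{conv}\{\theta:v_\theta\ge t\}$ there is a simplex of $\mathrm{DT}(\Theta)$ containing $\vec X$ whose vertices all lie in $\{\theta:v_\theta\ge t\}$. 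That is not automatic for an \emph{arbitrary} triangulation of $\Theta$, and for a Delaunay triangulation it can fail in degenerate (cocircular/cospherical) configurations where $\mathrm{DT}(\Theta)$ is non-unique. The hypothesis $S(v)\cap\mathcal{R}_{iqv}\neq\emptyset$ rules out the obvious counterexamples (no $\theta$ lies in the convex hull of strictly higher-valued points), but you would still need a geometric argument tying the empty-circumsphere property to the nesting of the level sets $\operatorname{conv}\{\theta:v_\theta\ge t\}$. The paper does not supply this argument, so your caution here is well placed rather than a defect in your plan.

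Two minor points on your first step: to place $\hat f$ in $S(v)\cap\mathcal{R}_{iqv}$ you also need upper semi-continuity (which holds since $\hat f$ is an infimum of continuous functions) and monotonicity; the latter does not follow merely from ``the hockey-sticks can be taken with $a\ge0$'' in Theorem~\ref{thm:Support-quasiconcave}(iii), but it does follow from the level-set description $\{\hat f\ge t\}=\operatorname{conv}\{\theta:v_\theta\ge t\}$ together with the existence of an increasing $\rho\in S(v)\cap\mathcal{R}_{iqv}$ sandwiching $\hat f$.
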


Next we consider the case where Lipschitz continuity is enforced.
This result also follows directly from Theorem \ref{thm:Support-quasiconcave}(iii),
only now the worst-case choice function is piecewise linear.
\begin{thm}
Choose $v\in\mathbb{R}^{|\Theta|}$ such that $S\left(v\right)\cap\mathcal{R}_{iqv}\ne\emptyset$,
then $\hat{\rho}\left(\cdot\right)\triangleq\inf_{\rho\in S\left(v\right)\cap\mathcal{R}_{iqv}}\rho\left(\cdot\right)$
is given by 
\[
\hat{\rho}\left(X\right)=\left\{
\langle\alpha_{\sigma},\,\vec{X}\rangle+\beta_{\sigma}, \vec{X}\in\sigma,\,\sigma\in\text{DT}\left(\Theta\right)\right\},
\]
where $\langle\alpha_{\sigma},\,\theta\rangle+\beta_{\sigma}=v_{\theta}$
for all $\theta\in\sigma$ and $\Theta$, and $\sigma\in\text{DT}\left(\Theta\right)$.
\end{thm}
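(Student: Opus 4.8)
The plan is to derive this directly from Theorem~\ref{thm:Support-quasiconcave}(iii), while keeping track of the effect of the $L$-Lipschitz restriction on the admissible hockey-stick support functions (so that $\hat{\rho}$ is understood, as the surrounding text indicates, as the infimum over the $L$-Lipschitz members of $S(v)\cap\mathcal{R}_{iqv}$). First I would construct the candidate function: for each full-dimensional $\sigma\in\text{DT}(\Theta)$ with vertices $\theta_{0},\dots,\theta_{d}\in\Theta$, affine independence of the vertices of a simplex gives a unique pair $(\alpha_{\sigma},\beta_{\sigma})$ with $\langle\alpha_{\sigma},\theta_{i}\rangle+\beta_{\sigma}=v_{\theta_{i}}$ for $i=0,\dots,d$; this defines the piecewise-affine function $g$ on $\text{conv}(\Theta)$ appearing on the right-hand side. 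I would then check that $g$ is well defined and continuous: any two simplices of $\text{DT}(\Theta)$ meet in a common face whose vertices lie in $\Theta$, so the corresponding affine interpolants agree there.

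Next, using the hypothesis $S(v)\cap\mathcal{R}_{iqv}\ne\emptyset$, I would show that $g$ is increasing and quasi-concave; the real work is to show that each upper level set $\{X:g(X)\ge t\}$ is convex, and I expect the Delaunay (empty-circumsphere) property to be essential here, since it governs how adjacent affine pieces meet so that no level set develops a non-convex notch. Since $g$ is then increasing, quasi-concave, $L$-Lipschitz, and interpolates $\{v_{\theta}\}_{\theta\in\Theta}$, it is one of the competitors in the infimum defining $\hat{\rho}$, whence $\hat{\rho}\le g$ on $\text{conv}(\Theta)$.

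For the reverse inequality I would take an arbitrary competitor $\rho$ (increasing, quasi-concave, $L$-Lipschitz, with $\rho(\theta)=v_{\theta}$ on $\Theta$) and a point $X$ with $\vec{X}\in\sigma$, $\sigma\in\text{DT}(\Theta)$, and show $\rho(X)\ge\langle\alpha_{\sigma},\vec{X}\rangle+\beta_{\sigma}$. Writing $\vec{X}=\sum_{i}\lambda_{i}\theta_{i}$ as a convex combination of the vertices of $\sigma$, quasi-concavity gives only $\rho(X)\ge\min_{i}v_{\theta_{i}}$, which is the piecewise-\emph{constant} bound of the previous theorem; the content is to upgrade this to the affine interpolant by using that the hockey-stick supports of $\rho$ have slope at most $L$ (equivalently, by analyzing the upper level sets of $\rho$ via Theorem~\ref{thm:Support-quasiconcave}(iii) and showing that inside $\sigma$ they cannot be flatter than those of $g$). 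Combining the two inequalities, and invoking Theorem~\ref{thm:Support-quasiconcave}(iii) to identify $\hat{\rho}$ as precisely the lowest such function, yields $\hat{\rho}=g$ on $\text{conv}(\Theta)$.

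The main obstacle is this tightness step: since quasi-concavity by itself delivers only the piecewise-constant bound, the $L$-Lipschitz slope bound must be combined carefully with the geometry of the Delaunay cells to recover exactly the affine interpolant. A secondary point requiring care is that it is genuinely the Delaunay triangulation, rather than an arbitrary triangulation of $\Theta$, for which the piecewise-affine interpolant of quasi-concave-compatible data is again quasi-concave; I would isolate this as a lemma, with the empty-circumsphere condition as the key ingredient, before assembling the proof.
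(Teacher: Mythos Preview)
The paper itself offers no detailed proof here: the sentence preceding the theorem simply asserts that ``this result also follows directly from Theorem~\ref{thm:Support-quasiconcave}(iii), only now the worst-case choice function is piecewise linear,'' and then moves on. Your proposal is far more careful than that, and you correctly isolate the two nontrivial issues the paper glosses over: (a) that the piecewise-affine Delaunay interpolant of quasi-concave-compatible data is itself quasi-concave, and (b) that every $L$-Lipschitz competitor lies above it.

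However, your plan for the reverse inequality has a genuine gap, and it is one the paper's bare citation does not resolve either. Quasi-concavity alone gives only $\rho(X)\ge\min_{i}v_{\theta_{i}}$ on a simplex, and the $L$-Lipschitz bound does \emph{not} in general upgrade this to the affine interpolant unless $L$ is tight on each Delaunay cell. A one-dimensional example already shows this: take $\Theta=\{0,2\}$, $v_{0}=0$, $v_{2}=2$, and $L=2$. The affine interpolant at $x=1$ equals $1$, but $\rho(x)=\max\{0,\,2x-2\}$ is increasing, quasi-concave, $2$-Lipschitz, interpolates the data, and yet $\rho(1)=0<1$; hence $\hat{\rho}(1)\le 0$, strictly below the affine interpolant. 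Thus the statement as written requires an additional hypothesis (essentially that $L$ coincides with the maximal slope of the affine interpolant over the Delaunay cells), and your proposed ``tightness'' step cannot go through without it. You have in fact put your finger on exactly the place where the paper's hand-wave fails; the honest conclusion is that the theorem needs a sharper hypothesis, not a cleverer proof.
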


\section{Level set representation}

The previous section emphasized support functions as a computational
tool which leads to an MILP formulation for the robust choice function
$\psi\left(X;\,\mathcal{S},\,Y\right)$. In this section, we take
an alternative approach and focus on the so-called level set representation
for quasi-concave choice functions. This result is important because
it generalizes beyond worst-case choice functions and offers a common
framework for representation of multi-attribute choice functions.
Furthermore, this approach has not yet been seen in the work on robust
preference utility/risk models.

\subsection{Risk measures}

Risk measures are the core ingredient for our level set representation.
To begin, we formally define risk measures for the multi-attribute
setting, where we are particularly concerned with convex risk measures
which play a major role in risk-aware optimization (see \cite{artzner1999coherent,Ruszczynski2006a}).
\begin{defn}
\label{def:risk} A function $\mu\mbox{ : }\mathcal{L}\rightarrow\mathbb{R}$
is a convex risk measure if it satisfies

(i) Monotonicity: If $X\leq Y$ then $\mu\left(X\right)\geq\mu\left(Y\right)$.

(ii) Normalization: $\mu(0)=0$.

(iii) Convexity: for any $X,\,Y\in\mathcal{L}$ and $\lambda\in\left[0,\,1\right]$,
$\mu\left(\lambda\,X+\left(1-\lambda\right)Y\right)\leq\lambda\,\mu\left(X\right)+\left(1-\lambda\right)\mu\left(Y\right)$. 
\end{defn}

Since we treat prospects in $\mathcal{L}$ as gains/rewards, our definition
of monotonicity above is the opposite of the typical definition of
monotonicity for losses. Convexity of risk measures is extremely important
for our considerations because of the prominent role of convexity
in optimization. Note that we do not yet stipulate a property of translation
invariance for the multi-attribute setting, we give further commentary
on this issue later.

Any risk measure $\mu\mbox{ : }\mathcal{L}\rightarrow\mathbb{R}$
induces a set of ``acceptable'' prospects in the sense that a prospect
$X$ is acceptable if $\mu\left(X\right)\leq0$, i.e. it has nonpositive
risk according to $\mu$. We formalize this idea in the following
definition. 
\begin{defn}
\label{def:measure_acceptance} Let $\mu\mbox{ : }\mathcal{L}\rightarrow\mathbb{R}$
be a risk measure, the set $\mathcal{A}_{\mu}\triangleq\left\{ X\in\mathcal{L}\mbox{ : }\mu\left(X\right)\leq0\right\} $
is the acceptance set associated with $\mu$. 
\end{defn}

Note that $\mathcal{A}_{\mu}$ is a convex set in $\mathcal{L}$ whenever
$\mu$ is a convex risk measure. Conversely, given an acceptance set
$\mathcal{A}\subset\mathcal{L}$ we can specify a risk measure $\mu_{\mathcal{A},\,d}\left(X\right)\triangleq\inf_{\alpha\in\mathbb{R}}\left\{ X+\alpha\,d\in\mathcal{A}\right\} $
for some $d\in\mathbb{R}^{n}$ with $d>0$. We interpret $\mu_{\mathcal{A},\,d}\left(X\right)d$
as the vector-valued amount that must be added to $X$ to make $X$
acceptable to the decision maker.

For a choice function $\rho\in\mathcal{R}_{iqv}$ which evaluates
the fitness of prospects in $\mathcal{L}$, we seek a family of convex
risk functions $\left\{ \mu_{k}\right\} _{k\in\mathbb{R}}$ such that:
\begin{equation}
\left\{ X\in\mathcal{L}\text{ : }\rho\left(X\right)\geq k\right\} =\left\{ X\in\mathcal{L}\text{ : }\mu_{k}\left(X\right)\leq0\right\} ,\,\forall k\in\mathbb{R}.\label{eq:level}
\end{equation}
Relationship (\ref{eq:level}) means that the upper level sets of
the choice function $\rho$ can be characterized by the acceptance
sets of a family of convex risk functions. The acceptance sets are
closed and convex since $\rho$ is assumed to be upper-semicontinuous
and quasi-concave. From a practical point of view, if a decision maker
with choice function $\rho$ selects satisfaction level $k$, then
the prospect of exceeding $k$ is equivalent to the prospect of the
risk being less than or equal to zero under measure $\mu_{k}$. This
perspective is related to the notion of satisficing measures developed
in \cite{brown2009satisficing}. We interpret (\ref{eq:level}) to
mean that the set of satisfiable prospects can be characterized by
the acceptance sets of a sequence of risk measures.

In the case of relationship (\ref{eq:level}), we have the representation

\begin{equation}
\rho\left(X\right)=\sup\left\{ k\in\mathbb{R}\text{ : }\mu_{k}\left(X\right)\leq0\right\} ,\,\forall X\in\mathcal{L}.\label{eq:representation}
\end{equation}
This equivalence is established in Proposition \ref{prop:Basic_level}
in the Appendix. Representation (\ref{eq:representation}) is valuable
both from a theoretical perspective and a computational one. Theoretically,
it reveals the connection between quasi-concave choice functions and
convex risk measures. Computationally, it allows us to evaluate quasi-concave
choice functions with a sequence of convex feasibility problems and
a bisection algorithm. 
\begin{rem}
A related form of representation (\ref{eq:representation}) is considered
for univariate prospects in \cite{brown2012aspirational}. As an illustrative
example, \cite[Example 5]{brown2012aspirational} considers the case
where the $\left\{ \mu_{k}\right\} _{k\in\mathbb{R}}$ are given by
CVaR. Furthermore, the univariate form of Representation (\ref{eq:representation})
can be viewed as a generalization of the shortfall risk measure from
\cite[Section 4.3]{follmer2002convex}, \cite{acerbi2002coherence},
\cite[Section 3]{weber2006distribution}, and \cite{giesecke2008measuring}.
In this case, for a convex loss function $l\text{ : }\mathbb{R}\rightarrow\mathbb{R}$
we may take $\mu_{k}\left(X\right)=\mathbb{E}\left[l\left(X-k\right)\right]$
for all $k\in\mathbb{R}$. 
\end{rem}

In representation (\ref{eq:representation}), we may also take $\mu_{k}$
to be any support function of $\rho$ at $\rho\left(X\right)=k$.
We now consider the family $\{\mu_{k}\}_{k\in\mathbb{R}}$ more carefully,
and we require it to satisfy the following assumptions. 
\begin{assumption}
\label{assu:representation} $\left\{ \mu_{k}\right\} _{k\in\mathbb{R}}$
is a family of risk measures which satisfies the following: 
\begin{itemize}
\item[(i)] for each fixed $k\in\mathbb{R}$, $\mu_{k}\left(\cdot\right)$ is
monotonically decreasing (non-increasing) on $\mathcal{L}$; 
\item[(ii)] for each fixed $k\in\mathbb{R}$, $\mu_{k}\left(\cdot\right)$ is
convex on $\mathcal{L}$; 
\item[(iii)] for each fixed $k\in\mathbb{R}$, $\mu_{k}(\cdot)$ has closed acceptance
sets; 
\item[(iv)] for each fixed $X\in\mathcal{L}$, $\mu_{k}\left(X\right)$ is monotonically
increasing (non-decreasing) in $k$ over $\mathbb{R}$. 
\end{itemize}
\end{assumption}

Property (iv) means that for fixed $X\in\mathcal{L}$, $\mu_{k_{1}}(X)\leq\mu_{k_{2}}(X)$,
that is, $\mu_{k_{2}}(\cdot)$ assigns a higher risk value $\mu_{k_{1}}(\cdot)$
for $X$.

Our next technical result shows that we can use the set of risk measures
$\{\mu_{k}\}_{k\in\mathbb{R}}$ satisfying Assumption \ref{assu:representation}
to construct a choice function $\rho\in\mathcal{R}_{iqv}$. Specifically,
we may define $\rho\left(X\right)$ to be the highest index level
$k$ such that the corresponding risk of $X$ is acceptable, i.e.
$\mu_{k}\left(X\right)\leq0$. The conclusion follows directly from
Proposition \ref{prop:Basic_choice} in the Appendix. 
\begin{thm}
\label{thm:representation} Suppose Assumption \ref{assu:representation}
holds for $\{\mu_{k}\}_{k\in\mathbb{R}}$ and let 
\begin{equation}
\vartheta\left(X\right)\triangleq\sup\left\{ k\in\mathbb{R}\text{ : }\mu_{k}\left(X\right)\leq0\right\} ,\,\forall X\in\mathcal{L}.\label{eq:representation-1}
\end{equation}
Then $\vartheta$ is upper semi-continuous, increasing, and quasi-concave. 
\end{thm}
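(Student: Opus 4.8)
The plan is to establish each of the three properties of $\vartheta$ defined by \eqref{eq:representation-1} directly from the corresponding properties of the family $\{\mu_k\}_{k\in\mathbb{R}}$ in Assumption \ref{assu:representation}, using the sub-level (acceptance) set description. The key observation is that for any $t\in\mathbb{R}$, the upper level set $\{X\in\mathcal{L}:\vartheta(X)\geq t\}$ is essentially controlled by whether $\mu_k(X)\leq 0$ for values of $k$ near $t$; more precisely, property (iv) (monotonicity of $k\mapsto\mu_k(X)$) implies that $\{X:\mu_k(X)\leq 0\}$ is a nested decreasing family of sets in $k$, and hence $\vartheta(X)=\sup\{k:\mu_k(X)\leq 0\}$ behaves like an inverse of this nested family. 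The natural route is therefore to prove the identity
\[
\left\{X\in\mathcal{L}:\vartheta(X)\geq t\right\}=\bigcap_{k<t}\left\{X\in\mathcal{L}:\mu_k(X)\leq 0\right\},
\]
which follows from the definition of supremum together with the monotonicity in property (iv), and then read off all three conclusions from this.

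\textbf{Quasi-concavity.} Each set $\{X:\mu_k(X)\leq 0\}=\mathcal{A}_{\mu_k}$ is convex by property (ii) (convexity of $\mu_k$). An arbitrary intersection of convex sets is convex, so $\{X:\vartheta(X)\geq t\}$ is convex for every $t$, which is exactly quasi-concavity of $\vartheta$.

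\textbf{Monotonicity.} Fix $X\leq Z$ in $\mathcal{L}$. For each $k$, property (i) (monotonicity of $\mu_k$, which decreases in the prospect since prospects are rewards) gives $\mu_k(Z)\leq\mu_k(X)$, so $\mu_k(X)\leq 0$ implies $\mu_k(Z)\leq 0$. Taking the supremum over such $k$ yields $\vartheta(X)\leq\vartheta(Z)$.

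\textbf{Upper semi-continuity.} This is the step I expect to be the main obstacle, since it requires the level sets $\{X:\vartheta(X)\geq t\}$ to be \emph{closed}. From the displayed identity this set is an intersection of the sets $\mathcal{A}_{\mu_k}$, each of which is closed by property (iii). An arbitrary intersection of closed sets is closed, so $\{X:\vartheta(X)\geq t\}$ is closed for all $t$, and a function whose upper level sets are all closed is upper semi-continuous. The delicate point is justifying the identity itself: one must check that if $\mu_k(X)\leq 0$ for all $k<t$ then indeed $\vartheta(X)\geq t$ (immediate from the definition of $\sup$), and conversely that $\vartheta(X)\geq t$ forces $\mu_k(X)\leq 0$ for every $k<t$ — here one uses that if $k<t\leq\vartheta(X)=\sup\{k':\mu_{k'}(X)\leq 0\}$, then there exists $k'$ with $k<k'$ and $\mu_{k'}(X)\leq 0$, whence by property (iv) $\mu_k(X)\leq\mu_{k'}(X)\leq 0$. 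Since all of this is established in Proposition \ref{prop:Basic_choice} in the Appendix, the proof of the theorem reduces to simply invoking that proposition; I would state the short argument above for completeness and then cite the Appendix for the details.
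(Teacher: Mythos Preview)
Your proposal is correct and, at the top level, identical to the paper's proof: the paper simply states that the theorem follows directly from Proposition \ref{prop:Basic_choice} in the Appendix, exactly as you suggest citing it.

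Your self-contained argument, however, differs slightly from the paper's appendix proof in a way worth noting. The paper's proof of Proposition \ref{prop:Basic_choice} works with the single acceptance set $\mathcal{A}_{\mu_k}$ at level $k$ itself: it argues that $\vartheta(X_i)\geq k$ implies $\mu_k(X_i)\leq 0$, which requires the left-continuity hypothesis $\mu_k(X)=\lim_{k'\uparrow k}\mu_{k'}(X)$ stated in that proposition (and similarly in the quasi-concavity step). Your identity $\{X:\vartheta(X)\geq t\}=\bigcap_{k<t}\mathcal{A}_{\mu_k}$ sidesteps this by taking the intersection over strictly smaller $k$, so only the monotonicity in $k$ (Assumption \ref{assu:representation}(iv)) is needed. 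Since Assumption \ref{assu:representation} does not actually list left-continuity among its hypotheses, your version is in fact better aligned with the stated assumptions of Theorem \ref{thm:representation} than the appendix proof the paper invokes. Otherwise the two arguments are the same in spirit: both read off monotonicity, quasi-concavity, and upper semi-continuity from the fact that the upper level sets of $\vartheta$ are (intersections of) the convex closed acceptance sets $\mathcal{A}_{\mu_k}$.
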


In the case when $\mu_{k}(X)\leq0$ for all $k\in\mathbb{R}$, we
have $\rho(X)=+\infty$. In order for $\rho\left(X\right)$ to be
finite valued, $\mu_{k}\left(X\right)$ must increase substantially
w.r.t. increase of $k$. This situation is similar to the case of
utility shortfall risk measures where the underlying loss function
must be strictly increasing from some point, see \cite{follmer2002convex}.

We now consider the reverse implication of the previous theorem, which
shows that there is such a representation (\ref{eq:representation})
for any choice function $\rho\in\mathcal{R}_{iqv}$. As we will see
in the following proof, the choice of this representation is not unique. 
\begin{thm}
\label{thm:representation-1} For any $\rho\in\mathcal{R}_{iqv}$,
there exists a family $\left\{ \mu_{k}\right\} {}_{k\in\mathbb{R}}$
satisfying Assumption \ref{assu:representation} such that 
\[
\rho\left(X\right)=\sup\left\{ k\in\mathbb{R}\text{ : }\mu_{k}\left(X\right)\leq0\right\} ,\,\forall X\in\mathcal{L}.
\]
\end{thm}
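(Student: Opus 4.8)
The plan is to build the family $\{\mu_{k}\}_{k\in\mathbb{R}}$ directly from the upper level sets of $\rho$, using the direction-gauge construction already introduced in Definition~\ref{def:measure_acceptance}. Fix any $d\in\mathbb{R}^{n}$ with $d>0$ componentwise, and for each $k\in\mathbb{R}$ set $\mathcal{A}_{k}\triangleq\left\{ X\in\mathcal{L}\text{ : }\rho\left(X\right)\ge k\right\}$. Since $\rho\in\mathcal{R}_{iqv}$, each $\mathcal{A}_{k}$ is closed (upper semi-continuity of $\rho$), convex (quasi-concavity of $\rho$), and upward closed in the sense that $X\in\mathcal{A}_{k}$ and $Y\ge X$ imply $Y\in\mathcal{A}_{k}$ (monotonicity of $\rho$). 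Then define
\[
\mu_{k}\left(X\right)\triangleq\inf\left\{ \alpha\in\mathbb{R}\text{ : }X+\alpha\,d\in\mathcal{A}_{k}\right\} ,
\]
with the convention $\inf\emptyset=+\infty$. The non-uniqueness referred to in the statement is already visible here: the direction $d$ is free, and, as noted just before the theorem, one could instead take each $\mu_{k}$ to be a support function of $\rho$ along the level $k$.

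The first step is to identify the acceptance set of $\mu_{k}$ with $\mathcal{A}_{k}$. The inclusion $\mathcal{A}_{k}\subseteq\left\{ X\text{ : }\mu_{k}\left(X\right)\le0\right\}$ is immediate by choosing $\alpha=0$. For the reverse, if $\mu_{k}\left(X\right)\le0$ is finite then there are $\alpha_{j}\downarrow\mu_{k}\left(X\right)$ with $X+\alpha_{j}d\in\mathcal{A}_{k}$; passing to the limit with the closedness of $\mathcal{A}_{k}$ gives $X+\mu_{k}\left(X\right)d\in\mathcal{A}_{k}$, and since $\mu_{k}\left(X\right)\le0$ and $d>0$ we have $X\ge X+\mu_{k}\left(X\right)d$, whence $X\in\mathcal{A}_{k}$ by upward-closedness (the degenerate case $\mu_{k}\left(X\right)=-\infty$ is easier still: pick any $\alpha<0$ with $X+\alpha d\in\mathcal{A}_{k}$ and argue directly by monotonicity). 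Thus $\left\{ X\text{ : }\mu_{k}\left(X\right)\le0\right\} =\left\{ X\text{ : }\rho\left(X\right)\ge k\right\}$ for every $k$, which is exactly relationship~(\ref{eq:level}). The desired representation then follows either from Proposition~\ref{prop:Basic_level} or, since $\rho$ is real-valued on $\mathcal{R}_{iqv}$, from the one-line identity
\[
\sup\left\{ k\in\mathbb{R}\text{ : }\mu_{k}\left(X\right)\le0\right\} =\sup\left\{ k\in\mathbb{R}\text{ : }\rho\left(X\right)\ge k\right\} =\rho\left(X\right).
\]

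It remains to verify that $\{\mu_{k}\}_{k\in\mathbb{R}}$ satisfies Assumption~\ref{assu:representation}. Part~(iii) is already in hand, since the acceptance set of $\mu_{k}$ is the closed set $\mathcal{A}_{k}$. For part~(i), if $X\le Y$ then $\left\{ \alpha\text{ : }X+\alpha d\in\mathcal{A}_{k}\right\} \subseteq\left\{ \alpha\text{ : }Y+\alpha d\in\mathcal{A}_{k}\right\}$ by upward-closedness, so $\mu_{k}\left(Y\right)\le\mu_{k}\left(X\right)$. For part~(iv), $k_{1}\le k_{2}$ gives $\mathcal{A}_{k_{1}}\supseteq\mathcal{A}_{k_{2}}$, hence $\mu_{k_{1}}\left(X\right)\le\mu_{k_{2}}\left(X\right)$. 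Part~(ii) is the standard convex-gauge computation: for $\lambda\in[0,1]$ and any $\alpha_{i}'>\mu_{k}\left(X_{i}\right)$ with $X_{i}+\alpha_{i}'d\in\mathcal{A}_{k}$ ($i=1,2$), convexity of $\mathcal{A}_{k}$ gives $\left(\lambda X_{1}+\left(1-\lambda\right)X_{2}\right)+\left(\lambda\alpha_{1}'+\left(1-\lambda\right)\alpha_{2}'\right)d\in\mathcal{A}_{k}$, so $\mu_{k}\left(\lambda X_{1}+\left(1-\lambda\right)X_{2}\right)\le\lambda\alpha_{1}'+\left(1-\lambda\right)\alpha_{2}'$, and letting $\alpha_{i}'\downarrow\mu_{k}\left(X_{i}\right)$ concludes.

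I do not anticipate any single hard estimate here; the substance is organizational, namely assembling the closedness, convexity, and monotonicity of the level sets so that all four parts of Assumption~\ref{assu:representation} hold simultaneously for one and the same construction. The one point worth a word of care is the behaviour at the extremes of the range of $\rho$: when $\mathcal{A}_{k}=\emptyset$ one gets $\mu_{k}\equiv+\infty$, and when $\mathcal{A}_{k}=\mathcal{L}$ one gets $\mu_{k}\equiv-\infty$, so at such levels the ``risk measure'' is only extended-real-valued. Every verification above still goes through (vacuously) in those cases, and such levels do not affect representation~(\ref{eq:representation}); they correspond exactly to the $+\infty$ phenomenon noted after Theorem~\ref{thm:representation}. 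If genuinely real-valued members are insisted upon, restricting $k$ to the essential range of $\rho$ together with a routine rescaling suffices and leaves the level sets, hence the representation, unchanged.
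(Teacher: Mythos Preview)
Your proof is correct and coincides with one of the three constructions the paper gives---specifically construction~(iii), the direction-gauge $\mu_{k}(X)=\inf\{\alpha:\rho(X+\alpha d)\ge k\}$ for a fixed $d>0$; the paper defers the verification of Assumption~\ref{assu:representation} to its Proposition~\ref{prop:Basic_risk}, whereas you spell out the monotonicity, convexity, closed-acceptance-set, and $k$-monotonicity arguments directly. The only difference worth noting is that the paper also exhibits two further families---the indicator $\mu_{k}(X)=0$ if $\rho(X)\ge k$ and $+\infty$ otherwise, and the distance $\mu_{k}(X)=\mathrm{dist}(X,\{\rho\ge k\})$---to underline the non-uniqueness you mention in passing.
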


\begin{proof}
We provide a constructive proof for the claim and demonstrate that
$\mu_{k}$ can be constructed in three different ways.

(i) Let 
\[
\mu_{k}\left(X\right)=\begin{cases}
0 & \rho\left(X\right)\geq k,\\
\infty & \text{otherwise}.
\end{cases}
\]
Then 
\[
\left\{ X\in\mathcal{L}\text{ : }\rho\left(X\right)\geq k\right\} =\left\{ X\in\mathcal{L}\text{ : }\mu_{k}\left(X\right)\leq0\right\} ,
\]
for all $k\geq0$, and $\rho\left(X\right)=\sup\left\{ k\in\mathbb{R}\text{ : }\mu_{k}\left(X\right)\leq0\right\} $.
It is immediate that the proposed $\left\{ \mu_{k}\right\} _{k\in\mathbb{R}}$
are increasing in $k$, decreasing by monotonicity of $\rho\in\mathcal{R}_{iqv}$,
are convex, and have closed acceptance sets.

(ii) Let 
\[
\mu_{k}\left(X\right)=\text{dist}\left(X,\,\left\{ X\in\mathcal{L}\text{ : }\rho\left(X\right)\geq k\right\} \right).
\]
Then $\left\{ X\in\mathcal{L}\text{ : }\rho\left(X\right)\geq k\right\} =\left\{ X\in\mathcal{L}\text{ : }\mu_{k}\left(X\right)\leq0\right\} $
and $\rho\left(X\right)=\sup\left\{ k\in\mathbb{R}\text{ : }\mu_{k}\left(X\right)\leq0\right\} $.
Again, it is immediate that the proposed $\left\{ \mu_{k}\right\} _{k\in\mathbb{R}}$
satisfy the criteria of Assumption \ref{assu:representation} by choice
of $\rho\in\mathcal{R}_{iqv}$.

(iii) Let $d\in\mathbb{R}^{n}$ with $d>0$ (component-wise strict
inequality), and let 
\[
\mu_{k}\left(X\right)=\inf\left\{ a\in\mathbb{R}\text{ : }\rho\left(X+a\,d\right)\geq k\right\} .
\]
Then $\left\{ X\in\mathcal{L}\text{ : }\rho\left(X\right)\geq k\right\} =\left\{ X\in\mathcal{L}\text{ : }\mu_{k}\left(X\right)\leq0\right\} $
and $\rho\left(X\right)=\sup\left\{ k\in\mathbb{R}\text{ : }\mu_{k}\left(X\right)\leq0\right\} $.
\textcolor{black}{This choice of $d$ has an important interpretation
in multi-attribute decision-making, specifically, this $d$ reveals
the decision maker's endogenous weights among the multiple attributes.
}The desired properties of the proposed $\left\{ \mu_{k}\right\} _{k\in\mathbb{R}}$
then follow from Proposition \ref{prop:Basic_risk}. 
\end{proof}
It follows from the previous two theorems that there is an equivalence
between $\rho\in\mathcal{R}_{iqv}$ and families $\left\{ \mu_{k}\right\} _{k\in\mathbb{R}}$
satisfying Assumption \ref{assu:representation}. Theorems \ref{thm:representation}
and \ref{thm:representation-1} together complete what we call the
``level set representation'' for a choice function $\rho\in\mathcal{R}_{iqv}$.
These theorems are closely related to the results in \cite{brown2009satisficing,brown2012aspirational}
for univariate choice functions, we discuss this relationship further
later in the paper.

\subsection{Connection with support functions}

In this subsection we briefly comment on the connection between our
earlier support function representation and our level set representation. 
\begin{thm}
\label{thm:Connection} Let $\rho\left(X\right)=\inf_{j\in\mathcal{J}}h_{j}\left(X\right)$
where $h_{j}\left(X\right)=h_{j}\left(\vec{X}\right)=\max\left\{ \langle a_{j},\,\vec{X}\rangle+b_{j},\,c_{j}\right\} $
for $a_{j}\in\mathbb{R}^{n\,|\Omega|}$ and $a_{j}\geq0$ for all
$j\in\mathcal{J}$. Assume that there exists $d\in\mathbb{R}^{n}$
such that $\langle a_{j},\,d\rangle>0$ for all $j\in\mathcal{J}$
and let 
\begin{equation}
\mu_{k}\left(X\right)\triangleq\sup_{j\in\mathcal{J}}\inf\left\{ t\in\mathbb{R}\text{ : }h_{j}\left(X+t\,d\right)\geq k\right\} ,\,\forall k\in\mathbb{R}.\label{eq:muk-thm-connection}
\end{equation}
The following assertions hold. 
\begin{itemize}
\item[(i)] $\rho\left(X\right)=\sup\left\{ k\in\mathbb{R}\text{ : }\mu_{k}\left(X\right)\leq0\right\} $
for all $X\in\mathcal{L}$; 
\item[(ii)] $\mu_{k}$ is convex on $\mathcal{L}$ for all $k\in\mathbb{R}$. 
\end{itemize}
\end{thm}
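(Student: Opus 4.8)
The plan is to verify the two assertions directly from the explicit formula \eqref{eq:muk-thm-connection} for $\mu_k$, treating the inner hockey-stick computation first and then taking the supremum over $j$. First I would fix $j\in\mathcal{J}$ and analyze the quantity $\mu_k^j(X)\triangleq\inf\{t\in\mathbb{R}\text{ : }h_j(X+t\,d)\geq k\}$. Since $h_j(X+t\,d)=\max\{\langle a_j,\vec X\rangle+t\langle a_j,\vec d\rangle+b_j,\,c_j\}$ is, as a function of $t$, the maximum of an affine function with \emph{positive} slope $\langle a_j,d\rangle>0$ and a constant, it is nondecreasing and continuous in $t$, and it eventually exceeds $k$; hence the infimum is attained and $\mu_k^j(X)\leq 0$ iff $h_j(X)\geq k$. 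This already gives the key set identity $\{X\text{ : }h_j(X)\geq k\}=\{X\text{ : }\mu_k^j(X)\leq0\}$, and an explicit closed form $\mu_k^j(X)=(k-\langle a_j,\vec X\rangle-b_j)/\langle a_j,d\rangle$ whenever $k>c_j$ (and $\mu_k^j(X)=-\infty$, or any sufficiently negative value, when $k\leq c_j$, since then the constraint is vacuous).

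For part (i), I would combine the per-$j$ identity with the supremum. Since $\mu_k(X)=\sup_{j}\mu_k^j(X)$, we have $\mu_k(X)\leq0$ iff $\mu_k^j(X)\leq0$ for all $j$, iff $h_j(X)\geq k$ for all $j$, iff $\inf_j h_j(X)=\rho(X)\geq k$. Taking the supremum over all such $k$ yields $\sup\{k\text{ : }\mu_k(X)\leq0\}=\sup\{k\text{ : }\rho(X)\geq k\}=\rho(X)$, which is assertion (i). (One should note the mild care needed at the boundary $k=\rho(X)$ and the sup being over a closed ray in $k$, but monotonicity of the defining set in $k$ makes this routine.)

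For part (ii), I would show $\mu_k$ is convex for fixed $k$. From the closed form above, for $k>c_j$ the function $X\mapsto\mu_k^j(X)=(k-\langle a_j,\vec X\rangle-b_j)/\langle a_j,d\rangle$ is affine in $\vec X$, hence convex; for $k\leq c_j$ it is constant (equal to $-\infty$ on the convention, or one can simply drop those $j$ from the supremum since they impose no constraint). A pointwise supremum of convex (affine) functions is convex, so $\mu_k=\sup_j\mu_k^j$ is convex on $\mathcal{L}$. Monotonicity and normalization of $\mu_k$, if one wishes to record them, follow similarly: each affine piece is nonincreasing because $a_j\geq0$, and $\mu_k(0)=0$ can be arranged by the normalization $\rho(0)=0$ built into $\mathcal{S}$ together with the choice of representation, though part (ii) as stated only asks for convexity.

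The main obstacle is bookkeeping around the degenerate branch: when $k\leq c_j$ the inner infimum is over an unbounded-below set and equals $-\infty$, so $\mu_k^j$ is extended-real-valued, and one must either argue that such $j$ are harmless in the supremum (they are, since they never bind the constraint $\mu_k(X)\leq0$) or restrict attention to $\{j\text{ : }c_j<k\}$, which may be empty for small $k$ (in which case $\mu_k\equiv-\infty$ and $\rho(X)\geq k$ trivially for all $X$, consistent with (i)). Handling this edge case cleanly — rather than the convexity or the set identity, both of which are essentially immediate once the closed form for $\mu_k^j$ is in hand — is where the argument needs the most care.
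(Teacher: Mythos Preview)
Your proposal is correct, and your route differs from the paper's in a way worth noting.

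For part~(i), the paper does \emph{not} argue via the level-set identity $\{X:\mu_k(X)\le 0\}=\{X:\rho(X)\ge k\}$ as you do. Instead it invokes the construction from Theorem~\ref{thm:representation-1}(iii) to write $\rho(X)=\sup\{k:\tilde\mu_k(X)\le 0\}$ with $\tilde\mu_k(X)\triangleq\inf\{t:\rho(X+t\,d)\ge k\}$, and then proves the \emph{interchange identity}
\[
\inf\Bigl\{t:\inf_{j\in\mathcal{J}}h_j(X+t\,d)\ge k\Bigr\}
=\sup_{j\in\mathcal{J}}\inf\bigl\{t:h_j(X+t\,d)\ge k\bigr\},
\]
via a case split on whether the left-hand side is $+\infty$ and $\epsilon$-arguments using strict monotonicity of $t\mapsto h_j(X+t\,d)$. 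Your approach sidesteps this inf--sup interchange entirely: you establish $\mu_k^j(X)\le 0\Leftrightarrow h_j(X)\ge k$ for each $j$, intersect over $j$, and then take the supremum in $k$. This is shorter and more transparent; the paper's argument, on the other hand, yields the stronger statement that $\mu_k$ literally equals the canonical $\tilde\mu_k$ built from $\rho$ and $d$, not just that they share acceptance sets.

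For part~(ii), the paper appeals to Proposition~\ref{prop:Basic_risk} (convexity of $\inf\{t:\rho(\cdot+t\,d)\ge k\}$ for quasi-concave $\rho$) applied to each $h_j$, then takes a supremum. Your explicit closed form $\mu_k^j(X)=(k-\langle a_j,\vec X\rangle-b_j)/\langle a_j,d\rangle$ for $k>c_j$ (and $-\infty$ otherwise) makes convexity immediate by affinity. This is more elementary and also makes your edge-case discussion (the $k\le c_j$ branch) fully explicit, whereas the paper absorbs it into the abstract proposition.
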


\begin{proof}
Observe first that $h_{j}\left(X+t\,d\right)$ is strictly increasing
in $t$ for each $j\in\mathcal{J}$ under the specified choice of
$d$.

Part (i). By the proof of Theorem \ref{thm:representation-1}, $\rho(X)=\sup\{k\in\mathbb{R}:\tilde{\mu}_{k}(X)\leq0\},$
where 
\[
\tilde{\mu}_{k}(X)\triangleq\inf\left\{ t\in\mathbb{R}\text{ : }\rho\left(X+t\,d\right)\geq k\right\} .
\]
It suffices to show that $\tilde{\mu}_{k}(X)=\mu_{k}(X)$ where the
latter is defined by (\ref{eq:muk-thm-connection}). By definition
\[
\inf\left\{ t\in\mathbb{R}\text{ : }\rho\left(X+t\,d\right)\geq k\right\} =\inf\left\{ t\in\mathbb{R}\text{ : }\inf_{j\in\mathcal{J}}h_{j}\left(X+t\,d\right)\geq k\right\} 
\]
Thus, we are left to show 
\begin{equation}
\inf\left\{ t\in\mathbb{R}\text{ : }\inf_{j\in\mathcal{J}}h_{j}\left(X+t\,d\right)\geq k\right\} =\sup_{j\in\mathcal{J}}\inf\left\{ t\in\mathbb{R}\text{ : }h_{j}\left(X+t\,d\right)\geq k\right\} .\label{eq:exchange-min-max}
\end{equation}
Let $t^{*}$ and $\hat{t}$ denote respectively the optimal value
on the left hand side and right hand side of equation (\ref{eq:exchange-min-max}),
let $t_{j}=\inf\left\{ t\in\mathbb{R}\text{ : }h_{j}\left(X+t\,d\right)\geq k\right\} $.
If $t^{*}=+\infty$, then $\hat{t}=+\infty$. To see this, assume
for the sake of a contradiction that $\hat{t}<+\infty$. Then $t_{j}<+\infty$
for every $j\in\mathcal{J}$. Since $h_{j}(X+td)$ is strictly increasing
in $t$, then for any positive number $\epsilon$, 
\[
h_{j}\left(X+(\hat{t}+\epsilon)\,d\right)\geq h_{j}\left(X+(t_{j}+\epsilon)\,d\right)>h_{j}\left(X+t_{j}\,d\right)\geq k,\forall j\in{\cal J},
\]
which implies $\hat{t}+\epsilon$ is a feasible solution of the minimization
problem at the left hand side of (\ref{eq:exchange-min-max}) and
hence $t^{*}<+\infty$, which gives the desired contradiction.

We now consider the case when $t^{*}<+\infty$. For any $\epsilon>0$,
\[
\inf_{j\in\mathcal{J}}h_{j}\left(X+(t^{*}+\epsilon)\,d\right)\geq k.
\]
Driving $\epsilon$ to $0$, we have that $\inf_{j\in\mathcal{J}}h_{j}\left(X+t^{*}\,d\right)\geq k$
and hence 
\[
h_{j}\left(X+t^{*}\,d\right)\geq k,\forall j\in\mathcal{J}.
\]
This enables us to deduce that $t_{j}\leq t^{*}$ for each $j\in\mathcal{J}$
and subsequently $\hat{t}\leq t^{*}$. To show the reverse inequality,
we note that for any $\epsilon>0$, 
\[
h_{j}\left(X+(\hat{t}+\epsilon)\,d\right)\geq k,\,\forall j\in\mathcal{J},
\]
and hence $\inf_{j\in\mathcal{J}}h_{j}\left(X+\hat{t}\,d\right)\geq k.$
The latter implies $\hat{t}$ is a feasible solution of the minimization
problem on the left hand side of equation (\ref{eq:exchange-min-max})
and hence $t^{*}\leq\hat{t}$. The proof is complete.

Part (ii). We first note that $h_{j}$ is quasi-concave as the maximum
of a linear function and a constant function ($h_{j}$ is automatically
convex, but it is also quasi-concave in this special case). For each
$j\in\mathcal{J}$ and $k\in\mathbb{R}$ we define 
\[
\hat{\mu}_{j,\,k}\left(X\right)\triangleq\inf\left\{ t\in\mathbb{R}\text{ : }h_{j}\left(X+t\,d\right)\geq k\right\} .
\]
Convexity of $\hat{\mu}_{j,\,k}$ follows by Proposition \ref{prop:Basic_risk}.
The desired conclusion follows by noting that $\mu_{k}$ itself is
the supremum of the convex functions $\left\{ \hat{\mu}_{j,\,k}\right\} _{j\in\mathcal{J}}$,
which preserves convexity. 
\end{proof}
Theorem \ref{thm:Connection} shows that we may construct the level
set representation (\ref{eq:representation}) for $\rho\in\mathcal{R}_{iqv}$
by using the support functions of $\rho$. However, from computational
point view, it might be difficult to identify the optimal (lowest)
support function which majorizes $\rho(X)$ when $\rho$ has a general
structure. This motivates us to consider a lower approximation of
$\mu_{k}(X)$ constructed by using a single piece of a support function,
which is not necessarily optimal, but that can be obtained by calculating
an upper sub-gradient of $\rho$. Let us denote this function by $h_{j}(X)=\max\left(\langle a_{j},\,\vec{X}\rangle+b_{j},\,c_{j}\right)$
and define 
\[
\nu_{k}(X)\triangleq\inf\left\{ t\in\mathbb{R}\text{ : }h_{j}\left(X+t\,d\right)\geq k\right\} .
\]
Obviously $\nu_{k}(X)\leq\mu_{k}(X)$ and through (\ref{eq:level}),
we have 
\begin{equation}
\left\{ X\in\mathcal{L}\text{ : }\rho\left(X\right)\geq k\right\} =\left\{ X\in\mathcal{L}\text{ : }\mu_{k}\left(X\right)\leq0\right\} \subset\left\{ X\in\mathcal{L}\text{ : }\nu_{k}\left(X\right)\leq0\right\} ,\,\forall k\in\mathbb{R},\label{eq:level-1}
\end{equation}
which implies that the upper level set of $\rho$ at level $k$ is
contained by the lower level set of $\nu_{k}$ at level $0$. By developing
a sequence of functions $\left\{ \nu_{k}\right\} _{k\in\mathbb{R}}$
appropriately, we will be able to use the intersection of the lower
level sets of $\nu_{k}$, $k=1,\cdots,$ to approximate the maximizer
of $\rho$ (the upper level set of $\rho$ at its maximum). This procedure
is closely related to the idea of the so-called level function method
in \cite{Xu2001} which we will discuss in detail in Section 6 for
solving Problem (\ref{OPTIMIZATION}). Note also that in the case
when $k=c$, we can obtain an explicit form of $\nu_{k}\left(X\right)$
with $\nu_{c}\left(X\right)=\left(\langle a_{j},\,X\rangle+b_{j}-c_{j}\right)/\langle a_{j},\,d\rangle$.
So, if we set $\langle a_{j},d\rangle=1$, then we obtain $\nu_{c}\left(X\right)=\langle a_{j},\,X\rangle+b_{j}-c_{j}$.
Moreover, by setting $b_{j}=c_{j}-\langle a_{j},\,X_{c}\rangle$,
we have $\nu_{c}\left(X\right)=\langle a_{j},\,X-X_{c}\rangle$ which
is a level function (see Definition \ref{def:level-function}).

\subsection{The role of targets}

In this section we explore the role of \textit{targets} in expressing
decision maker preferences. This issue is originally investigated
in \cite{brown2012aspirational}, and here we extend it to the multi-attribute
setting. A target is a benchmark (typically a monetary amount or level
of reward) that the decision maker wishes to achieve. We can assess
a prospect $X$ in terms of its ability to meet a target. Target-based
decision making is further elaborated upon in \cite{lanzillotti1958pricing,mao1970survey,fishburn1977mean}
and forms the basis for the satisficing and aspirational preferences
in \cite{brown2009satisficing,brown2012aspirational}. We remark that
this notion of target is different from the benchmarks that appear
in the literature on stochastic dominance constraints. Here, we have
a continuum of scalar targets for all satiation levels, while in the
case of stochastic dominance constraints the benchmark is a single
fixed prospect.

Next we formally define a target and related concepts for prospects
$X\in\mathcal{L}$. 
\begin{defn}
\label{def:target} (i) A target $\tau\in\mathbb{R}^{n}$ is a desired
level of gain/reward.

(ii) For a prospect $X$, $X-\tau$ is the \textit{target premium}.

(iii) For a risk measure $\mu\mbox{ : }\mathcal{L}\rightarrow\mathbb{R}$,
$\mu\left(X-\tau\right)$ is the risk associated with the target premium
$X-\tau$. 
\end{defn}

The next two theorems parallel Theorems \ref{thm:representation}
and \ref{thm:representation-1}, except that now targets appear explicitly.
The following result shows that given risk measures and targets, we
may construct a choice function in $\mathcal{R}_{iqv}$. The proof
of the following theorem is based on Proposition \ref{prop:Basic_choice}
in the Appendix. In the following theorem and throughout this section,
we let $\left\{ \tau\left(k\right)\right\} _{k\in\mathbb{R}}\subset\mathbb{R}^{n}$
denote a family of targets indexed by $k\in\mathbb{R}$. In this sense
we can view the target mapping $\tau\left(k\right)$ as a function
of $k$, i.e. $\tau\text{ : }\mathbb{R}\rightarrow\mathbb{R}^{n}$. 
\begin{thm}
\label{thm:target} Suppose we are given a collection $\left\{ \mu_{k}\right\} _{k\in\mathbb{R}}$
of convex risk measures with closed acceptance sets and targets $\left\{ \tau\left(k\right)\right\} _{k\in\mathbb{R}}\subset\mathbb{R}^{n}$.
We define $\tilde{\mu}_{k}\left(X\right)\triangleq\mu_{k}\left(X-\tau\left(k\right)\right)$
and suppose:

(i) all $\left\{ \tilde{\mu}_{k}\right\} _{k\in\mathbb{R}}$ have
closed acceptance sets;

(ii) for any $X\in\mathcal{L}$, $\tilde{\mu}_{k}\left(X\right)$
as a function of $k$ is non-decreasing and left continuous;

(iii) for any $X\in\mathcal{L}$, there is $k\in\mathbb{R}$ such
that $\tilde{\mu}_{k}\left(X\right)\leq0$.\\
 Then 
\begin{equation}
\psi\left(X\right)\triangleq\sup\left\{ k\in\mathbb{R}\mbox{ : }\mu_{k}\left(X-\tau\left(k\right)\right)\leq0\right\} ,\,\forall X\in\mathcal{L},\label{eq:target}
\end{equation}
is upper semi-continuous, monotonic, and quasi-concave. 
\end{thm}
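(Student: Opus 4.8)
The plan is to reduce Theorem~\ref{thm:target} to Theorem~\ref{thm:representation} (equivalently, to Proposition~\ref{prop:Basic_choice} in the Appendix) by verifying that the family $\left\{ \tilde{\mu}_{k}\right\} _{k\in\mathbb{R}}$ defined by $\tilde{\mu}_{k}\left(X\right)=\mu_{k}\left(X-\tau\left(k\right)\right)$ satisfies Assumption~\ref{assu:representation}. Once that is done, the function
\[
\psi\left(X\right)=\sup\left\{ k\in\mathbb{R}\text{ : }\tilde{\mu}_{k}\left(X\right)\leq0\right\}
\]
is exactly of the form $\vartheta$ in Theorem~\ref{thm:representation}, which yields upper semi-continuity, monotonicity, and quasi-concavity immediately. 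So the entire task is the verification of the four parts of Assumption~\ref{assu:representation} for $\left\{ \tilde{\mu}_{k}\right\} $, making use of the three hypotheses (i)--(iii) supplied in the statement.

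First I would check parts (i) and (ii) of Assumption~\ref{assu:representation}, i.e., that for each fixed $k$ the map $X\mapsto\tilde{\mu}_{k}\left(X\right)$ is monotonically decreasing and convex on $\mathcal{L}$. Both are inherited from the corresponding properties of $\mu_{k}$ together with the fact that the shift $X\mapsto X-\tau\left(k\right)$ is an affine, order-preserving bijection of $\mathcal{L}$: if $X\leq Y$ then $X-\tau\left(k\right)\leq Y-\tau\left(k\right)$, so $\mu_{k}\left(X-\tau\left(k\right)\right)\geq\mu_{k}\left(Y-\tau\left(k\right)\right)$; and for $\lambda\in\left[0,1\right]$ we have $\left(\lambda X+\left(1-\lambda\right)Y\right)-\tau\left(k\right)=\lambda\left(X-\tau\left(k\right)\right)+\left(1-\lambda\right)\left(Y-\tau\left(k\right)\right)$, so convexity of $\mu_{k}$ passes to $\tilde{\mu}_{k}$. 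Next, part (iii) of Assumption~\ref{assu:representation}, closedness of the acceptance sets $\mathcal{A}_{\tilde{\mu}_{k}}$, is precisely hypothesis~(i) of the theorem, so nothing further is needed there. Finally, part (iv) of Assumption~\ref{assu:representation}, monotonicity of $k\mapsto\tilde{\mu}_{k}\left(X\right)$ for fixed $X$, is hypothesis~(ii); hypothesis~(iii) guarantees that $\psi\left(X\right)>-\infty$ everywhere, so $\psi$ is genuinely $\mathbb{R}$-valued on $\mathcal{L}$ and the conclusion is not vacuous.

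The main subtlety — and the step I expect to require the most care — is that after the target shift, the individual $\tilde{\mu}_{k}$ are \emph{not} risk measures in the sense of Definition~\ref{def:risk}, because normalization $\tilde{\mu}_{k}\left(0\right)=\mu_{k}\left(-\tau\left(k\right)\right)=0$ generally fails. Thus Theorem~\ref{thm:representation} cannot be invoked verbatim; one must instead appeal to the underlying Proposition~\ref{prop:Basic_choice}, whose hypotheses are stated in terms of monotonicity, convexity, closed acceptance sets, and the monotone-in-$k$ and feasibility conditions, but \emph{not} normalization. The extra monotonicity hypothesis built into $\tilde{\mu}_{k}$ in part (ii) of the theorem (non-decreasing \emph{and} left continuous in $k$) is exactly what is needed to make the $\sup$ in \eqref{eq:target} behave well: left continuity ensures that the level sets $\left\{ X\text{ : }\tilde{\mu}_{k}\left(X\right)\leq0\right\} $ fit together so that the upper level sets of $\psi$ are the intersections of the closed convex acceptance sets $\mathcal{A}_{\tilde{\mu}_{k}}$ over $k$ in a way that makes $\left\{ \psi\geq t\right\} $ closed and convex — giving upper semi-continuity and quasi-concavity — while monotonicity of $\psi$ follows because each $\mathcal{A}_{\tilde{\mu}_{k}}$ is upward-closed by part~(i) of the Assumption. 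I would make these identifications explicit, reproducing the short level-set argument of Proposition~\ref{prop:robust}-type if the Appendix proposition is not quite in the required form, and then conclude.
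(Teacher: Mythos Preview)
Your proposal is correct and follows exactly the route the paper takes: the paper's entire proof is the one sentence ``The proof of the following theorem is based on Proposition~\ref{prop:Basic_choice} in the Appendix,'' and you have spelled out precisely the verification needed to invoke that proposition for the shifted family $\{\tilde{\mu}_k\}$. Your observation about normalization is well taken --- the hypothesis ``convex risk measures'' in Proposition~\ref{prop:Basic_choice} nominally includes $\mu_k(0)=0$ via Definition~\ref{def:risk}, but inspection of the proof of that proposition shows normalization is never used, so the argument goes through for $\{\tilde{\mu}_k\}$ as you say; the paper glosses over this point.
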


The next result is the reverse implication of Theorem \ref{thm:target},
it shows how to construct risk measures and targets so that we may
recover any choice function $\rho\in\mathcal{R}_{iqv}$ in the form
of representation (\ref{eq:target}). Now, we allow the targets to
be any measurable selection from the level sets of $\rho$. In the
following, let $\text{bd}\left(\cdot\right)$ denote the boundary
of a set in $\mathcal{L}$ with respect to the supremum norm topology. 
\begin{thm}
\label{thm:target-1} Let $\rho\in\mathcal{R}_{iqv}$ and choose 
\[
\tau\left(k\right)\in\text{bd}\left\{ X\in\mathcal{L}\mbox{ : }\rho\left(X\right)\geq k\right\} ,\,\forall k\in\mathbb{R}.
\]
Let
\[
\mu_{k}\left(X\right)\triangleq\inf\left\{ \alpha\in\mathbb{R}\mbox{ : }\rho\left(X+\alpha\,\tau\left(k\right)\right)\geq k\right\} -1,\,\forall X\in\mathcal{L}.
\]
Then the following assertions hold.

(i) $\left\{ \mu_{k}\right\} _{k\in\mathbb{R}}$ are monotonic, translation
invariant (along $\tau\left(k\right)$), normalized, convex, and have
closed acceptance sets.

(ii) $\rho\left(X\right)=\sup\left\{ k\in\mathbb{R}\mbox{ : }\mu_{k}\left(X-\tau\left(k\right)\right)\leq0\right\} $
for all $X\in\mathcal{L}$. 
\end{thm}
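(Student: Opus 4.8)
The plan is to follow the blueprint of Theorem~\ref{thm:representation-1}(iii), treating $\mu_k$ as the ``translation-to-acceptance'' functional attached to the upper level set $L_k\triangleq\left\{X\in\mathcal{L}\text{ : }\rho\left(X\right)\geq k\right\}$ in the direction $\tau\left(k\right)$, shifted by $-1$. Since $\rho\in\mathcal{R}_{iqv}$, each $L_k$ is closed (upper semi-continuity), convex (quasi-concavity), and upward-closed in the partial order on $\mathcal{L}$ (monotonicity). Writing $f_k\left(X\right)\triangleq\inf\left\{\alpha\in\mathbb{R}\text{ : }\rho\left(X+\alpha\,\tau\left(k\right)\right)\geq k\right\}=\inf\left\{\alpha\in\mathbb{R}\text{ : }X+\alpha\,\tau\left(k\right)\in L_k\right\}$, we have $\mu_k=f_k-1$, and all of the work amounts to transferring known properties of $f_k$ through this affine shift.

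For assertion (i) I would check the four properties of $\mu_k$ in turn. Monotonicity: if $X\leq Z$ then $X+\alpha\,\tau\left(k\right)\leq Z+\alpha\,\tau\left(k\right)$ for every $\alpha$, so by monotonicity of $\rho$ the feasibility set for $Z$ contains that for $X$, whence $\mu_k\left(Z\right)\leq\mu_k\left(X\right)$. Translation invariance along $\tau\left(k\right)$: the substitution $\alpha\mapsto\alpha-\beta$ in the defining infimum gives $\mu_k\left(X+\beta\,\tau\left(k\right)\right)=\mu_k\left(X\right)-\beta$ for all $\beta\in\mathbb{R}$. Convexity: this is exactly the content of Proposition~\ref{prop:Basic_risk} applied to the convex set $L_k$, so I would simply cite it, as is done in the proof of Theorem~\ref{thm:representation-1}(iii). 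Closed acceptance sets: using that the feasibility interval is closed because $\alpha\mapsto\rho\left(X+\alpha\,\tau\left(k\right)\right)$ is upper semi-continuous and quasi-concave on $\mathbb{R}$, one has $\left\{X\text{ : }\mu_k\left(X\right)\leq0\right\}=\left\{X\text{ : }X+\tau\left(k\right)\in L_k\right\}$, a translate of the closed set $L_k$. Normalization: $\mu_k\left(0\right)=\inf\left\{\alpha\text{ : }\alpha\,\tau\left(k\right)\in L_k\right\}-1$, and since $\tau\left(k\right)\in\text{bd}\,L_k\subset L_k$ the value $\alpha=1$ is feasible, giving $\mu_k\left(0\right)\leq0$; the reverse inequality, namely $\alpha\,\tau\left(k\right)\notin L_k$ for $\alpha<1$, is exactly where the hypothesis $\tau\left(k\right)\in\text{bd}\,L_k$ is combined with convexity (and monotonicity) of $L_k$.

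For assertion (ii), the first move is to unwind the shift: by translation invariance, $\mu_k\left(X-\tau\left(k\right)\right)=f_k\left(X-\tau\left(k\right)\right)-1=f_k\left(X\right)$, so the claim collapses to $\rho\left(X\right)=\sup\left\{k\in\mathbb{R}\text{ : }f_k\left(X\right)\leq0\right\}$. The inclusion $\left\{k\text{ : }\rho\left(X\right)\geq k\right\}\subseteq\left\{k\text{ : }f_k\left(X\right)\leq0\right\}$ is immediate because $\alpha=0$ is feasible whenever $\rho\left(X\right)\geq k$. For the reverse inclusion I would argue that $f_k\left(X\right)\leq0$ forces $0$ itself to lie in the closed feasibility interval $\left\{\alpha\text{ : }\rho\left(X+\alpha\,\tau\left(k\right)\right)\geq k\right\}$, hence $\rho\left(X\right)\geq k$; this uses that $\tau\left(k\right)$ points in a monotone direction so that $\alpha\mapsto\rho\left(X+\alpha\,\tau\left(k\right)\right)$ is non-decreasing and the interval cannot lie entirely to the left of $0$. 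Granting this, $\left\{k\text{ : }f_k\left(X\right)\leq0\right\}=\left\{k\text{ : }\rho\left(X\right)\geq k\right\}=(-\infty,\rho\left(X\right)]$ since $\rho$ is real-valued, and taking the supremum returns $\rho\left(X\right)$; the last step can alternatively be packaged via Proposition~\ref{prop:Basic_choice}, paralleling Theorem~\ref{thm:representation}.

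The main obstacle is the reverse inclusion in (ii) together with the normalization in (i): both hinge on $\tau\left(k\right)$ being simultaneously on $\text{bd}\,L_k$ and ``aligned'' with the monotonicity of $\rho$, so that moving along $+\tau\left(k\right)$ only improves $\rho$ and the feasible set of $\alpha$ is a closed upper ray anchored exactly at the boundary. I would discharge this either by restricting attention to targets with $\tau\left(k\right)\geq0$ (the reading consistent with the fixed-direction construction in Theorem~\ref{thm:representation-1}(iii)) or by exploiting the convexity and upward-closedness of $L_k$ directly; in either case one must take care that $\mu_k$ remains real-valued, the same caveat already noted in the remark following Theorem~\ref{thm:representation}.
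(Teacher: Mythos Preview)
Your proposal is correct and follows essentially the same route as the paper: cite Proposition~\ref{prop:Basic_risk} for monotonicity, convexity, and closed acceptance sets, verify translation invariance and normalization by the same substitution/boundary argument, and for (ii) use translation invariance to reduce $\mu_k\left(X-\tau\left(k\right)\right)\leq0$ to $f_k\left(X\right)\leq0$ and then establish the level-set equivalence $\left\{X:\rho\left(X\right)\geq k\right\}=\left\{X:f_k\left(X\right)\leq0\right\}$. The only slip is your closing citation: the paper invokes Proposition~\ref{prop:Basic_level} (level-set equivalence $\Rightarrow$ representation), not Proposition~\ref{prop:Basic_choice}; your direct argument already covers this, and you are right that the paper is tacitly assuming $\tau\left(k\right)$ points in a monotone direction for the reverse inclusion and for $\mu_k\left(0\right)=0$.
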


\begin{proof}
(i) The desired properties again follow from Proposition \ref{prop:Basic_risk}
with some minor variations.

\textit{Translation invariance:} For any $c\in\mathbb{R}$, we have
\begin{align*}
\mu_{k}\left(X+c\,\tau\left(k\right)\right)=\, & \inf\left\{ \alpha\in\mathbb{R}\mbox{ : }\rho\left(X+c\,\tau\left(k\right)+\alpha\,\tau\left(k\right)\right)\geq k\right\} -1\\
=\, & \inf\left\{ \alpha-c\mbox{ : }\rho\left(X+\alpha\,\tau\left(k\right)\right)\geq k\right\} -1\\
=\, & \inf\left\{ \alpha\in\mathbb{R}\mbox{ : }\rho\left(X+\alpha\,\tau\left(k\right)\right)\geq k\right\} -1-c\\
=\, & \mu_{k}\left(X\right)-c.
\end{align*}

\textit{Normalization:} Normalization follows immediately by definition
since: 
\[
\mu_{k}\left(0\right)=\inf\left\{ \alpha\in\mathbb{R}\mbox{ : }\rho\left(\alpha\,\tau\left(k\right)\right)\geq k\right\} -1=0,
\]
since $\inf\left\{ \alpha\in\mathbb{R}\mbox{ : }\rho\left(\alpha\,\tau\left(k\right)\right)\geq k\right\} =1$
by choice of $\tau\left(k\right)$ in $\left\{ d\in\mathbb{R}^{n}\mbox{ : }\rho\left(d\right)=k\right\} $.

(ii) We first verify that 
\begin{equation}
\left\{ X\in\mathcal{L}\text{ : }\rho\left(X\right)\geq k\right\} =\left\{ X\in\mathcal{L}\text{ : }\mu_{k}\left(X-\tau\left(k\right)\right)\leq0\right\} ,\,\forall k\in\mathbb{R}.\label{eq:target_equivalence}
\end{equation}
If $\rho\left(X\right)\geq k$ then 
\[
\mu_{k}\left(X-\tau\left(k\right)\right)=\inf\left\{ \alpha\in\mathbb{R}\mbox{ : }\rho\left(X+\alpha\,\tau\left(k\right)\right)\geq k\right\} +1-1\leq0,
\]
where the equality follows by translation invariance of $\mu_{k}$
along $\tau\left(k\right)$, and the inequality follows since $\inf\left\{ \alpha\in\mathbb{R}\mbox{ : }\rho\left(X+\alpha\,\tau\left(k\right)\right)\geq k\right\} \leq0$
when $\rho\left(X\right)\geq k$. Conversely, suppose $\mu_{k}\left(X-\tau\left(k\right)\right)\leq0$,
then we must have $\inf\left\{ \alpha\in\mathbb{R}\mbox{ : }\rho\left(X+\alpha\,\tau\left(k\right)\right)\geq k\right\} \leq0$
which implies $\rho\left(X\right)\geq k$. Now we show that the desired
conclusion follows from equivalence (\ref{eq:target_equivalence})
by Proposition \ref{prop:Basic_level} in the Appendix. 
\end{proof}
We now develop some concrete examples of our main development in this
section. As our first example, we extend some univariate expected
utility choice functions to the multi-attribute setting. 
\begin{example}
We first take utility functions $u_{i}\mbox{ : }\mathbb{R}\rightarrow\mathbb{R}$
which are continuous, monotonically nondecreasing, concave, and satisfy
$u_{i}\left(0\right)=0$ for $i=1,\ldots,\,n$. For weights $\kappa_{1},\ldots,\,\kappa_{n}\geq0$,
we may then consider the (additive expected utility) choice function
$\rho\left(X\right)=\sum_{i=1}^{n}\kappa_{i}\mathbb{E}\left[u_{i}\left(X_{i}\right)\right]$
on $\mathcal{L}$ where the expectation $\mathbb{E}\left[\cdot\right]$
is taken with respect to some fixed distribution on $\Omega$. We
may then leverage \cite[Example 1]{brown2012aspirational} for each
component $\mathbb{E}\left[u_{i}\left(X_{i}\right)\right]$ for $i=1,\ldots,\,n$.

As a generalization, we now take a multi-attribute utility function
$u\mbox{ : }\mathbb{R}^{n}\rightarrow\mathbb{R}$ that is continuous,
monotonically nondecreasing, concave, and satisfies $u\left(0\right)=0$.
The resulting choice function is the subjective expected utility $\rho\left(X\right)=\mathbb{E}\left[u\left(X\right)\right]$,
which gives either way of constructing corresponding risk functions.
We have 
\[
\tau(k)=\inf\left\{ a\in\mathbb{R}\text{ : }u(a\,d)\geq k\right\} ,
\]
and 
\begin{align*}
\mu_{k}\left(X\right)=\, & \inf\left\{ a\in\mathbb{R}\mbox{ : }\mathbb{E}\left[u\left(X+a\,d\right)\right]\geq k\right\} -\tau(k)\\
=\, & -\left(\sup\left\{ a\in\mathbb{R}:\,\mathbb{E}\left[u\left(X-a\,d\right)\right]\geq k\right\} +\tau(k)\right).
\end{align*}
\end{example}

We may apply this same reasoning to some classical risk measures as
follows. 
\begin{example}
The Optimized Certainty Equivalent (OCE) (which includes conditional
value-at-risk as a special case) is defined in \cite{Ben-Tal2007}
as follows

\[
S_{u}(X)\triangleq\sup_{\eta\in\mathbb{R}}\left\{ \eta+\mathbb{E}\left[u\left(X-\eta\right)\right]\right\} ,
\]
where $u$ is any continuous, monotonically nondecreasing, and concave
utility function that satisfies $u\left(0\right)=0$. OCE has the
following interpretation: suppose a decision maker faces a future
uncertain income of $X$ dollars, and can consume part of $X$ at
present. If he chooses to consume $\eta$ dollars, then the resulting
present value of $X$ is $\eta+\mathbb{E}\left[u\left(X-\eta\right)\right]$.
Thus, the sure (present) value of $X$ is $S_{u}(X)$.

To extend OCE to multi-attribute case, we now take a multi-attribute
utility function $u\text{ : }\mathbb{R}^{n}\rightarrow\mathbb{R}$
satisfying the same conditions as for the univariate case. We may
then define the targets 
\[
\tau\left(k\right)=\inf\left\{ \alpha\in\mathbb{R}\text{ : }\sup_{\eta\in\mathbb{R}}\left\{ \eta+u\left(\alpha\,d-\eta\right)\right\} \geq k\right\} ,
\]
and risk measures

\[
\mu_{k}\left(X\right)=\inf\left\{ \alpha\in\mathbb{R}\text{ : }\sup_{\eta\in\mathbb{R}}\left\{ \eta+\mathbb{E}\left[u\left(X+\alpha\,d-\eta\right)\right]\right\} \geq k\right\} -\tau\left(k\right)
\]
to use in our representation. 
\end{example}

Some risk measures, such as ratio type risk measures, naturally lead
to a representation of the form (\ref{eq:representation}). 
\begin{example}
A robust variant of reward-risk ratio measures is developed in \cite{liu2017distributionally}.
The classical reward-risk ratio measure (on $\mathcal{L}$ for $n=1$)
which penalizes downward variations is 
\[
\rho\left(X\right)=\frac{\mathbb{E}\left[X-Y\right]}{\mathbb{E}\left[\left(Y-X\right)_{+}\right]},
\]
where $Y$ is a benchmark return. The epigraphical formulation of
this risk measure is
\[
\rho\left(X\right)=\sup_{\tau\in\mathbb{R}}\left\{ \tau\text{ : }\mathbb{E}\left[X-Y-\tau\left(Y-X\right)_{+}\right]\geq0\right\} ,
\]
which is quite similar to our representation (\ref{eq:representation})
with $\mu_{\tau}\left(X\right)=\mathbb{E}\left[X-Y-\tau\left(Y-X\right)_{+}\right]$.
In \cite{liu2017distributionally}, the authors focus on distributional
ambiguity, while we can adapt their technique to deal with multi-attribute
prospects on $\mathcal{L}$ for $n\geq2$. In particular, for any
scalarization function $\varphi\text{ : }\mathbb{R}^{n}\times\mathbb{R}^{n}\rightarrow\mathbb{R}$
(such as $\varphi\left(w,\,X\right)=\sum_{i=1}^{n}w_{i}X_{i}$) we
may consider the choice function 
\[
\rho\left(X\right)=\inf_{w\in\mathcal{W}}\frac{\mathbb{E}\left[\varphi\left(w,\,X\right)-\varphi\left(w,\,Y\right)\right]}{\mathbb{E}\left[\left(\varphi\left(w,\,Y\right)-\varphi\left(w,\,X\right)\right)_{+}\right]},
\]
where $\mathcal{W}\subset\mathbb{R}^{n}$ is a set of weights. Then,
using a similar analysis as in \cite[Proposition 2.1]{liu2017distributionally},
we can obtain 
\begin{align*}
\rho\left(X\right)=\sup_{\tau\in\mathbb{R}}\, & \tau\\
\text{s.t.}\, & \inf_{w\in\mathcal{W}}\mathbb{E}\left[\varphi\left(w,\,X\right)-\varphi\left(w,\,Y\right)-\tau\left(\varphi\left(w,\,Y\right)-\varphi\left(w,\,X\right)\right)_{+}\right]\geq0.
\end{align*}
In this case, the family of risk measures in representation (\ref{eq:representation})
is 
\[
\mu_{\tau}\left(X\right)=\inf_{w\in\mathcal{W}}\mathbb{E}\left[\varphi\left(w,\,X\right)-\varphi\left(w,\,Y\right)-\tau\left(\varphi\left(w,\,Y\right)-\varphi\left(w,\,X\right)\right)_{+}\right],
\]
which includes the inner minimization over weights. 
\end{example}

In \cite{brown2009satisficing,brown2012aspirational}, the authors
use a specific construction of the targets. In our case, we require
a ``distinguished'' strictly positive (component-wise) direction
$d\in\mathbb{R}^{n}$ along which we may increase or decrease prospects
in $\mathcal{L}$. No such direction was needed because this work
lies in the univariate setting. This distinguished direction leads
to a notion of translation invariance for multivariate risk measures.
We use the notation $\left\{ \upsilon\left(k\right)\right\} _{k\in\mathbb{R}}\subset\mathbb{R}$
to emphasize that we are only referring to aspiration levels along
this distinguished direction $d$.
\begin{thm}
Let $d\in\mathbb{R}^{n}$ with $d>0$ (component-wise). Given $\rho\in\mathcal{R}_{iqv}$,
define

\[
\upsilon\left(k\right)\triangleq\inf\left\{ \alpha\in\mathbb{R}\mbox{ : }\rho\left(\alpha\,d\right)\geq k\right\} ,
\]
and $\mu_{k}\text{ : }\mathcal{L}\rightarrow\mathbb{R}$ via 
\[
\mu_{k}\left(X\right)\triangleq\inf\left\{ \alpha\in\mathbb{R}\mbox{ : }\rho\left(X+\alpha\,d\right)\geq k\right\} -\upsilon\left(k\right),\,\forall X\in\mathcal{L}.
\]
Then we have:

(i) $\left\{ \mu_{k}\right\} _{k\in\mathbb{R}}$ are monotonic, translation
invariant (along $d$), normalized, convex, and have closed acceptance
sets.

(ii) $\rho\left(X\right)=\sup\left\{ k\in\mathbb{R}\mbox{ : }\mu_{k}\left(X-\upsilon\left(k\right)d\right)\leq0\right\} $
for all $X\in\mathcal{L}$. 
\end{thm}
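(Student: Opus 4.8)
The plan is to recognize this theorem as the ``distinguished direction'' specialization of Theorem~\ref{thm:target-1}, obtained by taking the family of targets along the ray $\{\alpha\,d:\alpha\in\mathbb{R}\}$, i.e. $\tau(k)=\upsilon(k)\,d$. The cornerstone is the observation that, setting
\[
\phi_k(X)\triangleq\inf\{\alpha\in\mathbb{R}\text{ : }\rho(X+\alpha\,d)\geq k\},
\]
one has $\upsilon(k)=\phi_k(0)$ and $\mu_k=\phi_k-\upsilon(k)$, so $\mu_k$ differs from the gauge-type risk measure $\phi_k$ --- which is precisely the one appearing in the proof of Theorem~\ref{thm:representation-1}(iii) --- only by the $X$-independent constant $\upsilon(k)$. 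Hence every structural property of $\phi_k$ transfers verbatim to $\mu_k$, and the normalization $\mu_k(0)=\phi_k(0)-\upsilon(k)=0$ comes for free from the correction term.

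For part (i) I would invoke Proposition~\ref{prop:Basic_risk} (already used for Theorem~\ref{thm:representation-1}(iii)) to obtain that, for each fixed $k$, $\phi_k$ is monotonically decreasing, convex, translation invariant along $d$, and has closed acceptance sets; alternatively these follow by short direct arguments: monotonicity because $X\leq Y$ enlarges $\{\alpha:\rho(X+\alpha d)\geq k\}$ since $\rho$ is increasing, translation invariance $\phi_k(X+cd)=\phi_k(X)-c$ via the change of variables $\beta=\alpha+c$, convexity of $\phi_k$ from quasi-concavity of $\rho$ (if $\rho(X_i+\alpha_i d)\geq k$ for $i=0,1$ then $\rho(\lambda X_0+(1-\lambda)X_1+(\lambda\alpha_0+(1-\lambda)\alpha_1)d)\geq\min_i\rho(X_i+\alpha_i d)\geq k$), and closedness of $\{\phi_k\leq 0\}=\{X:\rho(X)\geq k\}$ because $\rho$ is upper semi-continuous. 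Adding the constant $-\upsilon(k)$ preserves all of these and supplies normalization, yielding (i).

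For part (ii) the key identity is $\mu_k(X-\upsilon(k)\,d)=\phi_k(X)$, immediate from translation invariance of $\phi_k$ along $d$ (the two $\upsilon(k)$ terms cancel). It then suffices to show $\phi_k(X)\leq 0\iff\rho(X)\geq k$: the ``$\Leftarrow$'' direction is trivial since $\alpha=0$ is feasible, while for ``$\Rightarrow$'' one uses that $\rho$ is increasing to deduce $\rho(X+\varepsilon d)\geq k$ for every $\varepsilon>0$, and then upper semi-continuity of $\rho$ (letting $\varepsilon\downarrow0$) to conclude $\rho(X)\geq k$. Therefore $\{k\in\mathbb{R}:\mu_k(X-\upsilon(k)d)\leq 0\}=\{k\in\mathbb{R}:\rho(X)\geq k\}$, whose supremum is $\rho(X)$.

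The proof is largely bookkeeping, and the only genuinely delicate point is the upper semi-continuity argument that pins down the behaviour of the infimum defining $\phi_k$ at $\alpha=0$ (equivalently, the closedness of the acceptance sets). A secondary technical nuisance is the degenerate regime in which $\upsilon(k)=\pm\infty$ because $\{\alpha:\rho(\alpha d)\geq k\}$ is empty or all of $\mathbb{R}$; this is handled by restricting attention to $k$ in the effective range of $\rho$ along $d$, which is exactly the set of index levels relevant to the representation, so the identity in (ii) is unaffected.
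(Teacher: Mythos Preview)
Your proposal is correct and follows essentially the same route as the paper: invoke Proposition~\ref{prop:Basic_risk} for the gauge $\phi_k(X)=\inf\{\alpha:\rho(X+\alpha d)\geq k\}$ to obtain monotonicity, convexity, and closed acceptance sets, verify translation invariance along $d$ directly by a change of variables, and then use $\mu_k(X-\upsilon(k)d)=\phi_k(X)$ together with the equivalence $\phi_k(X)\leq 0\iff\rho(X)\geq k$ for part (ii). You are in fact slightly more explicit than the paper in spelling out the upper semi-continuity argument for the ``$\Rightarrow$'' direction and in flagging the degenerate regime $\upsilon(k)=\pm\infty$, but these are refinements of detail rather than differences of approach.
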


\begin{proof}
Part (i). All of the desired properties of $\left\{ \mu_{k}\right\} _{k\in\mathbb{R}}$
follow from Proposition \ref{prop:Basic_risk} except for translation
invariance, which is new. For any $c\in\mathbb{R}$, we have 
\begin{align*}
\mu_{k}\left(X+c\,d\right)=\, & \inf\left\{ \alpha\in\mathbb{R}\mbox{ : }\rho\left(X+c\,d+\alpha\,d\right)\leq k\right\} -\upsilon\left(k\right)\\
=\, & \inf\left\{ \alpha-c\mbox{ : }\rho\left(X+\alpha\,d\right)\leq k\right\} -\upsilon\left(k\right)\\
=\, & \inf\left\{ \alpha\in\mathbb{R}\mbox{ : }\rho\left(X+\alpha\,d\right)\leq k\right\} -c-\upsilon\left(k\right)\\
=\, & \mu_{k}\left(X\right)-c.
\end{align*}

Part (ii). Now we verify that $\left\{ X\in\mathcal{L}\text{ : }\rho\left(X\right)\geq k\right\} \Leftrightarrow\left\{ X\in\mathcal{L}\text{ : }\mu_{k}\left(X-\upsilon\left(k\right)d\right)\leq0\right\} $.
If $\rho\left(X\right)\geq k$ then 
\begin{align*}
\mu_{k}\left(X-\upsilon\left(k\right)d\right)=\, & \mu_{k}\left(X\right)+\upsilon\left(k\right)\\
=\, & \inf\left\{ \alpha\in\mathbb{R}\mbox{ : }\rho\left(X+\alpha\,d\right)\geq k\right\} -\upsilon\left(k\right)+\upsilon\left(k\right)\\
\leq\, & 0,
\end{align*}
since $\inf\left\{ \alpha\in\mathbb{R}\mbox{ : }\rho\left(X+\alpha\,d\right)\geq k\right\} \leq0$
when $\rho\left(X\right)\geq k$. Conversely, suppose $\mu_{k}\left(X-\upsilon\left(k\right)d\right)\leq0$,
then we must have $\inf\left\{ \alpha\in\mathbb{R}\mbox{ : }\rho\left(X+\alpha\,d\right)\geq k\right\} \leq0$
which implies $\rho\left(X\right)\geq k$. 
\end{proof}

\section{Optimization of quasi-concave choice functions}

With the preceding discussions on the support function representation
of the robust choice function in Section 4 and the level set representation
in Section 5, we are now ready to develop a numerical procedure for
the maximin robust preference optimization Problem (\ref{OPTIMIZATION}),
\begin{equation}
\max_{z\in\mathcal{Z}}\psi\left(G\left(z\right);\,\mathcal{S},\,Y\right)=\max_{z\in\mathcal{Z}}\inf_{\rho\in\mathcal{S}}\left\{ \rho\left(G\left(z\right)\right)-\rho\left(Y\right)\right\} .\label{eq:OPTIMIZATION-1}
\end{equation}
Observe that if $G$ is a linear function of $z$ in every scenario,
then we will be able to solve (\ref{OPTIMIZATION}) by solving a mixed-integer
linear programming problem as we outlined in Section 4. Here we concentrate
on the case where $G$ is convex in $z$ component-wise almost surely.
It is easy to verify that $\psi\left(G\left(z\right);\,\mathcal{S},\,Y\right)$
is then quasi-concave in $z$. We will combine the support function
approach and level set representation approach and put them in the
framework of the so-called level function method \cite{Xu2001}. We
start with a formal definition of level functions of a quasi-concave
function.
\begin{defn}[Level function]
\label{def:level-function} Let $f\text{ : }\mathbb{R}^{n}\rightarrow\mathbb{R}$
be a continuous real-valued quasi-concave function. A function $\sigma\text{ : }\mathcal{L}\rightarrow\mathbb{R}$
is called a level function of $\rho$ at $x$ if it satisfies:

(i) $\sigma(x)=0$;

(ii) $\sigma$ is a continuous concave function;

(iii) $T_{f}(f(x))\subset T_{\sigma}(0)$, where $T_{f}\left(\alpha\right)=\{x\in\mathbb{R}^{d}\text{ : }f(x)>\alpha\}$
denotes the strict upper level set of $f$ at level $f(x)$. 
\end{defn}

The basic idea of the level function method can be described as follows.
At starting point $x_{0}$, we calculate a level function $\sigma_{0}$
at the point. This can be achieved by calculating an upper sub-gradient
$g_{0}\in\partial^{+}f(x_{0})$ and define $\sigma_{0}(x)\triangleq\langle g_{0},x-x_{0}\rangle$.
We can then minimize $\sigma_{0}(x)$ over the feasible set of the
associated maximization problem (which of course must be compact and
convex) and denote the minimizer as the next iterate $x_{1}$. Next,
we calculate a level function of $f$ at $x_{1}$, minimize $\max\left\{ \sigma_{0}(x),\,\sigma_{1}(x)\right\} $
and use the minimizer as the next iterate. In this process, we will
have to evaluate $f$ at each iterate in order to calculate a sub-gradient
of $f$ and construct a level function there.

To see how this process can be applied to solve (\ref{eq:OPTIMIZATION-1}),
we need to discuss how to construct a level function at each iterate.
Let $(a_{i},b_{i},c_{i})$ denote an optimal solution of Problem (\ref{ROBUST})
- (\ref{ROBUST-6}) at $z_{i}$. Then the function $\text{\ensuremath{\max}}\left\{ \left\langle a_{i},\,G(z)\right\rangle +b_{i},\,c_{i}\right\} $
majorizes $\psi\left(G\left(z\right);\,\mathcal{S},\,Y\right)$ and
they coincide at $z=z_{i}$, that is, $c_{i}=\psi\left(G\left(z_{i}\right);\,\mathcal{S},\,Y\right)$.
Let $\sigma_{z_{i}}(z)\triangleq\left\langle a_{i},\,G(z)-G(z_{i})\right\rangle $,
then $\sigma_{z_{i}}(z_{i})=0$ and it is a convex function. Moreover,
since 
\[
\text{\ensuremath{\max}}\left(\left\langle a_{i},\,G(z)\right\rangle +b_{i},\,c_{i}\right)\geq\text{\ensuremath{\max}}\left(\left\langle a_{i},\,G(z_{i})\right\rangle +b_{i},\,c_{i}\right)=\psi\left(G\left(z_{i}\right);\,\mathcal{S},\,Y\right)
\]
for all $z\in Z$, we have that
\[
\sigma_{z_{i}}(z)\triangleq\left\langle a_{i},\,G(z)-G(z_{i})\right\rangle \geq0,\forall z\in T_{\psi\left(G\left(\cdot\right);\,\mathcal{S},\,Y\right)}(c_{i}).
\]
This computation shows that $\sigma_{z_{i}}$ is a level function
of $\psi\left(G\left(\cdot\right);\,\mathcal{S},\,Y\right)$ at $z_{i}$.
Note that the way level functions are used here is closely linked
to the level set representation that we discussion in Section 5.

We are now ready to present an algorithm for solving (\ref{eq:OPTIMIZATION-1})
based on the projected level function method \cite[Algorithm 3.3]{Xu2001}.

\begin{algorithm}
\textbf{Step 1}: Select a starting point $z_{0}\in\mathcal{Z}$, set
$i=0$; specify tolerance level $\epsilon$;

\textbf{Step 2}: Calculate a level function $\sigma_{z_{i}}(z)$ of
$\psi\left(G\left(\cdot\right);\,\mathcal{S},\,Y\right)$, and set
$\sigma_{i}(z):=\inf\left\{ \sigma_{i-1}(z),\,\sigma_{z_{i}}(z)\right\} ,$
where $\sigma_{-1}(z):=\infty$. Let 
\begin{equation}
z_{i}=\arg\max_{z\in\mathcal{Z}}\left\{ \psi\left(G(z_{j});\,\mathcal{S},\,Y\right)\text{ : }j=1,\,...,\,i\right\} ,\label{discretize}
\end{equation}

and $z_{i+1}\in\Pi_{Q_{i}}\left[z_{i}\right]$, where $Q_{i}:=\left\{ z\in\mathcal{Z}\text{ : }\sigma_{i}(z)\geq\lambda\Delta(i)\right\} $, $\Delta(i)=\max_{z\in\mathcal{Z}}\sigma_{i}(z)$, and $\Pi_{Q}\left[\cdot\right]$
denotes the projection operator onto a set $Q$.

\textbf{Step 3}: If $\Delta(i)\leq\epsilon$, stop; otherwise, set
$i:=i+1$, go to Step 2.

\caption{\label{alg:Level}Projected Level Function Method}
\end{algorithm}

In Algorithm \ref{alg:Level}, when $\lambda=1$, $Q_{i}$ is the
set of minimizers of $\sigma_{i}$ over $\mathcal{Z}$. Generally,
we assume that the constant $\lambda$ belongs to $(0,\,1)$. The
worst-case choice function at point $z_{i}\in\mathcal{Z}$ i.e. the
minimizer in $\psi(G\left(z_{i}\right);\,\mathcal{S},\,Y)$, is modeled
given by the solution of Problem (\ref{ROBUST}) - (\ref{ROBUST-6})
at $z_{i}$.

Based on \cite[Proposition 3.1]{Xu2001}, if Algorithm \ref{alg:Level}
terminates at iteration $i$, then the global maximum of $\psi\left((G\left(\cdot\right);\,\mathcal{S},\,Y\right)$
has been obtained. Let $\{z_{i}\}_{i\geq0}$ be the sequence generated
by Algorithm 1. By \cite[Proposition 3.4]{Xu2001}, if $\lim_{i\rightarrow\infty}\Delta(i)=0$,
then there exists a subsequence of $\{z_{i}\}_{i\geq0}$ converging
to a global maximizer of $\psi\left((G\left(\cdot\right);\,\mathcal{S},\,Y\right)$
over $\mathcal{Z}$. Further, we have the convergence rate of Algorithm
\ref{alg:Level}, presented next. 
\begin{thm}
\label{Convergence rate} \cite[Theorem 3.3]{Xu2001} Let $\{z_{i}\}_{i\geq0}$
be a sequence generated by Algorithm \ref{alg:Level}. Assume that
the sequence of level functions $\{\sigma_{z_{i}}(z)\}_{i\geq0}$
is uniformly Lipschitz on $\mathcal{Z}$ with constant $K$. Then
for any $\epsilon>0$, $\Delta(i)\leq\epsilon$ for $i>K^{2}D^{2}\epsilon^{-2}\lambda^{-2}(1-\lambda^{2})^{-1}$
where $D$ is the diameter of convex set $\mathcal{Z}$.
\end{thm}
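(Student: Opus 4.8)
The statement is quoted verbatim from \cite[Theorem~3.3]{Xu2001}, so the plan is not to reprove it from scratch but to confirm that Algorithm~\ref{alg:Level} is a genuine instance of the projected level function method \cite[Algorithm~3.3]{Xu2001} and that the standing hypotheses of that reference hold in our setting, after which the conclusion is immediate. Three points need checking. First, $\mathcal{Z}$ is convex and compact, so that $D<\infty$: convexity and closedness are Assumption~\ref{assu:convexity}(i), and boundedness is the standing requirement for Algorithm~\ref{alg:Level} to be well posed. Second, each aggregate level function $\sigma_i=\min\{\sigma_{i-1},\,\sigma_{z_i}\}$ is continuous and concave on $\mathcal{Z}$: the single-cut functions $\sigma_{z_i}(z)=\langle a_i,\,G(z)-G(z_i)\rangle$ are concave because $G$ is concave $P$-almost surely by Assumption~\ref{assu:convexity}(ii) and $a_i\geq0$ by constraint~(\ref{ROBUST-6}), and a pointwise minimum of concave functions is again concave. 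Third, $\{\sigma_{z_i}\}_{i\geq0}$ is uniformly $K$-Lipschitz, which is exactly the hypothesis of the theorem, so each $\sigma_i$ is $K$-Lipschitz as well. I would also recall, from the discussion preceding Algorithm~\ref{alg:Level}, that $\sigma_{z_i}$ is a genuine level function of $\psi(G(\cdot);\,\mathcal{S},\,Y)$ at $z_i$, so the level-set hypothesis of \cite{Xu2001} is in force.

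For completeness I would sketch the mechanism behind a bound of this shape, which is the familiar projection-based first-order bookkeeping. At an iteration $i$ with $\Delta(i)>0$ we have $\sigma_i(z_i)\leq\sigma_{z_i}(z_i)=0<\lambda\Delta(i)$, so $z_i\notin Q_i$, while $z_{i+1}\in\Pi_{Q_i}[z_i]$ obeys $\sigma_i(z_{i+1})\geq\lambda\Delta(i)$; the $K$-Lipschitz property of $\sigma_i$ then gives the step-size estimate $K\,\|z_{i+1}-z_i\|\geq\sigma_i(z_{i+1})-\sigma_i(z_i)\geq\lambda\Delta(i)$. Fix a maximizer $z^*$ of $\psi(G(\cdot);\,\mathcal{S},\,Y)$ over $\mathcal{Z}$. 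Propagating the level-function property through the minimum defining $\sigma_i$ shows that $z^*$ is not separated from $z_i$ by the step to $z_{i+1}$, so the contraction property of Euclidean projection onto the convex set $Q_i$ yields $\|z_{i+1}-z^*\|^2\leq\|z_i-z^*\|^2-c_\lambda\,\Delta(i)^2/K^2$ for a constant $c_\lambda$ of order $\lambda^2(1-\lambda^2)$, where the factor $1-\lambda^2$ comes from the sharper overshoot accounting in \cite{Xu2001}. Telescoping from $i=0$ and using $\|z_0-z^*\|\leq D$ bounds $\sum_i\Delta(i)^2$ by $K^2D^2/c_\lambda$; since $\{\Delta(i)\}$ is non-increasing, $\Delta(i)>\epsilon$ throughout $\{0,\dots,i\}$ forces $(i+1)\,\epsilon^2\leq K^2D^2/c_\lambda$, which is the stated iteration count.

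The one genuinely delicate point — where I would defer to \cite[Section~3]{Xu2001} rather than argue directly — is the claim that the projection step truly contracts the distance to the optimal set, i.e.\ that the level $\lambda\Delta(i)$ is chosen low enough that the maximizers of $\psi(G(\cdot);\,\mathcal{S},\,Y)$ are never cut off by $Q_i$. Making this precise requires the interplay among $\sigma_i$, the gap surrogate $\Delta(i)$, and the best-iterate update rule~(\ref{discretize}), and this is exactly the heart of the convergence analysis there. Everything else — concavity and Lipschitz continuity of the $\sigma_i$, the level-function property of each $\sigma_{z_i}$, and the telescoping estimate — is routine in our setting given Assumption~\ref{assu:convexity} and the construction in Section~6.
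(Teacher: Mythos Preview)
Your proposal is correct and in fact more thorough than what the paper itself does: the paper provides no proof whatsoever for this theorem, simply stating it with the citation \cite[Theorem~3.3]{Xu2001} as its sole justification. Your verification that Algorithm~\ref{alg:Level} instantiates \cite[Algorithm~3.3]{Xu2001} under the paper's standing assumptions, together with the sketch of the projection-contraction bookkeeping, goes well beyond the paper's treatment and is entirely appropriate.
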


\section{Application to homeland security}

In \cite{Phillips2007}, the authors highlight the need for multiple
criteria when investing limited resources. In this section, we apply
our methods to a real world multi-attribute budget allocation problem
in homeland security where we are investing resources to protect against
terrorist threats. This application is based on the case study originally
investigated in \cite{Hu2011}. Specifically, we want to allocate
resources among ten major cities in the United States under the Urban
Areas Security Initiative (UASI) of the Department of Homeland Security
(DHS). DHS is the principle decision maker in this application.

\subsection{Problem description}

We begin by introducing the problem setup and notations. We have $m=10$
cities among which we need to allocate the budget. There are three
possible underlying loss scenarios corresponding to different levels
of terrorist attacks: 
\[
\left\{ \text{reduced loss, standard loss, increased loss}\right\} .
\]
We remark on the probability distribution over these scenarios later
in this section. There are $n=4$ attributes for measuring the effect
of a terrorist attack:
\begin{itemize}
\item \textit{property loss} measures the impact on economic structures
and individual property;
\item \textit{fatalities} measures the loss of human lives;
\item \textit{air departures} measures the number of outgoing flights from
the city's major airport;
\item \textit{bridge traffic} measures vehicle movement on the city's major
bridges.
\end{itemize}
These four attributes also appear in the case study in \cite{Hu2011}
and give a comprehensive measure of the health of a city.

We denote the budget allocated to city $j$ corresponding to attribute
$i$ by $z_{ij}$, and allocations $z=\left(z_{ij}\right)$ must satisfy
$z\in\mathcal{Z}\subseteq\mathbb{R}^{(n\times m)}$, where $\mathcal{Z}$
represents the set of feasible resource allocations:
\[
\mathcal{Z}\triangleq\left\{ z\in\mathbb{R}^{(n\times m)}\text{ : }\sum_{i=1}^{n}\sum_{j=1}^{m}z_{ij}\leq B\right\} ,
\]
and $B$ is the overall budget. Based on the budget allocation models
in \cite{Nikoofal2012,MacKenzie2012}, we may use an exponential function
$g\text{ : }\mathbb{R}\rightarrow\mathbb{R}$ to measure the effectiveness
of investment. Let $v_{ij}$ denote the target for city $j$ and attribute
$i$ (for instance, this target can be chosen to be the possible maximum
value of loss for this attribute - indeed, we take the target to be
this maximum value in our upcoming experiments). Let $\delta\in(0,\,1]$
denote the effectiveness ratio of investment to obtain the function
$g_{ij}$ for each $z_{ij}$ defined as
\begin{equation}
g_{ij}\left(z_{ij}\right)\triangleq v_{ij}\left(1-\exp\left(-\delta\,z_{ij}\right)\right),\label{effectiveness}
\end{equation}
for all attributes $i$ and cities $j$.

The random loss at city $j$ with respect to attribute $i$ is denoted
by $C_{ij}$. We have an $(n\times m)$ random loss matrix $C$ that
captures the loss for all attributes in all cities. Given a particular
budget allocation $z\in\mathcal{Z}$, the shortfall with respect to
attribute $i$ is measured by the function 
\[
C_{i}(z)\triangleq\sum_{j=1}^{m}(C_{ij}-g(z_{ij}))_{+},
\]
where $(\cdot)_{+}\triangleq\max\left\{ \cdot,\,0\right\} $. The
quantity $C_{i}\left(z\right)$ can be viewed as the sum of the shortfall
in attribute $i$ over all cities. If $g(z_{ij})\geq C_{ij}$, then
the allocated resources exceed each cities' need and there is no shortfall.
We combine the shortfall for each attribute into the vector-valued
mapping 
\[
C\left(z\right)=\left(C_{1}\left(z\right),\ldots,\,C_{n}\left(z\right)\right).
\]
The negative value of the shortfall, i.e. $-C\left(z\right)$ can
be viewed as the ``reward''. It is clear that $-C(z)$ is component-wise
concave in $z$ (since each component $-C_{i}\left(z\right)$ is a
concave function of $z$).

We consider two models in this case study. The first is risk-neutral
and minimizes expected shortfall:

\begin{equation}
\min_{z\in\mathcal{Z}}\mathbb{E}\left[\sum_{i=1}^{n}\sum_{j=1}^{m}(C_{ij}-g(z_{ij}))_{+}\right].\label{NEUTRAL}
\end{equation}
The second is our robust preference model with the special ambiguity
set $\mathcal{S}$:
\begin{equation}
\max_{z\in\mathcal{Z}}\inf_{\rho\in\mathcal{S}}\rho(-C(z)).\label{DHS}
\end{equation}
Problem (\ref{DHS}) is especially relevant at the governmental level
where the preferences of many different stakeholders must be accommodated.
In \cite{raey}, the stakeholders are noted to include: federal agencies;
department and component officials; state, local, and tribal governments;
the private sector; academics; and policy experts.

\subsection{Data}

We now describe the data set used in our experiments. First we describe
the data for our four attributes: (i) property loss; (ii) fatalities;
(iii) air departures; and (iv) bridge traffic. Table 1 represents
the data obtained from \cite{Willis} and \cite{Bier2008} (these
data were previously applied in \cite{Hu2011}). The ten cites in
Table 1 received 40\% of the total UASI budget in 2004 and 60\% of
the total UASI budget in 2009. The number of air departures in these
ten urban areas accounts for roughly one third of the total air departures
in the United States (see \cite{Bier2008}). In Table 1, the data
on property loss and fatalities are recorded as random losses in millions
of dollars. The data on air departures and bridge traffic are recorded
as daily averages. 

\begin{table}
\begin{centering}
{\scriptsize{}}%
\begin{tabular}{lcccccccc}
\hline 
 & \multicolumn{3}{c}{\emph{\scriptsize{}Property losses (\$ million)}} & \multicolumn{3}{c}{\emph{\scriptsize{}Fatalities (\$ million)}} & \multicolumn{1}{c}{\emph{\scriptsize{}Average daily air }} & \multicolumn{1}{c}{\emph{\scriptsize{}Average daily}}\tabularnewline
\cline{2-7} 
\emph{\scriptsize{}Urban area} & \multirow{1}{*}{\emph{\scriptsize{}Standard}} & \emph{\scriptsize{}Reduced} & \emph{\scriptsize{}Increased} & \emph{\scriptsize{}Standard} & \emph{\scriptsize{}Reduced} & \emph{\scriptsize{}Increased} & \emph{\scriptsize{}departures} & \emph{\scriptsize{} bridge traffic}\tabularnewline
\hline 
{\scriptsize{}New York} & {\scriptsize{}413} & {\scriptsize{}265} & {\scriptsize{}550} & {\scriptsize{}304} & {\scriptsize{}221} & {\scriptsize{}401} & {\scriptsize{}23599} & {\scriptsize{}596400}\tabularnewline
{\scriptsize{}Chicago} & {\scriptsize{}115} & {\scriptsize{}77} & {\scriptsize{}150} & {\scriptsize{}54} & {\scriptsize{}38} & {\scriptsize{}73} & {\scriptsize{}39949} & {\scriptsize{}318800}\tabularnewline
{\scriptsize{}San Francisco} & {\scriptsize{}57} & {\scriptsize{}38} & {\scriptsize{}81} & {\scriptsize{}24} & {\scriptsize{}16} & {\scriptsize{}36} & {\scriptsize{}19142} & {\scriptsize{}277700}\tabularnewline
{\scriptsize{}Washington} & {\scriptsize{}36} & {\scriptsize{}21} & {\scriptsize{}59} & {\scriptsize{}29} & {\scriptsize{}16} & {\scriptsize{}48} & {\scriptsize{}17253} & {\scriptsize{}254975}\tabularnewline
{\scriptsize{}Los Angeles} & {\scriptsize{}34} & {\scriptsize{}16} & {\scriptsize{}58} & {\scriptsize{}17} & {\scriptsize{}7} & {\scriptsize{}31} & {\scriptsize{}28816} & {\scriptsize{}336000}\tabularnewline
{\scriptsize{}Philadelphia} & {\scriptsize{}21} & {\scriptsize{}8} & {\scriptsize{}28} & {\scriptsize{}9} & {\scriptsize{}5} & {\scriptsize{}13} & {\scriptsize{}13640} & {\scriptsize{}192204}\tabularnewline
{\scriptsize{}Boston} & {\scriptsize{}18} & {\scriptsize{}8.3} & {\scriptsize{}26} & {\scriptsize{}12} & {\scriptsize{}8} & {\scriptsize{}17} & {\scriptsize{}11625} & {\scriptsize{}669000}\tabularnewline
{\scriptsize{}Houston} & {\scriptsize{}11} & {\scriptsize{}6.7} & {\scriptsize{}15} & {\scriptsize{}9} & {\scriptsize{}6} & {\scriptsize{}12} & {\scriptsize{}20979} & {\scriptsize{}308600}\tabularnewline
{\scriptsize{}Newark} & {\scriptsize{}7.3} & {\scriptsize{}0.8} & {\scriptsize{}12} & {\scriptsize{}4} & {\scriptsize{}0.1} & {\scriptsize{}9} & {\scriptsize{}12827} & {\scriptsize{}518100}\tabularnewline
{\scriptsize{}Seattle} & {\scriptsize{}6.7} & {\scriptsize{}4} & {\scriptsize{}10} & {\scriptsize{}4} & {\scriptsize{}3} & {\scriptsize{}6} & {\scriptsize{}13578} & {\scriptsize{}212000}\tabularnewline
\hline 
\end{tabular}
\par\end{centering}{\scriptsize \par}
\caption{Terrorism losses, average daily air departures, and average daily
bridge traffic}
\end{table}
We follow \cite{Hu2011} and assume that both daily air departures
and daily bridge traffic have a log-uniform distribution (the log-uniform
distribution is also used in \cite{Willis}). We let the random variable
$U$ have distribution:
\[
\mathbb{P}(U=-1)=\mathbb{P}(U=0)=\mathbb{P}(U=1)=1/3.
\]
For attributes $i=3,\,4$ (air departures and bridge traffic), we
set $\overline{T}_{ij}$ to be the \textit{averages} from columns
3 and 4 of Table 1. For a constant $\gamma>1$ (which controls volatility),
we set
\[
T_{ij}=\left(\frac{2\,\gamma\,\overline{T}_{ij}\ln\gamma}{(\gamma^{2}-1)}\right)\gamma^{U}
\]
to be the random number of daily incidents in each city $j$ for $i=3,\,4$.
Finally, we let $C_{ij}=c_{i}T_{ij}$ denote the random costs corresponding
to attributes $i=3,\,4$, where $c_{i}$ is the economic loss per
incident of attribute $i$. In our experiments, we set $\gamma=1.1$,
$c_{3}=\$500$, and $c_{4}=\$300$. Table 2 shows the random losses
for daily air departures (DAD) and daily bridge traffic (DBT) under
each scenario using this construction.

\begin{table}
\begin{centering}
{\scriptsize{}}%
\begin{tabular}{lcccccc}
\hline 
 & \multicolumn{3}{c}{\emph{\scriptsize{}DAD (\$ million)}} & \multicolumn{3}{c}{\emph{\scriptsize{}DBT (\$ million)}}\tabularnewline
\cline{2-7} 
\emph{\scriptsize{}Urban area} & \emph{\scriptsize{}Standard} & \emph{\scriptsize{}Reduced} & \emph{\scriptsize{}Increased} & \emph{\scriptsize{}Standard} & \emph{\scriptsize{}Reduced} & \emph{\scriptsize{}Increased}\tabularnewline
\hline 
{\scriptsize{}New York} & {\scriptsize{}10.71} & \multicolumn{1}{c}{{\scriptsize{}11.78}} & {\scriptsize{}12.96} & {\scriptsize{}162.41} & {\scriptsize{}178.65} & {\scriptsize{}196.51}\tabularnewline
{\scriptsize{}Chicago} & {\scriptsize{}18.13} & {\scriptsize{}19.94} & {\scriptsize{}21.94} & {\scriptsize{}86.81} & {\scriptsize{}95.50} & {\scriptsize{}105.04}\tabularnewline
{\scriptsize{}San Francisco} & {\scriptsize{}8.69} & {\scriptsize{}9.56} & {\scriptsize{}10.51} & {\scriptsize{}75.62} & {\scriptsize{}83.18} & {\scriptsize{}91.50}\tabularnewline
{\scriptsize{}Washington} & {\scriptsize{}7.83} & {\scriptsize{}8.61} & {\scriptsize{}9.47} & {\scriptsize{}69.43} & {\scriptsize{}76.38} & {\scriptsize{}84.01}\tabularnewline
{\scriptsize{}Los Angeles} & {\scriptsize{}13.08} & {\scriptsize{}14.39} & {\scriptsize{}15.82} & {\scriptsize{}91.50} & {\scriptsize{}100.65} & {\scriptsize{}110.71}\tabularnewline
{\scriptsize{}Philadelphia} & {\scriptsize{}6.19} & {\scriptsize{}6.81} & {\scriptsize{}7.49} & {\scriptsize{}52.34} & {\scriptsize{}57.57} & {\scriptsize{}63.3314}\tabularnewline
{\scriptsize{}Boston} & {\scriptsize{}5.28} & {\scriptsize{}5.80} & {\scriptsize{}6.38} & {\scriptsize{}182.18} & {\scriptsize{}200.40} & {\scriptsize{}220.44}\tabularnewline
{\scriptsize{}Houston} & {\scriptsize{}9.52} & {\scriptsize{}10.47} & {\scriptsize{}11.52} & {\scriptsize{}84.04} & {\scriptsize{}92.44} & {\scriptsize{}101.68}\tabularnewline
{\scriptsize{}Newark} & {\scriptsize{}5.82} & {\scriptsize{}6.40} & {\scriptsize{}7.04} & {\scriptsize{}141.09} & {\scriptsize{}155.19} & {\scriptsize{}170.71}\tabularnewline
{\scriptsize{}Seattle} & {\scriptsize{}6.16} & {\scriptsize{}6.78} & {\scriptsize{}7.46} & {\scriptsize{}57.37} & {\scriptsize{}63.50} & {\scriptsize{}69.85}\tabularnewline
\hline 
\end{tabular}
\par\end{centering}{\scriptsize \par}
\caption{Air departures and bridge traffic}
\end{table}
The data in Tables 1 and 2 are used to form the random loss matrix
$C$.

\subsection{The Experiments}

In this section, we describe the details of our experiments and present
the results. We start by describing the preference elicitation procedure,
where we simulate the preferences of DHS by using the multivariate
certainty equivalent. First, we take a continuous, monotonically increasing,
and concave univariate utility function $u\text{ : }\mathbb{R}\rightarrow\mathbb{R}$
with $u(0)=0$. Then, we choose weights $w\in\mathbb{R}^{n}$ with
$w\geq0$ and $\|w\|_{2}=1$ to define the multivariate certainty
equivalent
\begin{equation}
\rho(X)=u^{-1}\left\{ \mathbb{E}\left[u\left(\left\langle w,\,-X\right\rangle \right)\right]\right\} ,\,X\in\mathcal{L},\label{MCE}
\end{equation}
where $u^{-1}$ is the inverse of $u$. Here, we use the exponential
utility function $u(y)=-\exp(-\kappa\,y)$ with $\kappa=0.05$. 

The data set used to elicit preferences consists of many pairs of
prospects. DHS is asked to choose the preferred prospect from each
pair of prospects (the preferred prospect is denoted $W_{i}$ and
the other is denoted $Y_{i}$), and the choice is made using the choice
function (\ref{MCE}). We refer to this data set as the ``elicited
comparison data set''.
\begin{rem}
We emphasi\textcolor{black}{ze that we select the choice function
(\ref{MCE}) and the value of $\kappa$ artificially f}or our experiments.
It is used to elicit consistent preferences on the elicited comparison
data set. Our robust preference model (\ref{DHS}) and our algorithm
do not have knowledge of this choice function.
\end{rem}

In the experiments, we set the overall budget to be $B=\$\,400$ million
and the effectiveness ratio of investment to be $\delta=0.05$ (these
are the same settings as in \cite{Nikoofal2012}).\textcolor{red}{{}
}For computation tractability, we use a piecewise linear function
to approximate the original effectiveness function (\ref{effectiveness}).
This piecewise linear function is constructed from a set of segments
with ten breakpoints $\{5,\,10,\,20,\,30,\,40,\,50,\,75,\,100,\,150,\,200\}$. 

We assume that the budget allocated to each city $j$ corresponding
to attribute $i$ cannot be less than \$$1$ million, i.e. $z_{ij}\geq1$
for all $i=1,\ldots,\,n$ and $j=1,\ldots,\,m$. In Problem (\ref{MILP})
- (\ref{MILP-10}), we take the modulus of Lipschitz continuity to
be $L=1/12$ (this choice is without loss of generality since the
robust choice function $\psi\left(\cdot;\,\mathcal{S},\,Y\right)$
will preserve the same preferences even if we change the modulus of
continuity of the choice functions in $\mathcal{S}$). For implementation
of Algorithm \ref{alg:Level}, we set $\lambda=0.8$ and $\epsilon=0.0001$.
It should be noted that, when solving Problem (\ref{DHS}), the value
of the function $\psi\left(-C\left(\cdot\right);\,\mathcal{S},\,Y\right)$
is obtained by solving the MILP (\ref{MILP})-(\ref{MILP-10}). The
optimal solution from Problem (\ref{discretize}) in Algorithm \ref{alg:Level}
is obtained by solving a sequence of MILPs. Finally, we set the DHS
targets $v_{ij}$ for $i=1,\ldots,\,n$ and $j=1,\ldots,\,m$ as shown
in Table 3. These targets vary from city to city since they are related
to the economic condition of the corresponding city.

\begin{table}
\centering{}{\scriptsize{}}%
\begin{tabular}{lc}
\hline 
 & \multicolumn{1}{c}{\emph{\scriptsize{}Targets }}\tabularnewline
\emph{\scriptsize{}Urban area} & \emph{\scriptsize{}(\$ million)}\tabularnewline
\hline 
{\scriptsize{}New York} & {\footnotesize{}500}\tabularnewline
{\scriptsize{}Chicago} & {\footnotesize{}450}\tabularnewline
{\scriptsize{}San Francisco} & {\footnotesize{}400}\tabularnewline
{\scriptsize{}Washington} & {\footnotesize{}350}\tabularnewline
{\scriptsize{}Los Angeles} & {\footnotesize{}300}\tabularnewline
{\scriptsize{}Philadelphia} & {\footnotesize{}250}\tabularnewline
{\scriptsize{}Boston} & {\footnotesize{}200}\tabularnewline
{\scriptsize{}Houston} & {\footnotesize{}150}\tabularnewline
{\scriptsize{}Newark} & {\footnotesize{}100}\tabularnewline
{\scriptsize{}Seattle} & {\footnotesize{}50}\tabularnewline
\hline 
\end{tabular}\caption{Targets}
\end{table}

\subsubsection{Experiment I: Risk-neutral vs. Robust preference}

In our first experiment, we compare the solutions of Problems (\ref{NEUTRAL})
and (\ref{DHS}). The elicited comparison data set used for this experiment
is presented in Appendix B, it contains five pairs of prospects in
total. We define these prospects over three scenarios where each scenario
has an equal one-third probability of being realized. For each prospect,
we generate the random loss level from the uniform distribution on
$[-1000,\,0${]}\textcolor{black}{. In this experiment, Algorithm
\ref{alg:Level} terminates in iteration $i=12$ when it finds the
optimal solution of Problem (\ref{DHS}). W}hen solving Problems (\ref{NEUTRAL})
and (\ref{DHS}), we obtain the optimal UASI budget allocations as
shown in following tables. Table 4 gives the optimal risk-neutral
UASI budget allocation and Table 5 gives the optimal UASI budget allocation
for our robust choice model.

\begin{table}[t]
\begin{centering}
{\footnotesize{}}%
\begin{tabular}{lcccc}
\hline 
 & \multicolumn{1}{c}{\emph{\footnotesize{}Property losses }} & \multirow{1}{*}{\emph{\footnotesize{}Fatalities}} & \emph{\footnotesize{}Air departures} & \emph{\footnotesize{}Average daily bridge traffic}\tabularnewline
\emph{\footnotesize{}Urban area} & \emph{\footnotesize{}(\$ million)} & \emph{\footnotesize{} (\$ million)} & \emph{\footnotesize{}(\$ million)} & \emph{\footnotesize{}(\$ million)}\tabularnewline
\hline 
{\scriptsize{}New York} & {\footnotesize{}62.472} & {\footnotesize{}32.253} & {\footnotesize{}1.000} & {\footnotesize{}9.985}\tabularnewline
{\scriptsize{}Chicago} & {\footnotesize{}8.017} & {\footnotesize{}3.485} & {\footnotesize{}1.000} & {\footnotesize{}5.314}\tabularnewline
{\scriptsize{}San Francisco} & {\footnotesize{}4.520} & {\footnotesize{}1.631} & {\footnotesize{}1.000} & {\footnotesize{}5.194}\tabularnewline
{\scriptsize{}Washington} & {\footnotesize{}3.648} & {\footnotesize{}2.841} & {\footnotesize{}1.000} & {\footnotesize{}5.484}\tabularnewline
{\scriptsize{}Los Angeles} & {\footnotesize{}4.284} & {\footnotesize{}1.973} & {\footnotesize{}1.000} & {\footnotesize{}9.194}\tabularnewline
{\scriptsize{}Philadelphia} & {\footnotesize{}2.196} & {\footnotesize{}1.000} & {\footnotesize{}1.000} & {\footnotesize{}5.825}\tabularnewline
{\scriptsize{}Boston} & {\footnotesize{}2.658} & {\footnotesize{}1.502} & {\footnotesize{}1.000} & {\footnotesize{}59.961}\tabularnewline
{\scriptsize{}Houston} & {\footnotesize{}1.888} & {\footnotesize{}1.374} & {\footnotesize{}1.000} & {\footnotesize{}22.488}\tabularnewline
{\scriptsize{}Newark} & {\footnotesize{}2.401} & {\footnotesize{}1.631} & {\footnotesize{}1.000} & {\footnotesize{}59.961}\tabularnewline
{\scriptsize{}Seattle} & {\footnotesize{}4.456} & {\footnotesize{}2.401} & {\footnotesize{}1.000} & {\footnotesize{}59.961}\tabularnewline
\hline 
\end{tabular}
\par\end{centering}{\footnotesize \par}
\caption{Optimal UASI budget allocation: risk-neutral}
\end{table}

\begin{table}[t]
\begin{centering}
{\footnotesize{}}%
\begin{tabular}{lcccc}
\hline 
 & \multicolumn{1}{c}{\emph{\footnotesize{}Property losses }} & \multirow{1}{*}{\emph{\footnotesize{}Fatalities}} & \emph{\footnotesize{}Air departures} & \emph{\footnotesize{}Average daily bridge traffic}\tabularnewline
\emph{\footnotesize{}Urban area} & \emph{\footnotesize{}(\$ million)} & \emph{\footnotesize{} (\$ million)} & \emph{\footnotesize{}(\$ million)} & \emph{\footnotesize{}(\$ million)}\tabularnewline
\hline 
{\scriptsize{}New York} & {\footnotesize{}78.829} & {\footnotesize{}32.253} & {\footnotesize{}1.000} & {\footnotesize{}9.985}\tabularnewline
{\scriptsize{}Chicago} & {\footnotesize{}8.017} & {\footnotesize{}3.485} & {\footnotesize{}1.000} & {\footnotesize{}5.314}\tabularnewline
{\scriptsize{}San Francisco} & {\footnotesize{}4.520} & {\footnotesize{}1.631} & {\footnotesize{}1.000} & {\footnotesize{}5.194}\tabularnewline
{\scriptsize{}Washington} & {\footnotesize{}3.648} & {\footnotesize{}2.841} & {\footnotesize{}1.000} & {\footnotesize{}5.484}\tabularnewline
{\scriptsize{}Los Angeles} & {\footnotesize{}4.284} & {\footnotesize{}1.973} & {\footnotesize{}1.000} & {\footnotesize{}9.195}\tabularnewline
{\scriptsize{}Philadelphia} & {\footnotesize{}2.196} & {\footnotesize{}1.000} & {\footnotesize{}1.000} & {\footnotesize{}5.825}\tabularnewline
{\scriptsize{}Boston} & {\footnotesize{}2.658} & {\footnotesize{}1.502} & {\footnotesize{}1.000} & {\footnotesize{}59.961}\tabularnewline
{\scriptsize{}Houston} & {\footnotesize{}1.888} & {\footnotesize{}1.374} & {\footnotesize{}1.000} & {\footnotesize{}22.488}\tabularnewline
{\scriptsize{}Newark} & {\footnotesize{}2.401} & {\footnotesize{}1.631} & {\footnotesize{}1.000} & {\footnotesize{}59.961}\tabularnewline
{\scriptsize{}Seattle} & {\footnotesize{}4.456} & {\footnotesize{}2.401} & {\footnotesize{}1.000} & {\footnotesize{}43.604}\tabularnewline
\hline 
\end{tabular}
\par\end{centering}{\footnotesize \par}
\caption{Optimal UASI budget allocation: robust choice}
\end{table}
We test the performance of these two budget allocation plans in simulation
with $1000$ i.i.d samples. Figure 1 shows the histogram of costs
after $1000$ simulations for both the risk-neutral and robust choice
budgets. From Figure 1, we observe that the loss distribution of the
optimal budget allocation from Problem (\ref{DHS}) generally has
a slightly higher mean but lower variance compared to the optimal
budget allocation from Problem (\ref{NEUTRAL}). To make the comparison
more apparent, we fit a normal distribution to the simulation output.
Figure 2 shows that the random loss distribution of the optimal budget
allocation from our robust preference model second-order stochastically
dominates the random loss distribution of the optimal budget allocation
of the risk-neutral model. Let $F_{R}$ and $F_{N}$ denote the cumulative
distribution function of random loss distribution of the optimal budges
from the robust preference and risk-neutral models, respectively.
We can explicitly verify that second-order stochastic dominance is
satisfied by verifying that $\int_{-\infty}^{x}\left|F_{R}(t)-F_{N}(t)\right|dt\geq0.0288$
for all real numbers $x$. 

We also estimate the mean and variance of the loss distribution, as
well as compute its $5\%$ and $10\%$-level CVaR, under these two
optimal budget allocation plans. The results in Table 6 show explicitly
that the optimal budget allocation from Problem (\ref{DHS}) has a
slightly higher mean but that it also has a lower variance, $5\%$-CVaR,
and $10\%$-CVaR compared to the optimal budget allocation from Problem
(\ref{NEUTRAL}). This table also reveals that the budget allocation
from our robust choice model induces a lower probability of suffering
extremely high losses. Thus, we claim that the optimal budget allocation
from Problem (\ref{DHS}) is more resilient against loss from terror
attacks. These experimental results are in line with what we expect
from the diversification favoring behavior of quasi-concave choice
functions.

\begin{figure}
\begin{centering}
\includegraphics[scale=0.6]{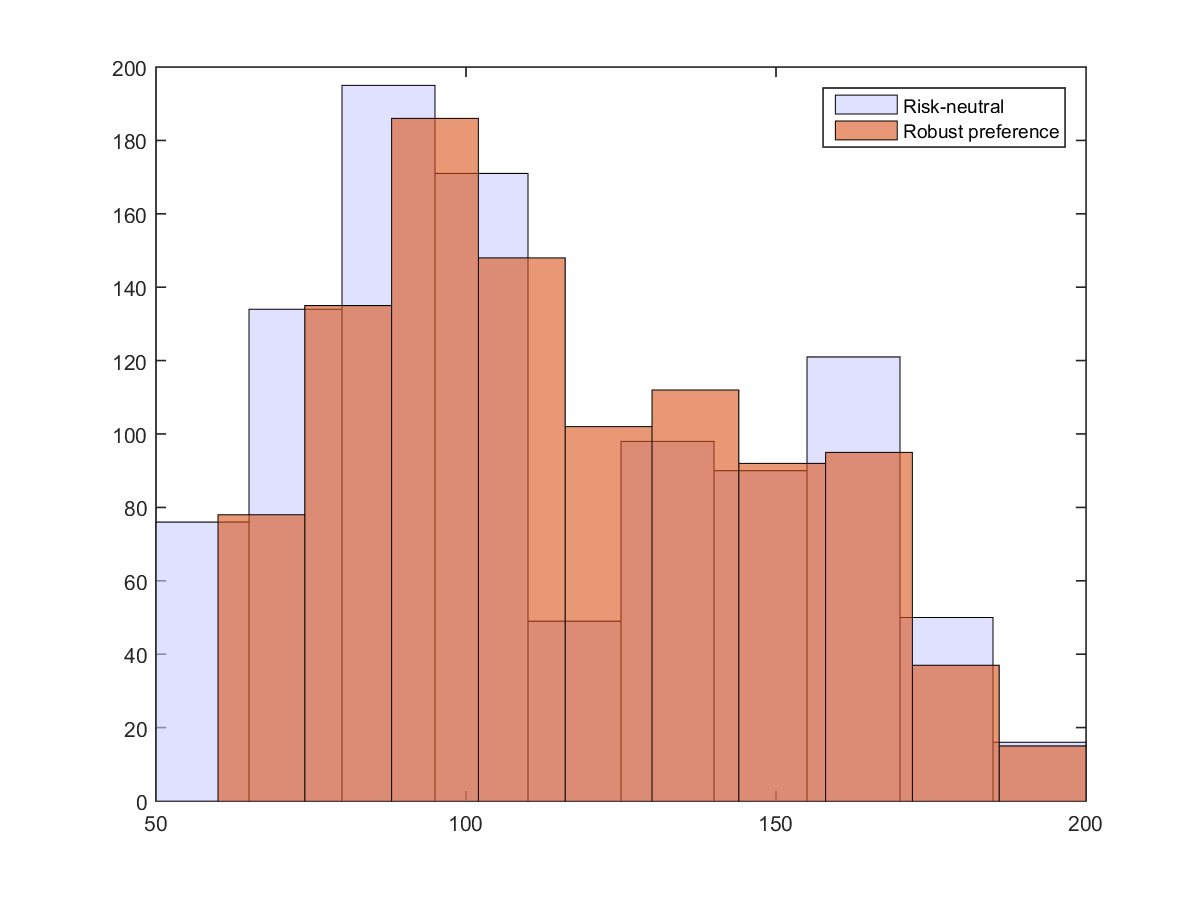}
\par\end{centering}
\begin{centering}
\includegraphics[scale=0.6]{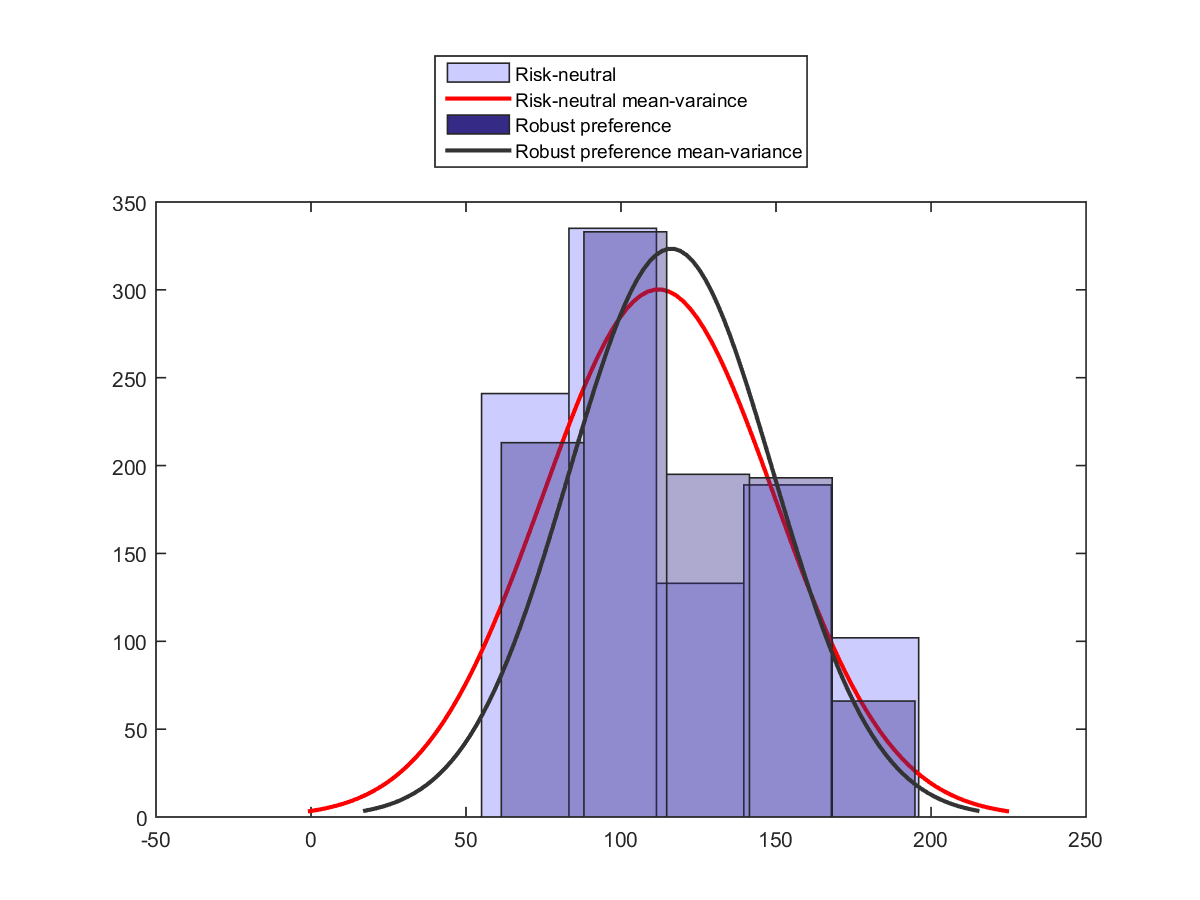}
\par\end{centering}
\caption{Simulation results}
\end{figure}

\begin{table}[t]
\begin{centering}
{\small{}}%
\begin{tabular}{lcccc}
\hline 
\emph{\small{}Budget allocation} & \emph{\small{}Mean (\$ million)} & \emph{\small{}Variance } & \emph{\small{}10\%-CVaR (\$ million)} & \emph{\small{}5\%-CVaR (\$ million)}\tabularnewline
\hline 
{\small{}Risk-neutral } & {\small{}112.088} & {\small{}1404.362} & {\small{}177.273} & {\small{}183.806}\tabularnewline
{\small{}Robust preference} & {\small{}116.213} & {\small{}1083.492} & {\small{}174.264} & {\small{}180.884}\tabularnewline
\hline 
\end{tabular}
\par\end{centering}{\small \par}
\caption{Optimal UASI budget allocation: robust choice}
\end{table}

\begin{figure}
\begin{centering}
\includegraphics[scale=0.5]{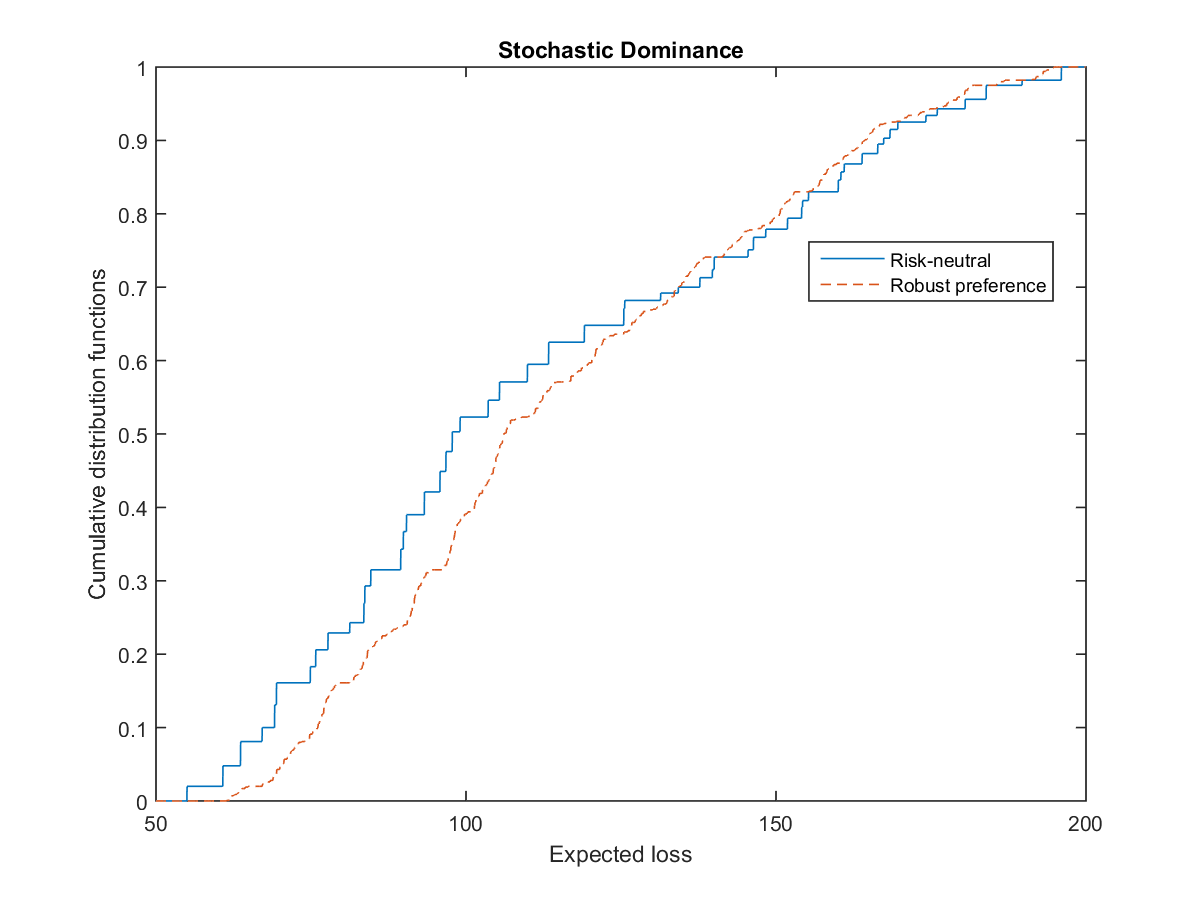}
\par\end{centering}
\caption{Stochastic dominance}
\end{figure}

\subsubsection{Experiment II: Sensitivity analysis}

In our second experiment, we conduct sensitivity analysis on the effectiveness
ratio $\delta$ in (\ref{effectiveness}) in our robust choice model.
Also, we study how sensitive the optimal budget allocation is to different
elicited comparison data sets. In this experiment, we fix $\kappa=0.05$.

Table 7 shows that as the effectiveness ratio $\delta$ increases,
the optimal budget allocation becomes less sensitive, and the expected
loss decreases. One possible explanation is that if $\delta$ is extremely
low, putting all of the budget in a single city and on a single attribute
cannot achieve an investment close to the corresponding economic target.
However, when $\delta$ is sufficiently large, the investment triggered
by the budget allocated in each city on each attribute is able to
achieve its economic target. In this case, the optimal budget allocation
will be insensitive to changes in $\delta$ and the expected loss
becomes stable. 

Table 8 shows how sensitive the optimal budget allocation is to different
elicited comparison data sets in our robust choice model. In this
experiment, we fix $\kappa=\delta=0.05$. We run the experiments for
twenty different elicited comparison data sets with the same number
of pairs. The ``variance of the optimal budget allocation'' is computed
by summing the variances of the optimal budget allocations in each
city on each attribute among twenty groups. The ``variance of expected
loss'' is the variance of the expected loss from the optimal budget
allocation among twenty groups. As the size of elicited comparison
data set increases, these two variance measures will increase first
and then decrease. We hypothesize that when the number of pairs is
small, the preference estimation error is high. When the size of the
elicited comparison data set is large, the preference estimation error
is reduced and differences in elicited comparison data sets will not
affect the optimal budget allocation. 

\begin{table}
\begin{centering}
{\footnotesize{}}%
\begin{tabular}{ccccc}
\hline 
{\footnotesize{}Effectiveness ratio $\delta$} & {\footnotesize{}0.01} & {\footnotesize{}0.02} & {\footnotesize{}0.04} & {\footnotesize{}0.05}\tabularnewline
\hline 
\emph{\footnotesize{}Expected loss (\$ million)} & \textcolor{black}{\footnotesize{}1034.162} & \textcolor{black}{\footnotesize{}483.320} & \textcolor{black}{\footnotesize{}145.784} & \textcolor{black}{\footnotesize{}111.525}\tabularnewline
\hline 
{\footnotesize{}Effectiveness ratio $\delta$} & \textcolor{black}{\footnotesize{}0.1} & \textcolor{black}{\footnotesize{}0.2} & \textcolor{black}{\footnotesize{}0.5} & \textcolor{black}{\footnotesize{}0.8}\tabularnewline
\hline 
\emph{\footnotesize{}Expected loss (\$ million)} & \textcolor{black}{\footnotesize{}93.059} & \textcolor{black}{\footnotesize{}92.973} & \textcolor{black}{\footnotesize{}92.972} & \textcolor{black}{\footnotesize{}92.972}\tabularnewline
\hline 
\end{tabular}
\par\end{centering}{\footnotesize \par}
\caption{Sensitivity analysis $\delta$}
\end{table}

\begin{table}
\begin{centering}
{\footnotesize{}}%
\begin{tabular}{ccccccc}
\hline 
{\footnotesize{}Number of pairs} & {\footnotesize{}1} & {\footnotesize{}3} & {\footnotesize{}5} & {\footnotesize{}10} & {\footnotesize{}25} & {\footnotesize{}50}\tabularnewline
\hline 
\emph{\footnotesize{}Variance of optimal budget allocation} & {\footnotesize{}2.45E-29 } & {\footnotesize{}2.45E-29 } & {\footnotesize{}3.50E-07 } & {\footnotesize{}3.664429998 } & {\footnotesize{}6.38E-06 } & {\footnotesize{}4.87E-08 }\tabularnewline
\hline 
\emph{\footnotesize{}Variance of expected loss} & {\footnotesize{}8.50E-28 } & {\footnotesize{}8.50E-28 } & {\footnotesize{}3.11E-10 } & {\footnotesize{}0.139151122 } & {\footnotesize{}5.80E-09 } & {\footnotesize{}2.23E-11}\tabularnewline
\hline 
\end{tabular}
\par\end{centering}{\footnotesize \par}
\caption{Sensitivity analysis: elicited comparison data set}
\end{table}

\subsubsection{Experiment III: Out-of-sample performance}

In our third experiment, we conduct out-of-sample tests on the effectiveness
of our robust choice model (\ref{MILP})-(\ref{MILP-10}), in terms
of eliciting the true preferences of the decision maker, when the
true choice function is (\ref{MCE}). Two data sets are constructed
for this experiment: one is the elicited comparison data set with
$100$ pairs of prospects, the other is a test data set with twenty
pairs of prospects. Both data sets are generated following the same
procedures as for Experiment I. In each run, we select the first $I$
pairs of prospects from all $100$ pairs, 
\begin{align*}
I\in & \{1,\,2,\,3,\,4,\,5,\,6,\,10,\,15,\,20,\,25,\\
 & 30,\,35,\,40,\,45,\,50,\,60,\,75,\,100\},
\end{align*}
to form the elicited comparison data set, and then we solve Problem
(\ref{MILP})-(\ref{MILP-10}) to compare the robust preference of
each pair of prospects in the test data set. We compare the robust
preference with the true preference derived by the true choice function
(\ref{MCE}), and record the number of violated orders in the test
data set. Table 9 shows that as $|I|$ increases, the number of violated
orders in the test data set diminishes. This phenomenon confirms that,
as $|I|$ increases, the model (\ref{MILP})-(\ref{MILP-10}) with
extra preference information provides a better estimate of the preferences
induced by the true choice function (\ref{MCE}). 

\begin{table}
\begin{centering}
{\footnotesize{}}%
\begin{tabular}{cccccccccc}
\hline 
{\footnotesize{}Number of pairs} & {\footnotesize{}1} & {\footnotesize{}2} & {\footnotesize{}3} & {\footnotesize{}4} & {\footnotesize{}5} & {\footnotesize{}6} & {\footnotesize{}10} & {\footnotesize{}15} & {\footnotesize{}20}\tabularnewline
\hline 
\emph{\footnotesize{}Number of violated orders} & {\footnotesize{}7} & {\footnotesize{}7} & {\footnotesize{}7} & {\footnotesize{}7} & {\footnotesize{}7} & {\footnotesize{}6} & {\footnotesize{}5} & {\footnotesize{}4} & {\footnotesize{}4}\tabularnewline
\hline 
{\footnotesize{}Number of pairs} & {\footnotesize{}25} & {\footnotesize{}30} & {\footnotesize{}35} & {\footnotesize{}40} & {\footnotesize{}45} & {\footnotesize{}50} & {\footnotesize{}60} & {\footnotesize{}75} & {\footnotesize{}100}\tabularnewline
\hline 
\emph{\footnotesize{}Number of violated orders} & {\footnotesize{}5} & {\footnotesize{}6} & {\footnotesize{}2} & {\footnotesize{}4} & {\footnotesize{}4} & {\footnotesize{}4} & {\footnotesize{}4} & {\footnotesize{}3} & {\footnotesize{}2}\tabularnewline
\hline 
\end{tabular}
\par\end{centering}{\footnotesize \par}
\caption{Out-of-sample performance}
\end{table}

\subsubsection{Experiment IV: Runtime}

In our fourth experiment, we study how the solution time of Problem
(\ref{DHS}) depends on the size of the elicited comparison data set.
We consider preference elicitation with $I\in\{1,\,2,\,4,\,6,\,8,\,10\}$
pairs of prospects. For each $I$, we generate the appropriate elicited
comparison data set following the same procedure as in Experiment
I, and then we solve the corresponding instance of Problem (\ref{DHS}).
The experiments were performed on a generic laptop with Intel Core
i7 processor, 8GM RAM, on a 64-bit Windows 8 operating system via
Matlab R2015a and CPLEX Studio 12.5. Table 10 and Figure 2 show that
the runtime of Problem (\ref{DHS}) grows gracefully as a function
of the size of the elicited comparison data set.

\begin{table}
\begin{centering}
{\footnotesize{}}%
\begin{tabular}{ccccccc}
\hline 
{\footnotesize{}Number of pairs} & {\footnotesize{}1} & {\footnotesize{}2} & {\footnotesize{}4} & {\footnotesize{}6} & {\footnotesize{}8} & {\footnotesize{}10}\tabularnewline
\hline 
{\footnotesize{}Time (second)} & \textcolor{black}{\footnotesize{}117.743} & \textcolor{black}{\footnotesize{}148.583 } & \textcolor{black}{\footnotesize{}235.051 } & \textcolor{black}{\footnotesize{}460.283} & \textcolor{black}{\footnotesize{}1665.427 } & \textcolor{black}{\footnotesize{}6913.815 }\tabularnewline
\hline 
\end{tabular}
\par\end{centering}{\footnotesize \par}
\caption{Computation time}
\end{table}

\begin{figure}
\begin{centering}
\includegraphics[scale=0.25]{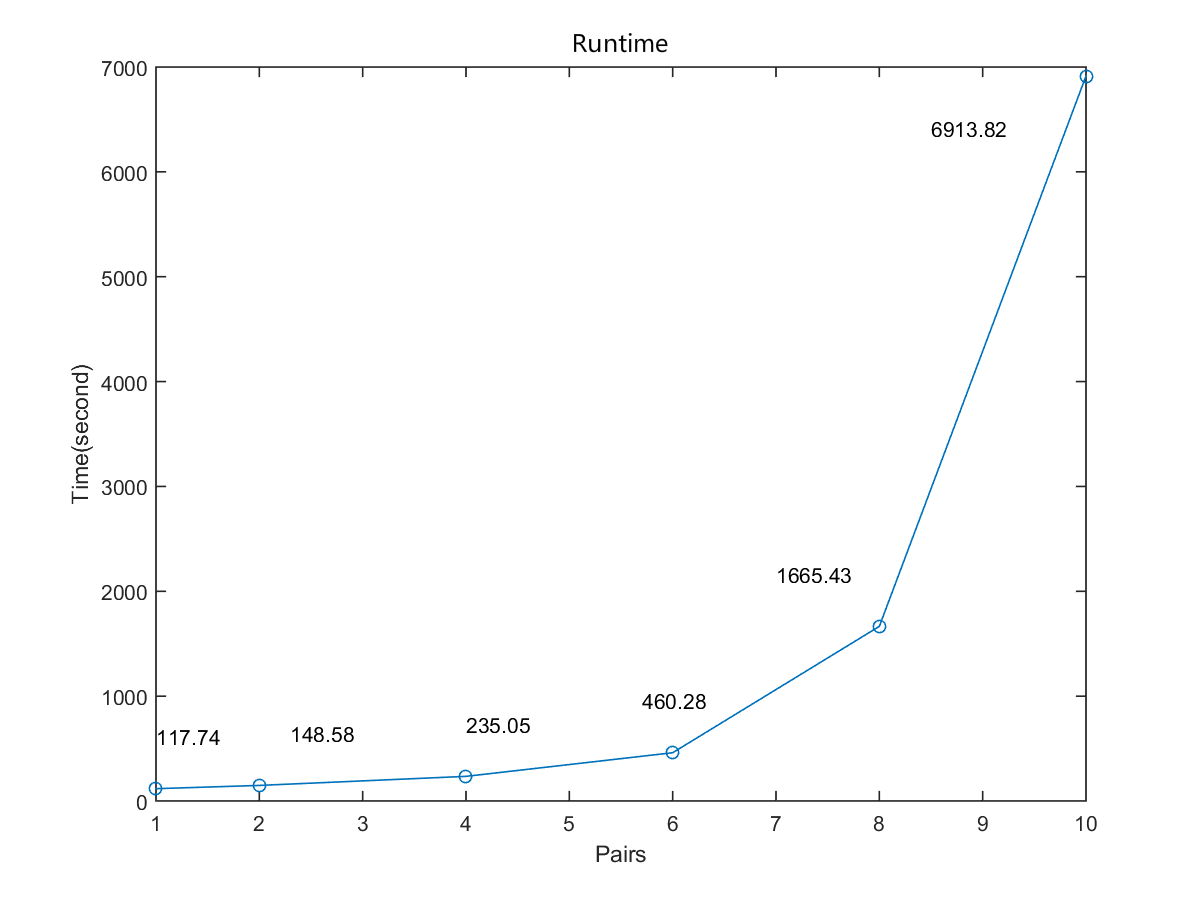}
\par\end{centering}
\caption{Runtime}
\end{figure}

\section{Conclusion}

This paper attempts to take on the simultaneous challenges of multi-attribute
prospects and non-convexity in robust utility/risk choice models.
It advances the existing research of preference robust optimization
with a more general mathematical model which complements the existing
convex optimization models, tractable computational schemes, and structural
insights for modern decision-making problems.

As our first main result, we show how to optimize our robust choice
model using support functions. We give a tractable procedure based
on solving a sequence of MILPs. As our second main result, we give
a common representation for multi-attribute quasi-concave choice functions.
This representation result shows that all multi-attribute quasi-concave
choice functions can be expressed in terms of a family of convex risk
functions and satiation levels (a.k.a. targets). This result is analogous
to the main result in \cite{brown2012aspirational}. We develop a
case study in homeland security which shows that our model is both:
(i) highly expressive and can incorporate decision maker preference
information; and (ii) provides solutions with more favorable risk
properties compared to other models.

In future research, we will consider further computational issues
of our model, in particular application to general probability spaces
(where the sample space is not necessarily finite). In parallel, we
will explore the connection between our present work and multi-attribute
stochastic dominance models.
\begin{acknowledgement*}
This research is supported by MOE Tier II grant MOE2014-T2-1-066 and
MOE Tier I grant WBS R-266-000-083-133. The authors are grateful to
Erick Delage, Jonathan Yu-Meng Li, and Melvyn Sim for valuable comments
and suggestions during the preparation of this paper.
\end{acknowledgement*}
 \bibliographystyle{plain}
\bibliography{References}

\appendix

\section{Basic results}

The following proposition is used to show that several of the risk
measures constructed in this paper are convex. 
\begin{prop}
\label{prop:Basic_risk} Let $\rho\in\mathcal{R}_{iqv}$, and for
$d\geq0$, define $\mu_{k}\text{ : }\mathcal{L}\rightarrow\mathbb{R}$
for all $k\in\mathbb{R}$ by 
\[
\mu_{k}\left(X\right)\triangleq\inf\left\{ \alpha\in\mathbb{R}\text{ : }\rho\left(X+\alpha\,d\right)\geq k\right\} ,\,\forall X\in\mathcal{L}.
\]
If $\rho\left(X+\alpha\,d\right)$ is increasing w.r.t. $\alpha$,
then the following hold. 
\begin{itemize}
\item[(i)] For each $X\in\mathcal{L}$, 
\[
\mu_{k}(X)\leq0\Longleftrightarrow\rho(X)\leq0;
\]
\item[(ii)] $\mu_{k}(\cdot)$ is monotone and convex over $\mathcal{L}$ and
has closed acceptance sets; 
\item[(iii)] $\mu_{k}(\cdot)$ is monotonically increasing in $k$. 
\end{itemize}
\end{prop}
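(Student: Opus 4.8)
The plan is to dispatch the three assertions in order, proving part~(i) straight from the definition of $\mu_k$ and then leveraging it for parts~(ii) and~(iii).

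For part~(i), I would work with the defining set $A_k(X)\triangleq\{\alpha\in\mathbb{R}:\rho(X+\alpha\,d)\geq k\}$. The hypothesis that $\alpha\mapsto\rho(X+\alpha\,d)$ is increasing shows $A_k(X)$ is an upper ray, and upper semicontinuity of $\rho$ shows it is closed; hence, when $\mu_k(X)$ is finite, the infimum is attained and $\rho(X+\mu_k(X)\,d)\geq k$. The equivalence then reduces to the simple observation that $\mu_k(X)\leq0$ holds exactly when $0\in A_k(X)$, that is, exactly when $\rho(X)=\rho(X+0\cdot d)\geq k$. This yields $\mu_k(X)\leq0\Leftrightarrow\rho(X)\geq k$, identifying the acceptance set $\mathcal{A}_{\mu_k}$ with the $k$-upper level set $\{X:\rho(X)\geq k\}$ of $\rho$; this is exactly the acceptance-set characterization invoked in the proof of Theorem~\ref{thm:representation-1}, so the right-hand side of the displayed equivalence should read $\rho(X)\geq k$.

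For part~(ii), I expect monotonicity and convexity to be inherited directly from $\rho$. For monotonicity, if $X\leq Y$ then $X+\alpha\,d\leq Y+\alpha\,d$, so $\rho(X+\alpha\,d)\leq\rho(Y+\alpha\,d)$ by monotonicity of $\rho$; this gives $A_k(X)\subseteq A_k(Y)$ and hence $\mu_k(Y)\leq\mu_k(X)$, which is precisely the reward-monotonicity of Definition~\ref{def:risk}. For convexity I would set $\alpha_X=\mu_k(X)$ and $\alpha_Y=\mu_k(Y)$ (attained in the finite case by upper semicontinuity), form the combination $\lambda(X+\alpha_X d)+(1-\lambda)(Y+\alpha_Y d)$, and apply quasi-concavity of $\rho$ to obtain $\rho\bigl(\lambda X+(1-\lambda)Y+(\lambda\alpha_X+(1-\lambda)\alpha_Y)\,d\bigr)\geq\min\{\rho(X+\alpha_X d),\,\rho(Y+\alpha_Y d)\}\geq k$; this certifies that $\lambda\alpha_X+(1-\lambda)\alpha_Y$ is feasible, so $\mu_k(\lambda X+(1-\lambda)Y)\leq\lambda\mu_k(X)+(1-\lambda)\mu_k(Y)$. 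Closedness of the acceptance set is then immediate from part~(i), since upper level sets of the upper semicontinuous $\rho$ are closed.

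For part~(iii), $k_1\leq k_2$ forces $A_{k_2}(X)\subseteq A_{k_1}(X)$, so passing to infima gives $\mu_{k_1}(X)\leq\mu_{k_2}(X)$, i.e.\ $k\mapsto\mu_k(X)$ is nondecreasing. The hard part will be the careful treatment of the infimum in part~(i): upper semicontinuity is exactly what makes $A_k(X)$ closed, so that the weak inequality $\mu_k(X)\leq0$ matches $\rho(X)\geq k$ without a boundary discrepancy, and I must separately dispose of the degenerate cases $A_k(X)=\emptyset$ (where $\mu_k(X)=+\infty$ and $\rho(X)<k$) and $A_k(X)=\mathbb{R}$ (where $\mu_k(X)=-\infty$) so that both the equivalence and the convexity estimate remain valid in the extended reals.
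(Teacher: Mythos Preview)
Your proposal is correct and follows essentially the same strategy as the paper: both arguments work with the feasible set $A_k(X)=\{\alpha:\rho(X+\alpha d)\geq k\}$, derive monotonicity from the inclusion $A_k(X)\subseteq A_k(Y)$ when $X\leq Y$, derive convexity from quasi-concavity of $\rho$ applied to the feasible $\alpha$'s, and read off part~(iii) from the inclusion $A_{k_2}(X)\subseteq A_{k_1}(X)$. The only cosmetic difference is that the paper handles the boundary via $\epsilon$-perturbations (taking $\alpha=\mu_k(X)+\epsilon$ and letting $\epsilon\downarrow 0$), whereas you invoke upper semicontinuity up front to conclude $A_k(X)$ is closed and the infimum is attained; these are equivalent packagings of the same limiting step. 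You are also right that the displayed equivalence in part~(i) should read $\rho(X)\geq k$, as the paper's own proof makes clear.
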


\begin{proof}
Part (i). Let 
\[
\mathcal{F}(X,\,k)\triangleq\left\{ \alpha\in\mathbb{R}\text{ : }\rho\left(X+\alpha\,d\right)\geq k\right\} .
\]
If $\rho(X)\leq k$, then $0\in\mathcal{F}(X,\,k)$ and consequently
$\mu_{k}(X)\leq0$. Conversely, if $\mu_{k}(X)\leq0$, then for any
positive number $\epsilon$, the increasing property of $\rho\left(X+\alpha\,d\right)$
in $\alpha$ implies $0+\epsilon\in\mathcal{F}(X,\,k)$, that is,
\[
\rho\left(X+\epsilon\,d\right)\geq k.
\]
Driving $\epsilon$ to $0$, we deduce $\rho\left(X\right)\geq k$.

Part (ii).The closedness of the acceptance sets follows from Part
(i) and closedness of the upper level set of $\rho(\cdot)$. So we
are left to prove monotonicity and convexity.

For any $X,\,Y\in\mathcal{L}$ with $X\leq Y$, since $\rho(\cdot)$
is non-decreasing, 
\[
\rho\left(X+\alpha\,d\right)\leq\rho\left(Y+\alpha\,d\right),\forall\alpha\in\mathbb{R},
\]
which implies $\mathcal{F}(X,\,k)\subset\mathcal{F}(Y,\,k)$ and subsequently
$\mu_{k}\left(X\right)\geq\mu_{k}\left(Y\right)$. Moreover, for $\lambda\in\left[0,\,1\right]$,
and arbitrarily small positive number $\epsilon>0$, it follows by
the definition of $\mu_{k}$ 
\[
\rho\left(X+\left(\mu_{k}\left(X\right)+\epsilon\right)d\right)\geq k\text{ and }\rho\left(Y+\left(\mu_{k}\left(Y\right)+\epsilon\right)d\right)\geq k.
\]
Let $a_{\lambda}=\lambda\,\mu_{k}\left(X\right)+\left(1-\lambda\right)\mu_{k}\left(Y\right)$.
Then 
\begin{align*}
\rho\left(\lambda\,X+\left(1-\lambda\right)Y+\left(a_{\lambda}+\epsilon\right)d\right)=\, & \rho\left(\lambda\left(X+\left(\mu_{k}\left(X\right)+\epsilon\right)d\right)+\left(1-\lambda\right)\left(Y+\left(\mu_{k}\left(Y\right)+\epsilon\right)d\right)\right)\\
\geq\, & \min\left\{ \rho\left(X+\left(\mu_{k}\left(X\right)+\epsilon\right)d\right),\,\rho\left(Y+\left(\mu_{k}\left(Y\right)+\epsilon\right)d\right)\right\} ,\\
\geq\, & k,
\end{align*}
where the first inequality follows from quasi-concavity of $\rho\in\mathcal{R}_{iqv}$,
and the second inequality follows from the definition of $\mu_{k}\left(X\right)$
and $\mu_{k}\left(Y\right)$. The above inequality then gives rise
to 
\begin{align*}
\mu_{k}\left(\lambda\,X+\left(1-\lambda\right)Y\right)=\, & \inf\left\{ a\in\mathbb{R}\text{ : }\rho\left(\lambda\,X+\left(1-\lambda\right)Y+a\,d\right)\geq k\right\} \\
\leq\, & a_{\lambda}\\
=\, & \lambda\,\mu_{k}\left(X\right)+\left(1-\lambda\right)\mu_{k}\left(Y\right),
\end{align*}
and the desired conclusion follows.

Part (iii). For any $X\in\mathcal{L}$, we must have $\mu_{k}\left(X\right)\leq\mu_{k'}\left(X\right)$
for $k\leq k'$ since $\rho\left(X+\left(\mu_{k'}\left(X\right)+\epsilon\right)d\right)\geq k'\geq k$
for all $\epsilon>0$. 
\end{proof}
The next proposition is used to show that several of the choice functions
constructed from risk measures in this paper belong to $\mathcal{R}_{iqv}$. 
\begin{prop}
\label{prop:Basic_choice} Let $\left\{ \mu_{k}\right\} _{k\in\mathbb{R}}$
be a class of convex risk measures. Assume that:

(i) all $\left\{ \mu_{k}\right\} _{k\in\mathbb{R}}$ have closed acceptance
sets;

(ii) for any $X\in\mathcal{L}$, $\mu_{k}\left(X\right)$ as a function
of $k$ is non-decreasing and left continuous;

(iii) for any $X\in\mathcal{L}$, there is $k\in\mathbb{R}$ such
that $\mu_{k}\left(X\right)\leq0$.\\
 Define $\vartheta\text{ : }\mathcal{L}\rightarrow\bar{\mathbb{R}}$
by 
\begin{equation}
\vartheta\left(X\right)\triangleq\sup\left\{ k\in\mathbb{R}\text{ : }\mu_{k}\left(X\right)\leq0\right\} ,\,\forall X\in\mathcal{L}.\label{eq:quasiconcavity}
\end{equation}
Then $\vartheta$ is upper semi-continuous, monotonic, and quasi-concave. 
\end{prop}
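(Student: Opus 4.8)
The plan is to reduce all three claims to a single structural identity about the super-level sets of $\vartheta$. First I would fix $X\in\mathcal{L}$ and study the set $S(X)\triangleq\{k\in\mathbb{R}:\mu_{k}(X)\le0\}$, whose supremum is by definition $\vartheta(X)$. Hypothesis (ii) (monotonicity of $k\mapsto\mu_{k}(X)$) shows $S(X)$ is downward closed: if $k\in S(X)$ and $k'\le k$ then $\mu_{k'}(X)\le\mu_{k}(X)\le0$. Hypothesis (iii) shows $S(X)\ne\emptyset$, so $\vartheta(X)>-\infty$. Left continuity of $k\mapsto\mu_{k}(X)$ (also part of (ii)) shows the supremum is attained whenever it is finite: if $a:=\vartheta(X)<\infty$ then $\mu_{a}(X)=\lim_{k\uparrow a}\mu_{k}(X)\le0$, so $S(X)=(-\infty,a]$, while $S(X)=\mathbb{R}$ when $\vartheta(X)=+\infty$. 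From this I would read off the key identity
\begin{equation}
\left\{ X\in\mathcal{L}:\vartheta(X)\ge t\right\} =\left\{ X\in\mathcal{L}:\mu_{t}(X)\le0\right\} =\mathcal{A}_{\mu_{t}},\qquad\forall\,t\in\mathbb{R}.\label{eq:plan-key}
\end{equation}
Indeed $\vartheta(X)\ge t\Leftrightarrow t\in S(X)$: the reverse implication is trivial, and for the forward one, $\sup S(X)>t$ already produces some $k>t$ in $S(X)$, whence $\mu_{t}(X)\le\mu_{k}(X)\le0$ by (ii), while $\sup S(X)=t$ gives $t\in S(X)$ by the attainment just established.

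With (\ref{eq:plan-key}) in hand the three properties follow quickly. For \emph{upper semi-continuity}: each $\mathcal{A}_{\mu_{t}}$ is closed by hypothesis (i), so every super-level set of $\vartheta$ is closed, which is exactly upper semi-continuity. For \emph{monotonicity}: if $X\le Y$, then monotonicity of the risk measure $\mu_{k}$ gives $\mu_{k}(Y)\le\mu_{k}(X)$ for every $k$, so $S(X)\subseteq S(Y)$ and $\vartheta(X)\le\vartheta(Y)$. For \emph{quasi-concavity}: given $X,Y\in\mathcal{L}$, $\lambda\in[0,1]$ and $m:=\min\{\vartheta(X),\vartheta(Y)\}$, for every $k<m$ downward-closedness gives $\mu_{k}(X)\le0$ and $\mu_{k}(Y)\le0$, and convexity of $\mu_{k}$ then yields $\mu_{k}(\lambda X+(1-\lambda)Y)\le\lambda\mu_{k}(X)+(1-\lambda)\mu_{k}(Y)\le0$, i.e. $k\le\vartheta(\lambda X+(1-\lambda)Y)$; letting $k\uparrow m$ gives $\vartheta(\lambda X+(1-\lambda)Y)\ge m$.

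The hard part will be getting the identity (\ref{eq:plan-key}) exactly right: left continuity in (ii) is indispensable, since without it the supremum defining $\vartheta(X)$ need not be attained and one would only obtain $\{\vartheta\ge t\}=\bigcap_{s<t}\mathcal{A}_{\mu_{s}}$ or $\{\vartheta>t\}=\bigcup_{s>t}\mathcal{A}_{\mu_{s}}$, neither of which is manifestly closed. The remaining subtlety is bookkeeping with the extended-real values: I would note that (iii) rules out $\vartheta(X)=-\infty$, and that the value $+\infty$ causes no trouble because every step above tests only real $k$ (in particular only $k<m$ in the quasi-concavity argument). Everything else is routine, so I would keep the write-up short, essentially isolating (\ref{eq:plan-key}) and then deducing the three properties in one line each.
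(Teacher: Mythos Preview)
Your proposal is correct and follows essentially the same route as the paper: both arguments hinge on the identification of the super-level sets $\{X:\vartheta(X)\ge t\}$ with the acceptance sets $\mathcal{A}_{\mu_t}$, then read off closedness, monotonicity, and convexity of these sets from the corresponding properties of the $\mu_t$. The only cosmetic difference is that you isolate the identity \eqref{eq:plan-key} up front and are explicit about where left continuity enters (attainment of the supremum), whereas the paper invokes the implication $\vartheta(X_i)\ge k\Rightarrow\mu_k(X_i)\le0$ inline in the upper semi-continuity step and appeals to left continuity only in the quasi-concavity step; your organization is arguably cleaner but the content is the same.
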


\begin{proof}
\textit{Upper semi-continuity:} We first show that $\vartheta$ is
upper semicontinuous, or equivalently that\\
 $\left\{ X\in\mathcal{L}\mbox{ : }\vartheta\left(X\right)\geq k\right\} $
is closed for all $k\in\mathbb{R}$. Let $\left\{ X_{i}\right\} _{i\geq0}\subset\left\{ X\in\mathcal{L}\mbox{ : }\vartheta\left(X\right)\geq k\right\} $
and suppose $\lim_{i\rightarrow\infty}X_{i}\rightarrow X$. Since
$\vartheta\left(X_{i}\right)\geq k$, it follows that $\mu_{k}\left(X_{i}\right)\leq0$
for all $i\geq0$ and thus the sequence $\left\{ X_{i}\right\} _{i\geq0}$
belongs to the acceptance set $\mathcal{A}_{\mu_{k}}\triangleq\left\{ X\in\mathcal{L}\text{ : }\mu_{k}\left(X\right)\leq0\right\} $.
Since $\mathcal{A}_{\mu_{k}}$ is closed we have $X\in\mathcal{A}_{\mu_{k}}$,
and so $\vartheta\left(X\right)\geq k$ and thus $X\in\left\{ X\in\mathcal{L}\mbox{ : }\vartheta\left(X\right)\geq k\right\} $.

\textit{Monotonicity:} Choose $X,\,Y\in\mathcal{L}$ with $X\leq Y$.
Each $\mu_{k}$ is monotonic, so if $\mu_{k}\left(X\right)\leq0$
then $\mu_{k}\left(Y\right)\leq0$ also holds, and thus $\vartheta\left(X\right)\leq\vartheta\left(Y\right)$.

\textit{Quasi-concavity:} Choose $X,\,Y\in\mathcal{L}$ and $\lambda\in\left[0,\,1\right]$.
Then we have 
\begin{align*}
\vartheta\left(\lambda\,X+\left(1-\lambda\right)Y\right)=\, & \sup\left\{ k\in\mathbb{R}\mbox{ : }\mu_{k}\left(\lambda\,X+\left(1-\lambda\right)Y\right)\leq0\right\} \\
\geq\, & \sup\left\{ k\in\mathbb{R}\mbox{ : }\lambda\,\mu_{k}\left(X\right)+\left(1-\lambda\right)\mu_{k}\left(Y\right)\leq0\right\} \\
\geq\, & \min\left\{ \vartheta\left(X\right),\,\vartheta\left(Y\right)\right\} ,
\end{align*}
where the first inequality uses convexity of $\left\{ \mu_{k}\right\} _{k\in\mathbb{R}}$,
and the second inequality uses the fact that $\lambda\,\mu_{k}\left(X\right)+\left(1-\lambda\right)\mu_{k}\left(Y\right)\geq0$
for $k^{*}=\min\left\{ \vartheta\left(X\right),\,\vartheta\left(Y\right)\right\} $
since both $\mu_{k}\left(X\right)\geq0$ and $\mu_{k}\left(Y\right)\geq0$
hold for $k^{*}$ by left continuity of $\mu_{k}\left(X\right)$ in
$k$ for all $X\in\mathcal{L}$. 
\end{proof}
The next proposition is used to establish our level set representation. 
\begin{prop}
\label{prop:Basic_level} Suppose 
\[
\left\{ X\in\mathcal{L}\text{ : }\rho\left(X\right)\geq k\right\} =\left\{ X\in\mathcal{L}\text{ : }\mu_{k}\left(X\right)\leq0\right\} ,\,\forall k\in\mathbb{R},
\]
then 
\[
\rho\left(X\right)=\sup\left\{ k\in\mathbb{R}\text{ : }\mu_{k}\left(X\right)\leq0\right\} ,\,\forall X\in\mathcal{L}.
\]
\end{prop}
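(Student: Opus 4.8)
The plan is to fix an arbitrary $X \in \mathcal{L}$ and reduce the claimed identity to a triviality about suprema of half-lines. First I would rewrite the set appearing inside the supremum on the right-hand side. By the hypothesized coincidence of level sets, for each fixed $k \in \mathbb{R}$ the point $X$ lies in $\{Z \in \mathcal{L} : \mu_k(Z) \leq 0\}$ precisely when it lies in $\{Z \in \mathcal{L} : \rho(Z) \geq k\}$, so $\mu_k(X) \leq 0$ holds if and only if $\rho(X) \geq k$. Consequently,
\[
\{k \in \mathbb{R} : \mu_k(X) \leq 0\} = \{k \in \mathbb{R} : k \leq \rho(X)\}.
\]

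Next I would take the supremum of both sides. Since $\rho$ is real-valued, $\rho(X) \in \mathbb{R}$, so the set on the right is the half-line $(-\infty, \rho(X)]$; it is nonempty (it contains $\rho(X)$ itself) and its supremum is exactly $\rho(X)$. Therefore $\sup\{k \in \mathbb{R} : \mu_k(X) \leq 0\} = \rho(X)$, and since $X$ was arbitrary this is the desired conclusion.

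The argument involves no real obstacle; the single subtlety worth flagging is that the supremum is \emph{attained} rather than merely approached, which is exactly what the non-strict inequalities in the hypothesis ($\rho(X) \geq k$ and $\mu_k(X) \leq 0$) buy us — had the level sets been described via strict inequalities one would instead need a limiting argument to recover $\rho(X)$. If one wishes to allow $\rho$ to take the value $+\infty$, the same computation shows both sides equal $+\infty$, and if $\rho(X) = -\infty$ were permitted the defining set would be empty and the convention $\sup \emptyset = -\infty$ restores the equality.
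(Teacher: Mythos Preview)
Your proof is correct and follows essentially the same approach as the paper: both arguments exploit the biconditional $\mu_k(X)\leq 0 \Longleftrightarrow \rho(X)\geq k$ that the hypothesis yields for each fixed $X$. The paper establishes the two inequalities $\sup\{\cdot\}\leq\rho(X)$ and $\sup\{\cdot\}\geq\rho(X)$ separately, whereas you package the same content as the single observation that the set in question equals the half-line $(-\infty,\rho(X)]$; these are the same argument in slightly different dress.
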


\begin{proof}
By construction of $\left\{ \mu_{k}\right\} _{k\in\mathbb{R}}$, we
have: (i) if $\rho\left(X\right)\geq k$ then $\mu_{k}\left(X\right)\leq0$;
(ii) if $\mu_{k}\left(X\right)>0$ then $\rho\left(X\right)<k$. We
have $\mu_{k}\left(X\right)\leq0$ implies $\rho\left(X\right)\geq k$
for all $k$, so $\sup\left\{ k\in\mathbb{R}\text{ : }\mu_{k}\left(X\right)\leq0\right\} \leq\rho\left(X\right)$.
Conversely, we have $\mu_{\rho\left(X\right)}\left(X\right)\leq0$
and so $\sup\left\{ k\in\mathbb{R}\text{ : }\mu_{k}\left(X\right)\leq0\right\} \geq\rho\left(X\right)$. 
\end{proof}

\section{Elicited comparison data set}

We show the elicited comparison data set mentioned in Section 7.2
in the following tables. Losses are recorded here as negative values. 
\begin{center}
{\footnotesize{}{}}%
\begin{tabular}{cccc}
\hline 
{\footnotesize{}{}Attribute }\emph{\footnotesize{}{}} & \multicolumn{3}{c}{{\footnotesize{}{}Loss }\emph{\footnotesize{}{}(\$ million)}}\tabularnewline
\hline 
 & {\footnotesize{}{}Scenario I}  & {\footnotesize{}{}Scenario II}  & {\footnotesize{}{}Scenario III}\tabularnewline
\hline 
\emph{\footnotesize{}{}Property losses}{\footnotesize{} } & {\footnotesize{}{}-185}  & {\footnotesize{}{}-368}  & {\footnotesize{}{}-42}\tabularnewline
\emph{\footnotesize{}{}Fatalities}{\footnotesize{} } & {\footnotesize{}{}-94}  & {\footnotesize{}{}-903}  & {\footnotesize{}{}-35}\tabularnewline
\emph{\footnotesize{}{}Air Departures}{\footnotesize{} } & {\footnotesize{}{}-873}  & {\footnotesize{}{}-722}  & {\footnotesize{}{}-843}\tabularnewline
\emph{\footnotesize{}{}Average daily bridge traffic}{\footnotesize{} } & {\footnotesize{}{}-86}  & {\footnotesize{}{}-453}  & {\footnotesize{}{}-29}\tabularnewline
\hline 
\end{tabular}{\footnotesize{}{} }%
\begin{tabular}{cccc}
\hline 
{\footnotesize{}{}Attribute }\emph{\footnotesize{}{}} & \multicolumn{3}{c}{{\footnotesize{}{}Loss }\emph{\footnotesize{}{}(\$ million)}}\tabularnewline
\hline 
 & {\footnotesize{}{}Scenario I}  & {\footnotesize{}{}Scenario II}  & {\footnotesize{}{}Scenario III}\tabularnewline
\hline 
\emph{\footnotesize{}{}Property losses}{\footnotesize{} } & {\footnotesize{}{}-917}  & {\footnotesize{}{}-529}  & {\footnotesize{}{}-331}\tabularnewline
\emph{\footnotesize{}{}Fatalities}{\footnotesize{} } & {\footnotesize{}{}-25}  & {\footnotesize{}{}-439}  & {\footnotesize{}{}-794}\tabularnewline
\emph{\footnotesize{}{}Air Departures}{\footnotesize{} } & {\footnotesize{}{}-348}  & {\footnotesize{}{}-731}  & {\footnotesize{}{}-346}\tabularnewline
\emph{\footnotesize{}{}Average daily bridge traffic}{\footnotesize{} } & {\footnotesize{}{}-769}  & {\footnotesize{}{}-251}  & {\footnotesize{}{}-928}\tabularnewline
\hline 
\end{tabular}{\footnotesize{}{}}\\
 
\par\end{center}

\begin{center}
{\footnotesize{}{}}%
\begin{tabular}{cccc}
\hline 
{\footnotesize{}{}Attribute }\emph{\footnotesize{}{}} & \multicolumn{3}{c}{{\footnotesize{}{}Loss }\emph{\footnotesize{}{}(\$ million)}}\tabularnewline
\hline 
 & {\footnotesize{}{}Scenario I}  & {\footnotesize{}{}Scenario II}  & {\footnotesize{}{}Scenario III}\tabularnewline
\hline 
\emph{\footnotesize{}{}Property losses}{\footnotesize{} } & {\footnotesize{}{}-597}  & {\footnotesize{}{}-496}  & {\footnotesize{}{}-593}\tabularnewline
\emph{\footnotesize{}{}Fatalities}{\footnotesize{} } & {\footnotesize{}{}-878}  & {\footnotesize{}{}-353}  & {\footnotesize{}{}-333}\tabularnewline
\emph{\footnotesize{}{}Air Departures}{\footnotesize{} } & {\footnotesize{}{}-732}  & {\footnotesize{}{}-692}  & {\footnotesize{}{}-66}\tabularnewline
\emph{\footnotesize{}{}Average daily bridge traffic}{\footnotesize{} } & {\footnotesize{}{}-742}  & {\footnotesize{}{}-862}  & {\footnotesize{}{}-189}\tabularnewline
\hline 
\end{tabular}{\footnotesize{}{} }%
\begin{tabular}{cccc}
\hline 
{\footnotesize{}{}Attribute }\emph{\footnotesize{}{}} & \multicolumn{3}{c}{{\footnotesize{}{}Loss }\emph{\footnotesize{}{}(\$ million)}}\tabularnewline
\hline 
 & {\footnotesize{}{}Scenario I}  & {\footnotesize{}{}Scenario II}  & {\footnotesize{}{}Scenario III}\tabularnewline
\hline 
\emph{\footnotesize{}{}Property losses}{\footnotesize{} } & {\footnotesize{}{}-669}  & {\footnotesize{}{}-524}  & {\footnotesize{}{}-515}\tabularnewline
\emph{\footnotesize{}{}Fatalities}{\footnotesize{} } & {\footnotesize{}{}-848}  & {\footnotesize{}{}-638}  & {\footnotesize{}{}-243}\tabularnewline
\emph{\footnotesize{}{}Air Departures}{\footnotesize{} } & {\footnotesize{}{}-652}  & {\footnotesize{}{}-212}  & {\footnotesize{}{}-583}\tabularnewline
\emph{\footnotesize{}{}Average daily bridge traffic}{\footnotesize{} } & {\footnotesize{}{}-879}  & {\footnotesize{}{}-219}  & {\footnotesize{}{}-28}\tabularnewline
\hline 
\end{tabular}{\footnotesize{}{}}\\
 
\par\end{center}

\begin{center}
{\footnotesize{}{}}%
\begin{tabular}{cccc}
\hline 
{\footnotesize{}{}Attribute }\emph{\footnotesize{}{}} & \multicolumn{3}{c}{{\footnotesize{}{}Loss }\emph{\footnotesize{}{}(\$ million)}}\tabularnewline
\hline 
 & {\footnotesize{}{}Scenario I}  & {\footnotesize{}{}Scenario II}  & {\footnotesize{}{}Scenario III}\tabularnewline
\hline 
\emph{\footnotesize{}{}Property losses}{\footnotesize{} } & {\footnotesize{}{}-115}  & {\footnotesize{}{}-331}  & {\footnotesize{}{}-12}\tabularnewline
\emph{\footnotesize{}{}Fatalities}{\footnotesize{} } & {\footnotesize{}{}-906}  & {\footnotesize{}{}-867}  & {\footnotesize{}{}-135}\tabularnewline
\emph{\footnotesize{}{}Air Departures}{\footnotesize{} } & {\footnotesize{}{}-70}  & {\footnotesize{}{}-979}  & {\footnotesize{}{}-611}\tabularnewline
\emph{\footnotesize{}{}Average daily bridge traffic}{\footnotesize{} } & {\footnotesize{}{}-601}  & {\footnotesize{}{}-440}  & {\footnotesize{}{}-545}\tabularnewline
\hline 
\end{tabular}{\footnotesize{}{} }%
\begin{tabular}{cccc}
\hline 
{\footnotesize{}{}Attribute }\emph{\footnotesize{}{}} & \multicolumn{3}{c}{{\footnotesize{}{}Loss }\emph{\footnotesize{}{}(\$ million)}}\tabularnewline
\hline 
 & {\footnotesize{}{}Scenario I}  & {\footnotesize{}{}Scenario II}  & {\footnotesize{}{}Scenario III}\tabularnewline
\hline 
\emph{\footnotesize{}{}Property losses}{\footnotesize{} } & {\footnotesize{}{}-953}  & {\footnotesize{}{}-699}  & {\footnotesize{}{}-754}\tabularnewline
\emph{\footnotesize{}{}Fatalities}{\footnotesize{} } & {\footnotesize{}{}-658}  & {\footnotesize{}{}-60}  & {\footnotesize{}{}-215}\tabularnewline
\emph{\footnotesize{}{}Air Departures}{\footnotesize{} } & {\footnotesize{}{}-264}  & {\footnotesize{}{}-19}  & {\footnotesize{}{}-117}\tabularnewline
\emph{\footnotesize{}{}Average daily bridge traffic}{\footnotesize{} } & {\footnotesize{}{}-205}  & {\footnotesize{}{}-714}  & {\footnotesize{}{}-86}\tabularnewline
\hline 
\end{tabular}{\footnotesize{}{}}\\
 
\par\end{center}

\begin{center}
{\footnotesize{}{}}%
\begin{tabular}{cccc}
\hline 
{\footnotesize{}{}Attribute }\emph{\footnotesize{}{}} & \multicolumn{3}{c}{{\footnotesize{}{}Loss }\emph{\footnotesize{}{}(\$ million)}}\tabularnewline
\hline 
 & {\footnotesize{}{}Scenario I}  & {\footnotesize{}{}Scenario II}  & {\footnotesize{}{}Scenario III}\tabularnewline
\hline 
\emph{\footnotesize{}{}Property losses}{\footnotesize{} } & {\footnotesize{}{}-455}  & {\footnotesize{}{}-199}  & {\footnotesize{}{}-442}\tabularnewline
\emph{\footnotesize{}{}Fatalities}{\footnotesize{} } & {\footnotesize{}{}-314}  & {\footnotesize{}{}-103}  & {\footnotesize{}{}-401}\tabularnewline
\emph{\footnotesize{}{}Air Departures}{\footnotesize{} } & {\footnotesize{}{}-106}  & {\footnotesize{}{}-402}  & {\footnotesize{}{}-851}\tabularnewline
\emph{\footnotesize{}{}Average daily bridge traffic}{\footnotesize{} } & {\footnotesize{}{}-946}  & {\footnotesize{}{}-116}  & {\footnotesize{}{}-100}\tabularnewline
\hline 
\end{tabular}{\footnotesize{}{} }%
\begin{tabular}{cccc}
\hline 
{\footnotesize{}{}Attribute }\emph{\footnotesize{}{}}  & \multicolumn{3}{c}{{\footnotesize{}{}Loss }\emph{\footnotesize{}{}(\$ million)}}\tabularnewline
\hline 
 & {\footnotesize{}{}Scenario I}  & {\footnotesize{}{}Scenario II}  & {\footnotesize{}{}Scenario III}\tabularnewline
\hline 
\emph{\footnotesize{}{}Property losses}{\footnotesize{} } & {\footnotesize{}{}-697}  & {\footnotesize{}{}-56}  & {\footnotesize{}{}-550}\tabularnewline
\emph{\footnotesize{}{}Fatalities}{\footnotesize{} } & {\footnotesize{}{}-954}  & {\footnotesize{}{}-451}  & {\footnotesize{}{}-795}\tabularnewline
\emph{\footnotesize{}{}Air Departures}{\footnotesize{} } & {\footnotesize{}{}-805}  & {\footnotesize{}{}-271}  & {\footnotesize{}{}-100}\tabularnewline
\emph{\footnotesize{}{}Average daily bridge traffic}{\footnotesize{} } & {\footnotesize{}{}-280}  & {\footnotesize{}{}-423}  & {\footnotesize{}{}-237}\tabularnewline
\hline 
\end{tabular}{\footnotesize{}{} }\\
 
\par\end{center}

\begin{center}
{\footnotesize{}{}}%
\begin{tabular}{cccc}
\hline 
{\footnotesize{}{}Attribute }\emph{\footnotesize{}{}} & \multicolumn{3}{c}{{\footnotesize{}{}Loss }\emph{\footnotesize{}{}(\$ million)}}\tabularnewline
\hline 
 & {\footnotesize{}{}Scenario I}  & {\footnotesize{}{}Scenario II}  & {\footnotesize{}{}Scenario III}\tabularnewline
\hline 
\emph{\footnotesize{}{}Property losses}{\footnotesize{} } & {\footnotesize{}{}0}  & {\footnotesize{}{}-334}  & {\footnotesize{}{}-100}\tabularnewline
\emph{\footnotesize{}{}Fatalities}{\footnotesize{} } & {\footnotesize{}{}-356}  & {\footnotesize{}{}-352}  & {\footnotesize{}{}-133}\tabularnewline
\emph{\footnotesize{}{}Air Departures}{\footnotesize{} } & {\footnotesize{}{}-293}  & {\footnotesize{}{}-469}  & {\footnotesize{}{}-475}\tabularnewline
\emph{\footnotesize{}{}Average daily bridge traffic}{\footnotesize{} } & {\footnotesize{}{}-268}  & {\footnotesize{}{}-351}  & {\footnotesize{}{}-464}\tabularnewline
\hline 
\end{tabular}{\footnotesize{}{} }%
\begin{tabular}{cccc}
\hline 
{\footnotesize{}{}Attribute }\emph{\footnotesize{}{}} & \multicolumn{3}{c}{{\footnotesize{}{}Loss }\emph{\footnotesize{}{}(\$ million)}}\tabularnewline
\hline 
 & {\footnotesize{}{}Scenario I}  & {\footnotesize{}{}Scenario II}  & {\footnotesize{}{}Scenario III}\tabularnewline
\hline 
\emph{\footnotesize{}{}Property losses}{\footnotesize{} } & {\footnotesize{}{}-244}  & {\footnotesize{}{}-451}  & {\footnotesize{}{}-190}\tabularnewline
\emph{\footnotesize{}{}Fatalities}{\footnotesize{} } & {\footnotesize{}{}-459}  & {\footnotesize{}{}-478}  & {\footnotesize{}{}-320}\tabularnewline
\emph{\footnotesize{}{}Air Departures}{\footnotesize{} } & {\footnotesize{}{}-140}  & {\footnotesize{}{}-221}  & {\footnotesize{}{}-121}\tabularnewline
\emph{\footnotesize{}{}Average daily bridge traffic}{\footnotesize{} } & {\footnotesize{}{}-1}  & {\footnotesize{}{}-113}  & {\footnotesize{}{}-293}\tabularnewline
\hline 
\end{tabular}{\footnotesize{}{} } 
\par\end{center}

\end{document}